\preto{\section}{\vskip 1in}
\setlist[description]{leftmargin=0pt, itemsep=\bigskipamount, topsep=\medskipamount, labelsep*=3pt}
\newcommand{\emptylabel}{\hspace{-3pt}} 
\newcolumntype{E}[1]{>{\raggedleft\arraybackslash}m{#1}@{\extracolsep{0pt}\;=\;}}
\newcolumntype{F}[1]{>{\raggedright\arraybackslash}m{#1}}
\newcolumntype{M}[1]{>{\bfseries}m{#1}}
\theoremstyle{plain}
\newtheorem{Thm}{Theorem}[section]
\newtheorem{Pro}[Thm]{Proposition}
\newtheorem{Cor}[Thm]{Corollary}
\theoremstyle{definition}
\newtheorem{Def}[Thm]{Definition}
\newtheorem{Rem}[Thm]{Remark}
\newtheorem*{Rem-intro}{Remark}
\newcommand{\ep}{\varepsilon}
\DeclareMathOperator{\Hom}{Hom}
\newcommand{\Dirlim}{\varinjlim}
\newcommand{\Invlim}{\varprojlim}
\DeclareMathOperator{\Tr}{Tr}
\DeclareMathOperator{\inde}{index}
\DeclareMathOperator{\rk}{rk}
\DeclareMathOperator{\sgn}{sign}
\newcommand{\ZZ}{{\mathbb{Z}}}
\newcommand{\QQ}{{\mathbb{Q}}} 
\newcommand{\CC}{{\mathbb{C}}}
\newcommand{\NN}{{\mathbb{N}}}
\newcommand{\RR}{{\mathbb{R}}}
\newcommand{\PP}{{\mathbb{P}}}
\newcommand{\WW}{{\mathcal{W}}}
\newcommand{\SSS}{{\mathcal{S}}}
\newcommand{\FF}{{\mathbb{F}}}
\newcommand{\cR}{{\mathcal{R}}}
\newcommand{\cF}{{\mathcal{F}}}
\newcommand{\lp}{\textup{(}}
\newcommand{\rp}{\textup{)}}
\newcommand{\co}{\colon\,}
\newcommand{\hull}{{\Omega _T}}
\newcommand{\Cech}{\v{C}ech} 
\newcommand{\CH}{\mathbf{\check H}}
\newcommand{\Nc}{\mathcal{N}}
\newcommand{\tk}{{\tau^K_*}}
\newcommand{\tc}{{\tau^{\check H}_*}} 
\newcommand{\zone}{{\mathcal{B}}}
\newcommand{\ch}{ch}              
\newcommand{\id}{id}              
\begin{document}

\title[Diffraction and Spectral Data of Aperiodic Tilings]
{Relating Diffraction and Spectral Data of Aperiodic Tilings:  
Towards a Bloch theorem}

\author[Akkermans]{Eric Akkermans}
\address{Department of Physics,
    Technion, 
    Haifa 3200003, Israel}
\email{eric@physics.technion.ac.il}

\author[Don]{Yaroslav Don}
\address{Department of Physics,
    Technion, 
    Haifa 3200003, Israel}
\email{yarosd@campus.technion.ac.il}

\author[Rosenberg]{Jonathan Rosenberg}
\address{Department of Mathematics,
    University of Maryland, 
    College Park, MD 20742, USA}
\email{jmr@math.umd.edu}

\author[Schochet]{Claude L.~Schochet}
\address{Department of Mathematics,
    Technion,
    Haifa 3200003, Israel}
\email{clsmath@gmail.com}

\thanks{JR partially supported by {U.S.} NSF grant number DMS-1607162. This work was supported by the Israel Science Foundation Grant No.~924/09 and by the Pazy Research Foundation.}
\keywords{Bloch theorem, tiling, diffraction spectrum, trace, Ruelle-Sullivan current, {\Cech} cohomology, foliated spaces, tangential cohomology, $K$-theory for $C^*$-algebras, partial Chern character, index theorem, gap labeling}
\subjclass[2020]{Primary 82D30; Secondary 52C23, 19K14, 37B52, 58J42,
19K56, 46L80}

\begin{abstract}

The purpose of this paper is to show the relationship in all dimensions between the structural (diffraction pattern) aspect of tilings
(described by \v Cech cohomology of the tiling space) and  the spectral properties
(of Hamiltonians defined on such tilings) defined by $K$-theory, and to show 
their equivalence in dimensions $\leq$ 3. 
A theorem makes precise the conditions for this relationship to hold.  It can be viewed 
as an extension of  the \lq\lq Bloch Theorem\rq\rq\,
to a large class of aperiodic tilings. 
The idea underlying this result is based on the     relationship between cohomology and  $K$-theory traces
and their equivalence in low dimensions.  

\end{abstract}

\maketitle

\tableofcontents


\section {Introduction}

Aperiodic tilings are  structures obtained from the spatial arrangement of letters defining an alphabet, according to a set of deterministic rules~\citep{Senechal_Book_1996,Janot_Book_1995,Baake_Grimm_Book_2013,Baake_Grimm_Book_2018}. They constitute a rich playground to investigate features of physical systems in different contexts, e.g.\ condensed matter, statistical mechanics and dynamical systems. 

This ubiquity is partly due to the existence of a large set of tiling families which includes periodic, nonperiodic (e.g.\ Wang tiles), quasiperiodic, and asymptotically periodic tilings. For periodic tilings, the Bloch theorem~\citep{Ashcroft_Mermin_Book_1976} provides a systematic and powerful relation between different aspects, such as diffraction and spectral data. For aperiodic tilings  such as quasicrystals, despite having been  thoroughly studied, these aspects  remain as yet unrelated since the classical Bloch theorem is not applicable.

A celebrated family of tilings are quasicrystals or quasiperiodic tilings related to algebraic number theory and cut-and-project (C\&P) sets~\citep{DuneauKatz_1985,KatzDuneau_1986}. Despite their lack of periodicity, quasicrystals discovered by Shechtman \citep{Shechtman_1984}, and predicted by Levine and Steinhardt \citep{Levine_1984}, exhibit sharp Bragg peaks.  Quasicrystals have been extensively investigated \citep{Luck_Review_1994,Senechal_Book_1996}, especially in one dimension  \citep{BT}. 

Quasicrystals have also been studied from the viewpoint of the spectral characteristics of the waves (acoustic, optical, matter) they can sustain. Conveniently defined Laplacians (continuous or tight-binding) reveal a highly lacunar fractal energy spectrum, with an infinite set of energy gaps~\citep{Luck_PRB_1989,Damanik_CMP_2008,Tanese_2014,Sutherland_PRB_1986}.

Johnson and Moser \citep{Johnson_1982,Johnson_1986}  studied  the spectrum of self-adjoint linear differential operators, e.g.\ continuous Schr\"odinger or Helmholtz equations, with a potential being an almost periodic function,   and presented a systematic way of enumerating the open intervals of the associated resolvent operator  using the rotation number. This was an approach to gap labeling in the spirit of the Schwartzman winding number \cite{Schwartzman} and using cohomology ideas. A discrete version is in \citep{delyon1983rotation}. This description was based on the use of the rotation number, a quantity which from Sturm-Liouville theory equals half the counting function. The theorem of Johnson and Moser applies to one-dimensional systems with a quasiperiodic potential. 
The Gap Labelling Theorem (hereafter GLT)~\citep{Bellissard_RMP_1991,Bellissard_Review_1992} of Bellissard and coworkers provides a more general framework for the topological classification of these gaps and plays, for quasiperiodic tilings, a role similar to that of the Bloch theorem for periodic ones. The Bloch theorem makes it possible to label the eigenstates of a periodic system with a quasi-momentum and to identify topological (Chern) numbers ~\citep{TKNN_1982}. 
This labelling is robust as long as the lattice translation symmetry is preserved. Similarly, the GLT allows one to associate numbers to each gap in the spectrum of quasiperiodic tilings. Those numbers can be given both a topological meaning and invariance properties akin in nature to Chern numbers, but not expressible in terms of a [classical] curvature~\citep{Bellissard_RMP_1991,Kunz_1986}.%
\footnote{
    The Chern
    numbers \emph{are} related to curvature in the noncommutative geometry
    sense of Connes \cite{C-CR,C}, however.
}
In both cases, topological invariants attached to the energy spectrum remain unchanged under a perturbation of the Hamiltonian, as long as gaps do not close.
Fractal features often show up in the diffraction patterns of aperiodic tilings  and in the spectral properties of related Laplacians, and have been suggested as a kind of generalization of the Bloch theorem for tilings~\citep{Luck_PRB_1989}. A relation between the spectrum of dynamical systems and Bragg peaks at the basis of the GLT has also been advocated in \citep{hof1995,Dworkin,Q,itzortiz2004eigenvalues}.

Current lore of topology of periodic and aperiodic tilings emphasizes the existence of different, even incompatible, classes of topological invariants. 
The band structure of periodic tilings, predicted by the Bloch theorem, naturally introduces an  inherent torus topology of the Brillouin zone. Aperiodic tilings do not enjoy these benefits of a Bloch theorem. Nonetheless, as obtained from the GLT, their energy spectra present a ramified Cantor set gap structure, e.g.\ for 1D quasiperiodic C$\&$P tilings, an infinite set of gaps which can be labeled using two integers~\citep{Bellissard_RMP_1991}. The topological $K$-theoretical nature of these integers has been emphasized~\citep{Bellissard_Review_1992}. Nontrivial topology for quasiperiodic tilings has also been reported, and the resulting topological invariants have been related to the gap labeling integers and winding numbers.
This is connected to scattering data and diffraction spectra of aperiodic tilings~\citep{Dareau_PRL_2017,Baboux_2017}. The relevance of \v Cech cohomology as an important tool in the study of substitution tilings has been emphasized by Anderson and Putnam \citep{AP} and also in \citep{AR}. Given such a tiling space, they show that it is topologically conjugate
to an inverse limit of explicit finite complexes, and hence 
the cohomology is readily computable. Anderson-Putnam remark: \lq\lq \emph{Our point of view is that
it is \v Cech cohomology which is really measuring the 
almost periodic structure of these tilings.}" 

Both characterisations of tilings, diffraction spectra (Bragg peaks) and spectral data for wave equations, are obtained by means of conveniently defined traces expressed either by a two-point correlation function or by the integrated density of states (a.k.a.\ counting function). At this point, there are two communities of mathematicians and physicists who build 
these traces from Ruelle-Sullivan currents.
One group  uses \v Cech cohomology 
and defines the cohomology trace $\tc$ there \citep{AP,AR}. The other group uses $K$-theory
almost exclusively and defines the $K$-theory trace $\tk$  there \citep{BBG,shu,LPV,M,Kr,BM}.  The goal of this paper is to unite the two groups of people, to show that for a large class of tilings, including
cut-and-project (C$\&$P) aperiodic tilings, these two traces are equivalent, at least in dimension $\le 3$, a result which can be seen as an extension of the Bloch theorem to this family of aperiodic tilings. To that purpose, we will first
show that the two approaches give exactly the same trace under very 
general circumstances and then we will show by way of examples how
results from both groups fit under the same umbrella.  In dimensions
$4$ and up, part of the GLT still seems to be in doubt since there
is no convincing proof of integrality of the Chern character in
the literature.  But the equivalence between cohomology and $K$-theoretic
traces is still valid up to perhaps an integral factor.  We will
explain this in Section \ref{sec:uniting}.

This effort requires quite a bit of background knowledge. In order to
keep the paper to a reasonable length, we will specify the tools 
that we need from foliated spaces, measure theory, cohomology and
$K$-theory, and we will give precise references in the literature to
theorems that we require. 

In order to demonstrate the generality of our results, we will wait to
introduce tiling terminology and conditions until needed. For now, 
we assume given a compact foliated space $X$ with oriented foliation bundle. 
Our basic reference for foliated spaces is the book of Moore and Schochet
\cite{MS}.

A \emph{foliated chart} or \emph{foliated patch} in a
topological space is an open set
homeomorphic to $L \times N$, where $L$ is a copy of
$\RR ^d$ and $N$ is a separable locally compact
metrizable space. A \emph{tangentially smooth} function
$f \co L \times N \to \RR $ is a continuous function such that
$f(\,\bullet\, , n)$ is smooth for each 
$n \in N$ and the partial derivatives of $f$ in the $L$ direction are
continuous on $L \times N$.  A \emph{foliated space}  \cite[p.~32]{MS}
$X$ is a separable locally compact metrizable space with an open covering
by foliated charts fitting together smoothly so that the local plaques
$L \times \{n\}$ fit together to form $d$-dimensional 
smooth manifolds called \emph{leaves}.  Foliated manifolds are the classical
examples of foliated spaces, but for us the relevant examples are tiling
spaces, which are foliated spaces but typically are not foliated manifolds.  We let $C_\tau ^\infty (X)$
denote the tangentially smooth functions $f\co X \to \RR$; that is, they are
tangentially smooth when restricted to every local patch. 
A foliated space has a natural $d$-dimensional real tangent bundle $F$ along the
leaves.  Its dual bundle is denoted $F^*$. 

The foliated spaces relevant to tiling theory are quite special.
In order to take advantage of that fact, we shall specialize at once. 

\begin{Def}
\label{def:foliatedRd}
A \emph{compact foliated space given by an $\RR^d$-action}
is a compact foliated space $X$ with a locally free $\RR ^d$-action, 
such that the orbits of the action are the leaves of the foliated
space, and
$X \cong  N \times_{\Lambda}\RR ^d$ for some compact totally disconnected
space $N$ carrying an action of a lattice $\Lambda\subset \RR ^d$.
There is a resulting fibre bundle
\[
N \longrightarrow X \xrightarrow{\ p\ } T^d,
\]
where $T^d=\RR^d/\Lambda$ is the $d$-torus, 
and the restriction of the projection to each leaf $\ell$
is a covering map $\ell \to T^d$. 
\end{Def}

Henceforth, \textbf{all foliated spaces and in particular all tiling
spaces will be assumed to be compact foliated spaces given by
an $\RR ^d$-action}.
This implies that the foliation tangent bundle is orientable. 
Sadun and Williams \cite{SW} show that the hulls of most tilings
are homeomorphic to spaces satisfying these conditions.  In general,
the homeomorphism   $X \cong  N \times_{\Lambda}\RR ^d$   is \emph{not}
equivariant for the $\RR^d$-action on the
tiling hull, but this won't matter since all we need is the
foliated space structure, not the group action, and the homeomorphism
sends leaves to leaves. We comment on this result in detail below.

\medskip

The remaining sections are organized as follows.

\begin{description}

\item[Section 2] is a very quick introduction to \v Cech cohomology
$\CH ^*(X ; \FF ) $ for compact spaces. 

\item[Section 3] introduces tangential cohomology $H_\tau ^*(X)$ and
homology, sculpted for foliated spaces. There is a natural map 
\[
s\co\CH ^k(X ; \RR ) \longrightarrow H_\tau ^k(X)
\]
which in our special context is shown by a spectral
sequence comparison argument to be induced by
the inclusion $C^\infty(N)\hookrightarrow C(N)$.

\item[Section 4] is devoted to coinvariants. We demonstrate that if 
$X = N \times _\Lambda \RR ^d $ is a compact foliated space given by an
$\RR ^d$-action, then there is a natural isomorphism
\[
\CH ^0(N ; \RR)_{\ZZ ^d} \cong  \CH ^d(X ; \RR )
\]
and the map $s\co \CH ^d(X ; \RR ) \to H_\tau^d(X)$ has dense image in the Hausdorff quotient $\bar H_\tau^d(X)$.

\item[\textmd{In} Section 5] we introduce the machinery of topological groupoids,
tangential measures, invariant transverse measures $\nu$, and 
Ruelle-Sullivan currents $C_{\nu}$. We highlight the Riesz representation
theorem, which identifies the group of signed Radon invariant transverse 
measures with the top tangential homology group. This allows us to define the cohomology trace
\[
\tc\co \CH^d(X ;\RR ) \longrightarrow \RR
\]
as the composition
\[
\CH^d(X ;\RR ) \xrightarrow{s}   H_\tau ^d(X) 
\xrightarrow{\cap \,{C_\nu}}   \RR
\]

\item[Section 6] contains a very brief introduction to topological
$K$-theory for $C^*$-algebras and the classical Chern character. 
Following Bellissard, we define the \emph{non-commutative Brillouin zone} to be 
$\zone = C^*(G(X))$, where $G(X)$ is the holonomy groupoid of the foliated space. 
Using Connes' Thom isomorphism theorem
\[
\varphi\co K_d(A)  \xrightarrow{\cong} K_0(A \rtimes \RR^d)
\]
we record an isomorphism 
\[
K^d(X) \xrightarrow{\varphi}
K_0(C(X) \rtimes \RR^d) \cong{K_0(C^*(G(X))} = K_0(\zone )
\]
which holds in our context; we denote it $\chi$.

\item[Section 7] gives the analytical background for the $K$-theory trace.

\item[\textmd{The subject of} Section 8] is the partial Chern character,
which is a map
\[
c: K_0(C^*(G(X)) \longrightarrow \bar H_\tau ^d(X). 
\]
defined for foliated spaces.  Its existence depends upon the identification
of the invariant transverse measures with homology classes via the
Riesz theorem.  Given an invariant
transverse measure $\nu$, the $K$-theory trace, which we denote $\tk$,  is simply the composition
\[
K_0(\zone) =   K_0(C^*(G(X))) \xrightarrow{c}
\bar H_\tau^d (X) \xrightarrow{\cap C_\nu} \RR. 
\]

\item[Section 9] contains our most general result, Theorem \ref{thm:main},
relating the cohomology and $K$-theory traces.  

\theoremstyle{plain}
\newtheorem*{thm:main}{Theorem \ref{thm:main}}
\begin{thm:main}
Suppose that $X$ is a compact foliated space given by an
$\RR ^d$-action with invariant transverse measure $\nu $,
and the holonomy cover of each leaf is simply connected. Then:
\begin{itemize}
\item  The diagram 
\begin{equation}
\tag{\ref*{eq:main1}}  
\xymatrix{
K_0(\zone)\ar@2{-}[r]& K_0(C^*(G(X))) \ar[r]^(.6){\chi ^{-1}}_(.6)\cong \ar[d]^c
& K^{-d}(X) \ar[r]^{\ch_d} & \CH^d(X; \RR )\ar[d]^s\\
&H_\tau^d(X)\ar[rr]^{\\id}&&H_\tau^d(X)
}
\end{equation}
commutes. 
\medskip

\item Bloch Theorem: For every invariant transverse measure $\nu$, the diagram 
\begin{equation}
\tag{\ref*{eq:main2}}  
\xymatrix@C+6ex{
K_0(\zone)\ar[r]^(.45){(\ch_d)\circ(\chi^{-1})} \ar[d]^{\tk}
& \CH^d(X; \RR)\ar[d]^{\tc}\\
\RR \ar[r]^{\id} & \RR
}
\end{equation}
commutes. 
\end{itemize}
\end{thm:main}

We then have the following consequence for tilings.  
Suppose that $T$ is a tiling satisfying the following conditions:
\medskip
\begin{enumerate}
\item $T$ satisfies the finite pattern condition (i.e.,
finite local complexity);
\item $T$ has only finitely many tile orientations.
\end{enumerate}

\newtheorem*{cor:maintiling}{Corollary \ref{cor:maintiling}}
\begin{cor:maintiling}
Under the assumptions above, if the dimension is $\le 3$, then for every invariant transverse measure the diagram 
\begin{equation}
\tag{\ref*{eq:maintiling}}  
\xymatrix{
K_0(\zone _T)  \ar@{>>}[rr]^(.45){(\ch _d) \circ (\chi ^{-1})}  \ar[d]^{\tk}
&&\CH^d(\hull ; \ZZ ) \ar[d]^{\tc} \\
\RR \ar[rr]^{\id} &&\RR
}
\end{equation}
commutes. In particular,  the $K$-theory trace   
$\tk\co K_0(\zone_T) \to \RR$  and the cohomology trace  $\tc\co\CH^d(\hull;\ZZ) \to \RR$  have the same image in $\RR$.
\end{cor:maintiling}

\item[Sections 10] gives a quick introduction to the building tools underlying basic families of tilings.

\item[Sections 11 and 12] present a comparison of the diffraction spectrum and the counting function (gap labeling) for three canonical families of one-dimensional tilings: periodic, quasiperiodic, and aperiodic. For the quasiperiodic and aperiodic cases, the diffraction spectrum involves three possible classes: pure-point (i.e., a discrete and countable set of Bragg peaks), absolutely continuous or singular continuous. The corresponding spectrum of Laplacians on these tilings involves also these three classes although the two spectra do not necessarily coincide. For the periodic case, Laplacian and diffraction spectra are pure-point and the Bloch theorem gives the equivalence between those two data sets.

\item[Section 13] gives some insights to a selection of earlier works closely related to ours. We emphasize the distinctions between the different definitions and results.

\end{description}

\medskip

The authors wish to thank Jerry Kaminker, Ian Putnam, Ken Brown,
and Benji Weiss	for their assistance with various points in the paper and 
to acknowledge our debt to Alain Connes, who first proved 
an index theorem for foliations.


\section{\v Cech Cohomology}

The classical reference for {\Cech} cohomology for compact (Hausdorff)
spaces is Eilenberg-Steenrod \cite{ES}. Suppose that $X$ is
a  compact space. 
Then $\CH ^j(X ; \FF) $ is defined for $\FF $ any commutative ring.
(This is the same as the sheaf cohomology of the constant sheaf defined
by $\FF$.)
If $X = \Invlim X_j $ is the inverse limit of finite $CW$-complexes (and 
every compact space may be written in such a manner), then the
natural maps $H^*(X_j ; \FF) \to \CH^*(X; \FF)$ induce an
isomorphism for each $k$:
\[
\Dirlim H^k(X_j ; \FF )   \xrightarrow{\cong}  \CH^k(X; \FF )  .
\]
Note that we do not have to specify the type of cohomology
(\Cech, singular, simplicial, $\dots$) that we use for the
finite complexes, since they all agree for such spaces.
If $X$ is also separable metrizable then limits can be taken over
sequences. It follows that if $X$ is compact separable metrizable,  then 
$\CH^k(X; \ZZ )$ is the direct limit of a sequence of finitely generated
abelian groups, hence countable, and $\CH^k(X ; \RR )$ is the direct 
limit of a sequence of finite-dimensional vector spaces, hence a vector space
of (at most) countable dimension.  In addition, the natural map
$\CH^k(X ; \ZZ) \longrightarrow \CH^k(X ; \RR)$ induces an isomorphism 
\[
\CH^k(X ; \ZZ)\otimes \RR \xrightarrow{\cong} \CH^k(X ; \RR).
\]
If $X$  is compact metric of dimension $d$, then $\CH^k (X; \FF)$ is
defined and vanishes for $k > d$. 
The group $\CH ^d (X; \FF)$ is of special interest, and we will
discuss it below.      
The transversal $N$ in our applications is a Cantor set,
zero-dimensional, so its cohomology vanishes in positive dimensions.
Note that $N$ is the inverse limit of a sequence of finite
discrete spaces $X_n$, $N = \Invlim X_n$.  Thus
$\CH ^0(N ; \FF ) \cong \Dirlim H^0(X_n; \FF )$.  Each
$H^0(X_n; \FF )$ is  a finitely generated free $\FF$-module and the
maps in the direct system all split, so
$\CH ^0(N ; \FF ) \cong \oplus \FF $, where the sum is over countably many
copies of $\FF $.


\section{Tangential Cohomology and Homology}

A convenient reference for tangential cohomology is \cite[Ch.\ III]{MS}.
Tangential cohomology is referred to by various other names in the
tiling literature.%
\footnote{
    For example, 
    Kellendonk-Putnam \cite[p.\ 695]{KP} page call it \emph{dynamical
    cohomology} and generalize it. Moustafa \cite{M} calls it
    \emph{longitudinal cohomology.}
}
It is defined on   
foliated spaces $X$ (say, of leaf dimension $d$).
Recall that  $C_\tau ^\infty(X)$ is the 
space of  real-valued continuous functions on $X$ 
that are smooth in the leaf directions.  
Let $\Gamma _\tau (F^*)$ denote the tangentially smooth sections of $F^*$, the dual
of the tangent bundle to the leaves, and let
$\Omega _\tau ^k(X) = \Gamma _\tau (\bigwedge ^kF^*)$ denote the
tangential de Rham complex.%
\footnote{
    Formally, the global sections of 
    the graded sheaf $\bigwedge ^k(F^*)$.
}
Its cohomology groups are the \emph{tangential cohomology}
groups and are 
denoted $H_\tau ^k(X)$.  These vanish for $k > d$ because there are
no forms in higher dimensions.  There is a natural 
map  \cite[p.~58]{MS}
\begin{equation}
s\co   \CH^k(X ; \RR ) \longrightarrow H_\tau ^k(X)
\label{eq:comparison}
\end{equation}
from \v Cech cohomology to tangential cohomology, defined  
using sheaf theory.   If $X$ is a compact smooth foliated manifold $M$, then
this simply corresponds to the inclusion
$C^\infty(M) \subset C_\tau^\infty (M)$,
since any smooth function on $M$ is tangentially smooth.

The groups $H_\tau ^k(X)$ have a natural topology induced from the de
Rham cochains, and the topology is not necessarily Hausdorff. We
denote by $\bar H_\tau ^k(X)$ the Hausdorff quotient of $H_\tau^k(X)$. 

There is an associated \emph{tangential  homology} theory $H_*^\tau (X)$
defined by taking the homology of 
$ [\Omega _\tau ^*(X)]^*$,
the (continuous) dual of the associated  tangential 
de Rham complex. Then there are   natural isomorphisms 
\[
\Hom_{\textup{cont}} (H_\tau^k (X) , \RR ) \cong
\Hom_{\textup{cont}} (\bar H_\tau^k (X) , \RR ) \cong H_k^\tau (X).
\]

The comparison map $s$ of \eqref{eq:comparison} is for general
foliated spaces neither injective nor surjective.  But we have a
substantial simplification when our foliated spaces satisfy
Definition \ref{def:foliatedRd}. To explain it, recall that when
$N$ is a totally disconnected compact metrizable space,
$C^\infty(N)$ is the algebra of locally constant functions on $N$
(this notation figures prominently in analysis on $p$-adic groups),
which is the algebra of functions that factor through some
quotient map $N\twoheadrightarrow F$ with $F$ a discrete finite set.
This is a dense subalgebra of $C(N)$, of countable
dimension, and there is a natural
completion map with dense range $i\co C^\infty(N)\hookrightarrow C(N)$.

\begin{Thm}
\label{thm:TangentialDeRham}
Let $X$ be a compact foliated space given by an $\RR^d$-action
in the sense of \textup{Definition~\ref{def:foliatedRd}}, obtained
by inducing a $\ZZ^d$-action on a totally disconnected compact
space $N$.  Then
there is a natural commuting diagram
\[
\xymatrix{
\CH^k(X ; \RR ) \ar[r]^s \ar[d]^\cong 
& H_\tau ^k(X) \ar[r] \ar[d]^\cong &
\bar H_\tau ^k(X)  \\
H^k_{\textup{group}}(\ZZ^d, C^\infty(N)) \ar[r]^{i_*}
& H^k_{\textup{group}}(\ZZ^d, C(N)). &
}
\]
The composite
\[
\bar s\co  \CH^k(X ; \RR ) \xrightarrow{s}   H_\tau ^k(X) \longrightarrow  \bar H_\tau ^k(X)
\]
has dense image.
\end{Thm}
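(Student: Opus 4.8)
The plan is to establish the diagram by identifying all three cohomology theories in question with group cohomology of $\ZZ^d$ with appropriate coefficient modules, and then to read off the comparison maps as induced by the coefficient inclusion $i\co C^\infty(N)\hookrightarrow C(N)$. First I would set up the identification $\CH^k(X;\RR)\cong H^k_{\textup{group}}(\ZZ^d,C^\infty(N))$. Since $X\cong N\times_\Lambda\RR^d$ is the total space of a fibre bundle over $T^d=\RR^d/\Lambda$ with totally disconnected fibre $N$, and since $\RR^d$ is contractible, the \v Cech cohomology of $X$ with constant real coefficients is computed by the Cartan–Leray (or Borel) spectral sequence of the covering $N\times\RR^d\to X$, equivalently by viewing $X$ as the homotopy quotient $N\times_{\ZZ^d}E\ZZ^d$ after replacing $\RR^d$ by a contractible free $\ZZ^d$-CW model. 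The $E_2$-page is $H^p_{\textup{group}}(\ZZ^d,\CH^q(N;\RR))$; since $N$ is zero-dimensional, $\CH^q(N;\RR)=0$ for $q>0$, so the spectral sequence collapses and yields $\CH^k(X;\RR)\cong H^k_{\textup{group}}(\ZZ^d,\CH^0(N;\RR))=H^k_{\textup{group}}(\ZZ^d,C^\infty(N))$, using that $\CH^0(N;\RR)$ is precisely the locally constant functions $C^\infty(N)$ with its natural $\ZZ^d$-action. One must check that the $\ZZ^d$-module structure arising geometrically agrees with the one coming from the $\Lambda$-action on $N$, which is routine once the identification $X\cong N\times_\Lambda\RR^d$ is fixed.

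Next I would identify $H_\tau^k(X)$ with $H^k_{\textup{group}}(\ZZ^d,C(N))$. The tangential de Rham complex $\Omega_\tau^*(X)$ restricted to a leaf is the ordinary de Rham complex of $\RR^d$, so tangentially one has a flat connection along the leaves; concretely, pulling back to the cover $N\times\RR^d$, the tangential forms are $C(N)$-valued differential forms on $\RR^d$ that are $\ZZ^d$-equivariant, and the tangential de Rham differential is the ordinary de Rham differential in the $\RR^d$-variable. Thus $\Omega_\tau^*(X)$ is the totalization of the double complex computing $\ZZ^d$-equivariant de Rham cohomology of $\RR^d$ with values in $C(N)$, i.e.\ $H_\tau^k(X)\cong H^k_{\textup{group}}(\ZZ^d,H^*_{dR}(\RR^d;C(N)))=H^k_{\textup{group}}(\ZZ^d,C(N))$, since $H^q_{dR}(\RR^d;C(N))=0$ for $q>0$ by the Poincaré lemma with Banach-space coefficients and equals $C(N)$ for $q=0$. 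This is essentially the content of \cite[Ch.\ III]{MS} specialized to suspensions of $\ZZ^d$-actions; I would cite that and supply only the equivariance bookkeeping. With both identifications in place, the naturality of the comparison map $s$ of \eqref{eq:comparison} with respect to the coefficient inclusion gives the commuting square, because on cochain level $s$ is literally post-composition with $i\co C^\infty(N)\hookrightarrow C(N)$ on the coefficients of the bar (or Koszul) resolution of $\ZZ^d$; one checks that the collapse isomorphisms of the two spectral sequences are compatible with $i$, which holds because $i$ is a map of $\ZZ^d$-modules inducing the identity on the edge homomorphisms.

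Finally, for the density of $\bar s$ I would argue as follows. The group $\ZZ^d$ has a finite free resolution (the Koszul complex $\Lambda^*\ZZ^d\otimes\ZZ[\ZZ^d]$), so $H^k_{\textup{group}}(\ZZ^d,M)$ is computed as the cohomology of a finite complex of copies of $M$; for $M=C(N)$ these are Banach spaces and the differentials are bounded, while for $M=C^\infty(N)$ they are the dense subspaces of locally constant functions. The map $i_*$ on $H^k$ is induced by a map of finite complexes with dense image in each degree. Since taking cohomology is right-exact in an appropriate topological sense — more precisely, the image of a bounded map between finite complexes, followed by passage to the Hausdorff quotient of cohomology, has closure containing the image of any dense subcomplex — the image of $i_*$ is dense in $\bar H_\tau^k(X)=\overline{H^k_{\textup{group}}(\ZZ^d,C(N))}$. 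I expect the main obstacle to be exactly this last topological point: cohomology does not commute with completion on the nose (the cocycle and coboundary subspaces need not be closed), so one has to be careful that a dense subspace of cocycles maps onto a dense subspace of the Hausdorff cohomology quotient. The clean way to handle it is to note that the Koszul differentials $C(N)^{\binom{d}{k-1}}\to C(N)^{\binom{d}{k}}$ restrict to $C^\infty(N)^{\binom{d}{k-1}}\to C^\infty(N)^{\binom{d}{k}}$ with dense image in the target, that $\ker$ on the $C^\infty$ side is dense in $\ker$ on the $C(N)$ side (both are finite intersections of kernels of bounded operators, and one can approximate using that $C^\infty(N)$ is closed under the relevant finite-dimensional linear algebra since it is an algebra of locally constant functions detecting every clopen partition), and then the induced map on quotients has dense image by continuity of the quotient map $\ker\to\bar H_\tau^k$.
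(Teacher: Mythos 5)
Your identification of the two sides with $H^k_{\textup{group}}(\ZZ^d, C^\infty(N))$ and $H^k_{\textup{group}}(\ZZ^d, C(N))$, with the square given by naturality in the coefficient inclusion $i$, is essentially the paper's route: there both identifications come from the Leray spectral sequences of $p\co X\to T^d$ for the constant sheaf and for the sheaf of tangentially locally constant functions, both collapsing because $N$ is totally disconnected, and the square comes from the induced morphism of spectral sequences. Your variant for the tangential side (equivariant double complex on the cover $N\times\RR^d$ plus a Poincar\'e lemma with $C(N)$-coefficients) is a legitimate alternative, but it needs two ingredients you only wave at: the invariant forms $\bigl(\Omega^*_\tau(N\times\RR^d)\bigr)^{\ZZ^d}=\Omega^*_\tau(X)$ compute the total cohomology of your double complex only because the modules $\Omega^q_\tau(N\times\RR^d)$ are acyclic for $\ZZ^d$ (a partition-of-unity argument over $T^d$, cf.\ \cite{MS}), and one must check that leafwise smoothness with jointly continuous leafwise derivatives is the same as smoothness as a map into the Banach space $C(N)$.

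The genuine gap is the density step. You reduce it to the claim that Koszul cocycles with $C^\infty(N)$ coefficients are dense in Koszul cocycles with $C(N)$ coefficients in every degree. Intersecting the kernel of a bounded operator with a dense subspace does not in general give a dense subspace of the kernel, and your parenthetical justification does not engage with the actual difficulty: the Koszul differentials involve the $\ZZ^d$-action, and a locally constant approximation of a cocycle is not a cocycle, because the clopen partition defining it need not be invariant. Already in degree $0$ your claim asserts that $\ZZ^d$-invariant locally constant functions are uniformly dense in $\ZZ^d$-invariant continuous functions, which fails for non-minimal actions: for the shift on the compact totally disconnected set of all balanced $\{0,1\}$-sequences, the slope is a nonconstant invariant continuous function while every invariant clopen set is trivial, so the invariant locally constant functions are only the constants. (The side claim that the restricted Koszul differentials have dense image in their targets is likewise unjustified, though unneeded.) The paper asserts something weaker and structured differently: density of the image of $i_*$ in $H^k_{\textup{group}}(\ZZ^d, C(N))$ for the topology of convergence of cocycles, i.e., approximation of a $C(N)$-cocycle by a $C^\infty(N)$-cocycle \emph{modulo a coboundary}, followed by the identification of this topology with the tangential de Rham topology; in the top degree $k=d$, the only degree used later for the traces, every cochain is a cocycle, so density there is immediate from density of $C^\infty(N)$ in $C(N)$. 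Your argument also works in that top degree, for the same trivial reason, but as written it does not establish the density statement for general $k$: the step you yourself flagged as the main obstacle is exactly where it breaks, and any repair must allow correction by coboundaries (and, as the degree-zero example shows, the low-degree cases deserve genuine care).
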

\begin{proof}
Recall that $H_\tau^*(X)$ is the sheaf cohomology of the sheaf
$\cR_\tau$ of germs of continuous real-valued functions which are locally
constant along the leaves, since the tangential de Rham complex is a fine
resolution of this sheaf, whereas $\CH^k(X; \RR)$ is the sheaf cohomology
of the sheaf $\cR$ of germs of locally constant real-valued functions.
These sheaves are not the same. But
\eqref{eq:comparison} is induced by the natural ``inclusion''
morphism $\iota\co\cR \to \cR_\tau$.  The bundle projection
$p\co X\to T^d$ gives rise to Leray spectral sequences of sheaves
\[
H^k(T^d, R^\ell p_*\cR) \Rightarrow H^{k+\ell}(X, \cR) \quad
\text{and} \quad
H^k(T^d, R^\ell p_*\cR_\tau) \Rightarrow H^{k+\ell}(X, \cR_\tau).
\]
Here $R^\ell  p_*\cF$ is the $\ell$-th derived push-forward of
a sheaf $\cF$, defined by ``sheafifying'' the presheaf
$U\mapsto H^\ell(p^{-1}(U), \cF)$.
The morphism $\iota$ induces a morphism of spectral sequences from the
first of the spectral sequences to the second.  The first spectral
sequence is just the familiar Serre spectral sequence
$H^k(T^d, \CH^\ell(N)) \Rightarrow \CH^{k+\ell}(X; \RR)$, though
note that we need cohomology with local coefficients here since
$\pi_1(T^d)=\ZZ^d$ acts nontrivially on $N$ and on
its only non-zero \v Cech cohomology group, $\CH^0(N;\RR)$.

Let's examine the sheaf $R^\ell p_*\cR_\tau$ which appears in the
second spectral sequence (the one converging to tangential
cohomology). For $U$ a small connected open set in $T^d$,
$p^{-1}(U)$ splits as $N\times U$, and the sheaf $\cR_\tau$ on this
open set of $X$ is just the sheaf of  germs of continuous functions
which are locally constant along the leaves, i.e., which depend only on
the $N$ factor in this product decomposition.  Thus
$H^\ell(p^{-1}(U), \cR_\tau) \cong H^\ell(N, \underline\RR)$,
where $\underline\RR$ is the sheaf of germs of real-valued continuous
functions, is actually
independent of $U$ (once it's small enough) and is just
$C(N)$, continuous real-valued functions on $N$,
for $\ell = 0$, and $0$ for $\ell > 0$. However,
\[
\CH^0(N;\RR)\cong \CH^0(N;\ZZ)\otimes_\ZZ \RR \cong
C(N, \ZZ)\otimes_\ZZ \RR \cong C^\infty(N),
\]
the locally constant functions on $N$, which is a dense subspace of
$C(N)$. So now it's evident that
\[
\iota_*\co H^k(T^d, R^\ell p_*\cR) \to H^k(T^d, R^\ell p_*\cR_\tau)
\]
is just the completion map
\[
i_*\co H^k_{\textup{group}}(\pi_1(T^d), C^\infty(N)) \to
H^k_{\textup{group}}(\pi_1(T^d), C(N))
\]
for $\ell=0$ and
that both sides vanish identically when $\ell>0$.
So both spectral sequences collapse and $\iota_*=i_*$.

As for the last statement about density of the image, it is obvious
that density of the image of $i\co C^\infty(N) \to C(N)$ gives
density of the image of $i_*$ in the topology of
$H^k_{\textup{group}}(\pi_1(T^d), C(N))$ coming from convergence
of (group) cocycles.  However, this topology also agrees with the topology
on $H^k_\tau(X)$ coming from the tangential de Rham complex, i.e., given
by convergence of differential forms, as one can see by restricting
to a small open set of the form $N\times U$, where the de Rham complex
locally looks like $C(N)\otimes \Omega^*(U)$.
\end{proof}

Because of Theorem \ref{thm:TangentialDeRham}, we see that for tiling spaces,
tangential and \v Cech cohomologies are close to being identical.
In our application we will need the composition
\[
\bar s \co  \CH ^d(X ; \RR)  \xrightarrow{s}
H_\tau ^d(X)    \longrightarrow  \bar H_\tau ^d(X) ,
\]
which under these assumptions will have dense image. 

In this regard
still an additional simplification occurs in the top degree, as
explained in the following section.


\section{Coinvariants}

In the more recent tiling literature there is quite a bit of attention
given to the  \v Cech cohomology group 
$\CH ^d(\Omega _T; \FF )$, since that is the group that houses the cohomology
information about the tiling.      
Whenever there is a fibration 
\[
N  \to X \to T^d
\]
as is the case with $X = \Omega _T$ usually, 
then
$\ZZ ^d \cong \pi _1(T^d) $ acts on    $N$ and hence on
$\CH^0(N ; \ZZ )$. Let 
$\CH^0(N ; \ZZ )_{\ZZ ^d}  $   denote the coinvariants of the action,
i.e., the quotient of $\CH^0(N ; \ZZ )$ 
by the subgroup generated by all of the elements
$g\cdot x - x$,
for $g\in {\ZZ ^d}$ and $x\in \CH^0(N ; \ZZ )$.
The following result seems to be well-known to the cognoscenti.

\begin{Thm}\label{coinvariant}
Fix some integer $d >0$ and let $T^d $ denote the $d$-torus. Suppose given a
fibration 
\[
N \longrightarrow X \longrightarrow  T^d
\]
with $N$ and $X$ compact and $N$ totally disconnected.
Then there is an isomorphism
\[
\CH ^d(X ; \ZZ ) \simeq      \CH^0(N ; \ZZ )_{\ZZ ^d}  ,
\]      
the coinvariants of the action of $\ZZ ^d \cong \pi _1(T^d)$ on
$\CH^0(N ; \ZZ )$.  Similarly with $\RR$ or $\QQ$ coefficients.
\end{Thm}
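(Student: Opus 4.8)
The plan is to compute $\CH^d(X;\ZZ)$ via the Serre spectral sequence of the fibration $N \to X \xrightarrow{p} T^d$, exactly as already set up in the proof of Theorem~\ref{thm:TangentialDeRham}. Since $N$ is totally disconnected, its only nonzero \v Cech cohomology is $\CH^0(N;\ZZ) \cong C(N,\ZZ)$, which sits in homological degree $0$. Therefore the $E_2$-page is concentrated in the single row $\ell = 0$: we get $E_2^{k,0} = H^k(T^d; \underline{\CH^0(N;\ZZ)})$, cohomology of the torus with local coefficients in the $\ZZ^d$-module $M := \CH^0(N;\ZZ)$ (with $\pi_1(T^d) = \ZZ^d$ acting through the given action on $N$), and $E_2^{k,\ell} = 0$ for $\ell \neq 0$. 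A spectral sequence concentrated in one row collapses at $E_2$, so $\CH^n(X;\ZZ) \cong H^n_{\textup{group}}(\ZZ^d; M)$ for all $n$; in particular $\CH^d(X;\ZZ) \cong H^d(\ZZ^d; M)$.

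Next I would identify $H^d(\ZZ^d; M)$ with the coinvariants $M_{\ZZ^d}$. The standard fact is Poincar\'e duality for the group $\ZZ^d$: since $\ZZ^d$ is a duality group of dimension $d$ (its classifying space is $T^d$, a closed orientable $d$-manifold), there is a natural isomorphism $H^d(\ZZ^d; M) \cong H_0(\ZZ^d; M) = M_{\ZZ^d}$ for any $\ZZ^d$-module $M$. Equivalently, one can see this directly from the Koszul resolution of $\ZZ$ over $\ZZ[\ZZ^d]$ (the exterior-algebra/Koszul complex on $d$ generators): the cochain complex $\Hom_{\ZZ[\ZZ^d]}(\text{Koszul}, M)$ has top term $M$ with the outgoing differential in degree $d-1 \to d$ being (up to sign) the map whose image is the span of the elements $g\cdot x - x$, so $H^d = M / \langle g\cdot x - x\rangle = M_{\ZZ^d}$. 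Either route gives $\CH^d(X;\ZZ) \cong \CH^0(N;\ZZ)_{\ZZ^d}$, and the argument is natural in $M$.

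Finally, for $\RR$ or $\QQ$ coefficients the identical argument applies with $M$ replaced by $\CH^0(N;\RR)$ or $\CH^0(N;\QQ)$; alternatively, one invokes the already-noted fact that $\CH^k(X;-) $ commutes with $-\otimes_\ZZ \RR$ (resp.\ $\QQ$) and that taking coinvariants, being a right-exact functor (a cokernel), commutes with the flat base change $-\otimes_\ZZ \RR$. So $\CH^d(X;\RR) \cong \CH^d(X;\ZZ)\otimes\RR \cong \CH^0(N;\ZZ)_{\ZZ^d}\otimes\RR \cong \CH^0(N;\RR)_{\ZZ^d}$.

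The main obstacle, such as it is, is purely bookkeeping rather than conceptual: one must be careful that the Serre spectral sequence with a \emph{totally disconnected} fibre genuinely behaves as claimed — i.e.\ that $R^\ell p_* \underline{\ZZ}$ vanishes for $\ell > 0$ and equals the (locally constant, hence genuinely $\ZZ^d$-equivariant) sheaf on $T^d$ with stalk $\CH^0(N;\ZZ)$ for $\ell = 0$ — and that "cohomology of $T^d$ with these local coefficients" is literally group cohomology of $\ZZ^d$. This was essentially already checked in the proof of Theorem~\ref{thm:TangentialDeRham}, so I would simply cite that computation. The only other point requiring a word of care is the sign/indexing in the Koszul-resolution identification of $H^d$ with the coinvariants when $d > 1$; this is classical (it is the statement that $\ZZ^d$ is an orientable $\PP\DD$-group of dimension $d$), so I would state it with a reference (e.g.\ Brown, \emph{Cohomology of Groups}, Ch.~VIII) rather than rederive it.
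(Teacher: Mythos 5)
Your proposal is correct and follows essentially the same route as the paper: collapse of the Leray--Serre spectral sequence for $N \to X \to T^d$ (using that the totally disconnected fibre has cohomology only in degree $0$), identification of $\CH^d(X;\ZZ)$ with $H^d_{\textup{group}}(\ZZ^d,\CH^0(N;\ZZ))$, and Poincar\'e duality for $\ZZ^d$ (the paper's footnote is exactly your Koszul-complex computation) to get the coinvariants. The only cosmetic difference is that the paper handles $\QQ$ and $\RR$ coefficients via the Universal Coefficient Theorem rather than rerunning the argument or invoking flat base change, which changes nothing of substance.
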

\begin{proof}
We have a Leray-Serre spectral sequence
$\CH^k(T^d, \CH^\ell(N ; \ZZ ))\Rightarrow \CH^{k+\ell}(X; \ZZ)$.
Here $T^d$ is the classifying space for $\ZZ^d$ and the outer
cohomology is just the same as group cohomology for the
fundamental group $\ZZ^d$ of $T^d$.
The spectral sequence collapses since $\CH^\ell(N ; \ZZ )=0$ for $\ell>0$.
So 
\[
\CH ^d(X ; \ZZ ) \cong H^d_{\textup{group}}(\ZZ^d, \CH^0(N ; \ZZ )).
\]
By Poincar\'e Duality, 
\[
H^d_{\textup{group}}(\ZZ^d, \CH^0(N ; \ZZ ))
\cong H_0^{\textup{group}}(\ZZ^d, \CH^0(N ; \ZZ )) = 
\CH^0(N ; \ZZ )_{\ZZ ^d}.%
\footnote{
    This isomorphism can be understood
    as follows.  The group cohomology of the group $\ZZ^d$ with
    coefficients in a module $M$ can be computed from a standard
    cochain complex $\left(\Hom(\bigwedge^k \ZZ^d, M), \delta\right)$.
    Thus $H^d(\ZZ^d, M)$ is the quotient of
    \[
    \Hom\left(\bigwedge^d \ZZ^d, M\right)\cong \Hom(\ZZ, M)
    = M
    \]
    by the image of $\delta$ from
    $\Hom\left(\bigwedge^{d-1} \ZZ^d, M\right)\cong \Hom(\ZZ^d, M)$.
    Examining the map shows that this is exactly the same as taking the
    coinvariants of the action on $M$.
    }
\]
The cases of $\RR$ or $\QQ$ coefficients
follow immediately by the Universal Coefficient Theorem.
\end{proof}  

\begin{Cor}\label{cor:topofH}
Suppose that $X$ is a compact foliated space given by
an $\RR ^d$-action.  Then there is a natural map
\[
\CH^0(N ; \RR  )_{\ZZ ^d} \longrightarrow H_\tau^d(X)
\]
whose image is dense in the Hausdorff quotient $\bar H_\tau^d(X)$.
\end{Cor}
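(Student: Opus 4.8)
The plan is to produce the asserted map as a composite of two maps already constructed in the preceding sections, and to read off density directly from Theorem~\ref{thm:TangentialDeRham}. Since $X$ is a compact foliated space given by an $\RR^d$-action, Definition~\ref{def:foliatedRd} furnishes a fibration $N\to X\to T^d$ with $N$ and $X$ compact and $N$ totally disconnected, so Theorem~\ref{coinvariant} (with $\RR$ coefficients) applies and gives a natural isomorphism $\CH^0(N;\RR)_{\ZZ^d}\xrightarrow{\cong}\CH^d(X;\RR)$. Composing this with the comparison map $s\co\CH^d(X;\RR)\to H_\tau^d(X)$ of \eqref{eq:comparison} yields the desired natural map $\CH^0(N;\RR)_{\ZZ^d}\to H_\tau^d(X)$; naturality is automatic since each constituent is natural.

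For the density claim I would invoke Theorem~\ref{thm:TangentialDeRham} in the top degree $k=d$. That theorem identifies the composite $\bar s\co\CH^d(X;\RR)\to\bar H_\tau^d(X)$ with the map induced on Hausdorff quotients by the completion $i_*\co H^d_{\mathrm{group}}(\ZZ^d,C^\infty(N))\to H^d_{\mathrm{group}}(\ZZ^d,C(N))$, and it asserts that this has dense image. In top degree, group cohomology of $\ZZ^d$ is computed by coinvariants, $H^d_{\mathrm{group}}(\ZZ^d,M)\cong M_{\ZZ^d}$ (this is precisely the identification used in the proof of Theorem~\ref{coinvariant}), and under the identification $\CH^0(N;\RR)\cong C^\infty(N)$ noted in the proof of Theorem~\ref{thm:TangentialDeRham} the source of $\bar s$ becomes exactly $\CH^0(N;\RR)_{\ZZ^d}$ while the target is $\bar H_\tau^d(X)$. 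Hence the image of $\CH^0(N;\RR)_{\ZZ^d}\to\bar H_\tau^d(X)$ is dense, which is the assertion of Corollary~\ref{cor:topofH}.

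The corollary is thus essentially a formal repackaging of Theorems~\ref{coinvariant} and~\ref{thm:TangentialDeRham}, and the only points needing care — the closest thing here to an obstacle — are bookkeeping ones: first, checking that the coinvariants description of $H^d_{\mathrm{group}}(\ZZ^d,-)$ used in Theorem~\ref{coinvariant} agrees with the identification implicit in the bottom row of the diagram of Theorem~\ref{thm:TangentialDeRham}, so that the two theorems glue along $\CH^d(X;\RR)$; and second, confirming that the quotient topology on $C(N)_{\ZZ^d}$ inherited from $C(N)$ is the same as the subquotient topology $\bar H_\tau^d(X)$ carries from the tangential de Rham cochains, which is exactly what the final paragraph of the proof of Theorem~\ref{thm:TangentialDeRham} establishes. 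Granting these compatibilities, no further computation is required.
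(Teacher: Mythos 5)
Your proposal is correct and follows exactly the paper's own route: the map is the composite of the coinvariants isomorphism of Theorem~\ref{coinvariant} with the comparison map $s$, and density comes from the last assertion of Theorem~\ref{thm:TangentialDeRham}; the paper simply records this as immediate from those two theorems. The compatibility checks you flag (top-degree group cohomology as coinvariants, and the agreement of topologies) are precisely the points already handled in the footnote to Theorem~\ref{coinvariant} and the final paragraph of the proof of Theorem~\ref{thm:TangentialDeRham}.
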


\begin{proof} This is immediate from Theorems
\ref{thm:TangentialDeRham}  and \ref{coinvariant}.
\end{proof}

This corollary is important to us because of the relationship of
$\bar H_\tau ^d (X)$ to invariant transverse measures, as we will explain.


\section{Groupoids and Measures}

A compact foliated space $X$ has an associated {\emph{ holonomy groupoid}} $G(X)$, as in
\cite[p.\ 76, 77]{MS}.  
The \emph{holonomy group} of a point $x \in X$ is defined by 
\[
G_x^x = \{ h : hx = x\},
\]
which is, of course, the isotropy group of the action at the point
$x$. In general this group is countable. If the action of $\RR^d $ on
$X$ is free (\textbf{which we do not assume}), then each holonomy group will
be trivial.   Each point $x \in X$ lies on a unique leaf, which is the
orbit of the point $x$ under the action.  The leaf may be $\RR ^d$,
or some quotient torus, or something in between ($\RR^{d-k}\times T^k$).

If the foliated space is given by an $\RR^d $ action, as we
assume throughout (see Definition \ref{def:foliatedRd}), and if
the holonomy cover of each leaf is simply connected, then       
\[
G(X) \cong   \RR ^d \times X
\]
with unit space $X$, and associated map 
\[
\iota\co X \longrightarrow \RR ^d \times X \qquad \qquad \iota(x) =
(0,x),
\]
range and source maps given by 
\[
s(g,x) = x \qquad r(g,x) = gx ,
\]
and the inverse map given by 
\[
(g, x) ^{-1} = (-g, gx).
\]
Two elements $(g,y)$ and $(h,x)$ are multipliable if $y = hx$, and
then 
\[
(g, y) (h, x) = (g+ h, x) .
\]

Both $X$ and $G(X)$ are 
standard Borel spaces.  A \emph{transversal} $S$ is a Borel subset of $X$
that intersects each leaf in a countable (i.e., finite or countably
infinite) set. 
It is \emph{complete} if it intersects each leaf at least once.
Foliated spaces always have complete transversals.  Note that
transversals need not be    connected; for tilings they are
typically Cantor sets.  A transversal $N$ is \emph{open-regular} if
there is an open set $L \in \RR ^d$ and an isomorphism 
of foliated spaces of $L \times N$ onto an open subset of $X$ which
is the identity on $N$.  A transversal is \emph{regular} if it is
contained in an open regular transversal.  

An \emph{invariant transverse measure} on $X$ \cite[p.\ 82]{MS}
is a measure $\nu $ on the $\sigma $-ring  of Borel transversals $\SSS$ 
such that $\nu |_S $ is $\sigma $-finite for each $S \in \SSS $ and
$\nu |_S $ is invariant on  $G_S^S = \{(g,x) \in G(X) : x, gx \in S \}   $.
The key example for us will be the invariant transverse measure 
produced by a Ruelle-Sullivan current (\cite{RS}, though Ruelle-Sullivan worked exclusively with foliated manifolds).  It is important to note that 
not every foliated space has an invariant transverse measure, but the
hull of a tiling space does.  An 
invariant transverse measure $\nu $ is \emph{Radon} if
$\nu (N)$ is finite for each compact regular transversal $N$.
Basically all of the transverse measures we will deal with in this
paper are Radon.

A \emph{tangential measure} is a Borel assignment
$\ell \mapsto \lambda ^\ell $ of a (nonnegative, $\sigma$-finite)
measure $\lambda ^\ell $
to each leaf $\ell $. For example, if $X = L \times N$ then we may
simply take Lebesgue measure on each leaf $L \times \{n\} \cong \RR ^d$. 

Given an oriented tangential measure $\lambda $ and a
signed Radon invariant transverse measure $\nu $ on a compact foliated 
space $X$, we may define a new 
signed Radon measure $\mu $ on bounded Borel sets in $X$ by 
\[
\mu = \int \lambda\, d\nu .
\]
In particular, if $\sigma \in \Omega _\tau ^d(X)$ is a tangentially
smooth $d$-form and $o$ is an orientation, then $\sigma _1 = o\sigma $ is 
a tangentially smooth volume form. Restricting to a leaf defines a
signed measure with a $C^\infty $ density and hence a (signed)
tangential measure 
$\lambda _\sigma$.
Then $\int \lambda _\sigma \,d\nu $ is defined and its total mass gives a
real number. 

The integral can therefore be viewed as a linear functional 
\[
C_\nu \co \Omega _\tau ^d(X) \longrightarrow \RR
\]
where
\[
C_\nu (\sigma ) = \int _X \lambda _\sigma\,  d\nu .
\]
The fact that $\nu $ is invariant implies that $C_\nu $ is a closed
$d$-cycle and hence defines a homology class $[C_\nu ] \in H_d^\tau (X)$. 
This was first defined by Ruelle-Sullivan \cite{RS} and is called the
\emph{Ruelle-Sullivan current} associated with the invariant transverse
measure $\nu $.

Let $MT(X)$ denote the vector space of invariant Radon 
transverse measures on $X$.

\begin{Thm}[Riesz Representation Theorem {\cite[Theorem 4.27]{MS}}]
If $X$ is a compact oriented foliated space with leaf dimension $d$,
then the continuous linear 
functionals on $H_\tau ^d(X) $ can be identified as the invariant Radon
transverse measures.  In particular, the Ruelle-Sullivan map 
\[
C\co MT(X ) \longrightarrow \Hom_{\textup{cont}} (H_\tau^d (X) , \RR ) \cong H_d^\tau (X)
\]
which takes an invariant transverse measure $\nu $ to its Ruelle-Sullivan
current $C_\nu $ is an isomorphism of vector spaces.
\end{Thm}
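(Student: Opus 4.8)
The plan is to prove the Riesz Representation Theorem for compact oriented foliated spaces by reducing it to the classical Riesz Representation Theorem applied fibrewise, and then checking that the invariance/closedness conditions match up. The statement has two halves that need to be established: first, that continuous linear functionals on $H_\tau^d(X)$ are the same thing as invariant Radon transverse measures; and second, that the specific map $C$ sending $\nu$ to its Ruelle-Sullivan current $C_\nu$ realizes this identification and is a vector space isomorphism. Since the excerpt cites this as \cite[Theorem 4.27]{MS}, I would present it as a guided proof that makes the correspondence transparent rather than re-deriving everything from scratch.

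**First**, I would unwind what a continuous linear functional on $H_\tau^d(X)$ is. A class in $H_\tau^d(X)$ is represented by a tangentially smooth $d$-form $\sigma \in \Omega_\tau^d(X)$ modulo tangential exact forms. So a continuous linear functional $\phi$ on $H_\tau^d(X)$ is the same as a continuous linear functional on $\Omega_\tau^d(X)$ (with its natural Fréchet-type topology from the de Rham cochains) that vanishes on the image of $d_\tau \co \Omega_\tau^{d-1}(X) \to \Omega_\tau^d(X)$ — in other words, a closed tangential $d$-current. The content of the theorem is therefore: closed tangential $d$-currents $\leftrightarrow$ invariant Radon transverse measures. Next, I would recall that a tangential $d$-form, since $F$ is oriented and $d$-dimensional, can be written locally on a foliated chart $L \times N$ (with $L \cong \RR^d$) as $f(\ell,n)\, d\ell_1 \wedge \cdots \wedge d\ell_d$ with $f$ tangentially smooth; pairing against such a form and using a partition of unity, any tangential $d$-current decomposes into its behaviour on each leaf, which by the classical Riesz theorem on $\RR^d$ is integration against a (signed, $\sigma$-finite, Radon) measure $\mu^\ell$ on the leaf. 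This produces a tangential measure-like object $\ell \mapsto \mu^\ell$; disintegrating it along a complete transversal $S$ gives a measure on Borel transversals. The closedness condition $\phi \circ d_\tau = 0$ is exactly what forces, via Stokes' theorem along plaques, the resulting transverse measure to be \emph{invariant} under the holonomy groupoid $G_S^S$ — this is the step where one pushes a plaque along a leaf and the boundary term must vanish. Conversely, starting from an invariant Radon transverse measure $\nu$ and a choice of (oriented, Lebesgue-like) tangential measure $\lambda$, the formula $C_\nu(\sigma) = \int_X \lambda_\sigma \, d\nu$ from the preceding discussion gives a closed current, precisely because $\nu$ is invariant, as was already noted in the excerpt. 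These two constructions are mutually inverse, and both are manifestly $\RR$-linear, which gives the isomorphism of vector spaces.

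**The main obstacle** I anticipate is the rigorous bookkeeping of the disintegration: showing that an abstract continuous functional on $\Omega_\tau^d(X)$ really does assemble into a coherent tangential measure $\ell \mapsto \mu^\ell$ that is \emph{Borel} in the leaf variable, and that the local Riesz-theorem data glue across overlapping foliated charts in a way compatible with the transition maps (which are tangentially smooth diffeomorphisms in the $L$-direction and homeomorphisms in the $N$-direction). One must be careful that "Radon" on the transversal side corresponds exactly to the continuity of the functional in the Fréchet topology on the de Rham complex, and that the $\sigma$-finiteness conditions match. Equivalently — and this is probably the cleanest route given the framework already set up — I would instead lean on Corollary \ref{cor:topofH} and Theorem \ref{thm:TangentialDeRham} to identify $\bar H_\tau^d(X)$ with (a completion of) a group-cohomology/coinvariant object, reducing the whole question to the classical fact that invariant measures on $N$ for the $\ZZ^d$-action correspond to functionals on the coinvariants $C(N)_{\ZZ^d}$; but since the statement as quoted is attributed directly to \cite{MS}, the honest thing is to cite that reference for the full proof and use the paragraphs above to indicate why the correspondence is the natural one — closedness $\leftrightarrow$ invariance, and the pairing with $C_\nu$ being the explicit inverse map.
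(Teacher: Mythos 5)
The paper gives no proof of this statement beyond the attribution to \cite[Theorem 4.27]{MS}, and your proposal ends in the same place: you defer to that reference, and your sketch (continuous functionals on $H_\tau^d(X)$ as closed tangential $d$-currents, the local Riesz theorem on foliated charts plus disintegration over a complete transversal, closedness corresponding to invariance, and $\nu\mapsto C_\nu$ as the explicit inverse) faithfully reflects the argument actually carried out in the cited source, so this is correct and essentially the same approach. The only caution is that your suggested ``cleaner route'' through Theorem \ref{thm:TangentialDeRham} and Corollary \ref{cor:topofH} would prove the identification only for compact foliated spaces given by an $\RR^d$-action, whereas the theorem as stated concerns arbitrary compact oriented foliated spaces.
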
 \qed

\medskip

\begin{Cor} 
\label{cor:tilingtransvmeas}
If $X$ is a compact foliated space given by an
$\RR ^d$-action with associated fibration $N \to X \to T^d$,
then every invariant transverse measure $\nu$ on $X$
is determined by the restriction of its Ruelle-Sullivan current
to the image of $\CH^0(N; \RR)_{\ZZ^d}$, or equivalently,
by the associated $\ZZ^d$-invariant linear functional on
$C^\infty(N)$.
\end{Cor}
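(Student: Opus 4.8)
The plan is to obtain the statement as a formal consequence of the Riesz Representation Theorem together with Corollary~\ref{cor:topofH}. First I would recall that the Ruelle--Sullivan map identifies $MT(X)$ with $\Hom_{\textup{cont}}(H_\tau^d(X),\RR)$, so $\nu$ is completely determined by the continuous linear functional $C_\nu$ on $H_\tau^d(X)$. Being continuous, $C_\nu$ annihilates the closure of $\{0\}$ and hence factors through the Hausdorff quotient $\bar H_\tau^d(X)$; this is the identification $\Hom_{\textup{cont}}(H_\tau^d(X),\RR)\cong\Hom_{\textup{cont}}(\bar H_\tau^d(X),\RR)$ already recorded in Section~3. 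Thus it suffices to show that $C_\nu$, regarded as a functional on $\bar H_\tau^d(X)$, is determined by its values on the image of $\CH^0(N;\RR)_{\ZZ^d}$.

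For that I would simply invoke Corollary~\ref{cor:topofH}: the natural map $\CH^0(N;\RR)_{\ZZ^d}\to H_\tau^d(X)\to\bar H_\tau^d(X)$ has dense image. A continuous linear functional on a topological vector space is determined by its restriction to any dense linear subspace, so $C_\nu$ is determined by the composite $\CH^0(N;\RR)_{\ZZ^d}\to H_\tau^d(X)\xrightarrow{\,\cap\,C_\nu\,}\RR$, i.e.\ by the restriction of the Ruelle--Sullivan current to the image of $\CH^0(N;\RR)_{\ZZ^d}$. Combined with the first paragraph, this gives the first assertion.

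The ``equivalently'' clause is then a matter of unwinding the identifications in Theorems~\ref{thm:TangentialDeRham} and~\ref{coinvariant}. Under the isomorphism $\CH^0(N;\RR)\cong C^\infty(N)$ of $\ZZ^d$-modules, the group $\CH^0(N;\RR)_{\ZZ^d}$ is by definition the quotient of $C^\infty(N)$ by the linear span of the elements $g\cdot f-f$; a linear functional on this quotient is precisely a linear functional on $C^\infty(N)$ vanishing on every $g\cdot f-f$, that is, a $\ZZ^d$-invariant linear functional on $C^\infty(N)$. Hence specifying the restriction of $C_\nu$ to the image of the coinvariants is exactly the same data as specifying such an invariant functional, as claimed.

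The one point that needs care — and the step I expect to be the only real obstacle — is the compatibility of topologies: the Riesz theorem uses the quotient topology on $H_\tau^d(X)$ coming from the tangential de Rham cochains, whereas Corollary~\ref{cor:topofH} (via Theorem~\ref{thm:TangentialDeRham}) phrases density in terms of convergence of group cocycles valued in $C(N)$. These topologies are matched in the proof of Theorem~\ref{thm:TangentialDeRham} by restricting to open sets of the form $N\times U$, where the tangential de Rham complex locally looks like $C(N)\otimes\Omega^*(U)$; once this is invoked, the density of $C^\infty(N)\hookrightarrow C(N)$ transports to the density statement we use, and nothing further is required.
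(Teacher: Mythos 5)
Your proposal is correct and follows essentially the same route as the paper: the paper's proof is simply ``Apply Corollary~\ref{cor:topofH},'' and your argument spells out exactly what that application means, namely the Riesz identification of $\nu$ with the continuous functional $C_\nu$ factoring through $\bar H_\tau^d(X)$, plus density of the image of $\CH^0(N;\RR)_{\ZZ^d}$ and the standard unwinding of coinvariants into $\ZZ^d$-invariant functionals on $C^\infty(N)$. The topology-matching point you flag is indeed the one delicate step, and it is settled in the proof of Theorem~\ref{thm:TangentialDeRham} just as you describe.
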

\begin{proof} Apply Corollary \ref{cor:topofH}.
\end{proof}  

\medskip

\begin{Def} Let $K = \ZZ$ or $\RR $.
The \emph{cohomology trace}
\[
\tc\co \CH ^d(X; K ) \longrightarrow \RR
\]
associated to an invariant transverse measure $\nu $ 
is defined to be the composition
\[
\CH ^d(X ;K) \xrightarrow{\bar s} \bar H_\tau^d(X) 
\xrightarrow{\cap_\nu} \RR.
\]
\label{def:RS}
\end{Def}

\begin{Rem}
One case of special interest is when the foliation is 
\emph{uniquely ergodic}, or in other words there is a 
unique (up to scaling) invariant transverse probability measure $\nu$.
In the tiling space situation of Corollary \ref{cor:tilingtransvmeas},
that means that $N$ has a unique $\ZZ^d$-invariant probability
measure.  This situation was studied in \cite{MR3394105},
where in the context of Delone multisets with finite local
complexity, it was shown to be equivalent to
\emph{uniform cluster frequencies}. In the uniquely ergodic
case, since $MT(X)$ is one-dimensional, so is the Hausdorff
quotient $\bar H_\tau^d(X)$ of
$H_\tau^d(X)\cong C(N)_{\ZZ^d}$, and $C^\infty(N)$ must
surject onto $\bar H_\tau^d(X)$.
\end{Rem}


\section {$K$-theory review}

$K$-theory for $C^*$-algebras plays a central role in this story;
in this section we review the properties 
that we will need (see Blackadar's book \cite{B} for a
good reference). 

The group $K_0(A)$ is defined for  any $C^*$-algebra $A$. If $A$ is
unital then we take the union of all 
of the self-adjoint projections $p$ living in finite-dimensional
matrix rings over $A$, form the free abelian 
group on this set, and then quotient out by the subgroup generated
by setting $[p] = [q]$ if $p$ and $q$ are 
unitarily equivalent, $[p \oplus 0] = [p]$, and
$[p \oplus q] = [p] + [q]$. If $A$ is non-unital, then we form
the unitization $A^+$, which is a unital algebra containing
$A$ as an ideal of codimension $1$, so that for example
$C_0(X)^+ \cong C(X^+)$, when $X$ is a locally 
compact space with the one-point
compactification $X^+$.  We define
\[
K_0(A) = \ker [ K_0(A^+)\longrightarrow K_0(A^+/A) = K_0(\CC) =  \ZZ ]\,;
\]
see \cite[Chapter 5]{B} for more details.  It is easy to
see that a $C^*$-map $f\co A \to B$ induces a map 
$f_*\co K_0(A) \to K_0(B)$ and that $K_0$ is a covariant functor
from $C^*$-algebras to abelian groups.  We can 
define $K_1(A)$ via an analogous procedure using unitaries
instead of projections, or else just define 
\[
K_j(A) = 
K_0(A \otimes C_0(\RR ^j ))  
\]
for all $j$ and note
(fortunately) that Bott periodicity holds, so that 
\[
K_j(A) \cong K_{j+2}(A) .
\]
If $A$ is commutative and unital and hence of the
form $A \cong C(X)$  then  $K_j(A) \cong K^{-j}(X)$, which 
is the classical $K$-theory for compact
topological spaces. (For $K$-theory for compact and locally compact spaces, 
see Atiyah \cite{Atiyah}.) 

$K$-theory and cohomology are related via the classical Chern character map 
\[
\ch \co K^{-d}(X) \longrightarrow \CH^{**}(X ; \QQ )\,,
\]
where  $\CH^{**}(X ; \QQ)$ denotes the sum of the even or
odd rational \v Cech cohomology groups of $X$, matching the parity of $d$.
This was defined by Chern initially using differential forms but for us
the simplest way is via
characteristic classes as in Karoubi \cite[pp.\ 280--284]{Karoubi}.
The Chern character becomes an isomorphism after tensoring with $\QQ$:
\[
\ch \co K^{-d}(X)\otimes \QQ  \xrightarrow{\cong}
\CH^{**}(X ; \QQ),
\]
and similarly, of course, for real coefficients. 

Projecting 
to  $\CH^{d}(X ; \RR )$ gives a map that we denote 
\[
\ch_d  \co K^{-d}(X) \longrightarrow \CH^d (X ; \RR ). 
\]

\begin{Pro} Suppose that $X$ is a compact foliated space
given by an $\RR ^d $-action, and assume the holonomy cover of each
leaf is simply connected {\lp}automatic if the action is free{\rp}. Then    
\[
C^*(G(X)) \cong C(X) \rtimes \RR ^d .
\]
\end{Pro}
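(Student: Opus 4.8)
The plan is to recognize $G(X)$ as the transformation groupoid of the $\RR^d$-action on $X$, and then to invoke the standard identification of the $C^*$-algebra of a transformation groupoid with a crossed product.

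First I would recall the description of $G(X)$ given earlier in this section: under the stated hypotheses the holonomy groupoid is $G(X)\cong\RR^d\times X$, with unit space $X$ embedded via $x\mapsto(0,x)$, source and range $s(g,x)=x$ and $r(g,x)=gx$, multiplication $(g,y)(h,x)=(g+h,x)$ whenever $y=hx$, and inversion $(g,x)^{-1}=(-g,gx)$. This is exactly the action groupoid $\RR^d\ltimes X$. The simple-connectedness assumption on the holonomy covers is precisely what makes the natural map $\RR^d\times X\to G(X)$, sending $(g,x)$ to the holonomy class of a leafwise path from $x$ to $gx$, a homeomorphism of topological groupoids rather than merely a surjection with discrete fibres.

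Next I would check that the analytic data entering the definition of $C^*(G(X))$ transport to the canonical analytic data on $\RR^d\ltimes X$. The $s$-fibres of $G(X)$ are the copies $\RR^d\times\{x\}$, and the tangential (leafwise) measure used in \cite{MS} to build the convolution algebra is, on the leaf through $x$, the push-forward of Lebesgue measure under the orbit parametrization $g\mapsto gx$; under the homeomorphism above this is exactly Haar measure $dg$ on $\RR^d$ in each $s$-fibre. Thus $G(X)$ with its Haar system \emph{is} $\RR^d\ltimes X$ with its standard Haar system, and the groupoid convolution on $C_c(G(X))$,
\[
(f_1\ast f_2)(g,x)=\int_{\RR^d} f_1(g-h,\,hx)\,f_2(h,x)\,dh,
\]
becomes, under the identification $C_c(\RR^d\times X)\cong C_c(\RR^d,C(X))$ and with the evident convention for the action of $\RR^d$ on $C(X)$, the convolution product of the dense $\ast$-subalgebra $C_c(\RR^d,C(X))$ of the crossed product $C(X)\rtimes\RR^d$ (and the $\ast$-operations match as well).

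Finally I would apply the classical theorem that for a locally compact group $H$ acting on a locally compact Hausdorff space $Y$ one has $C^*(H\ltimes Y)\cong C_0(Y)\rtimes H$, the two $C^*$-norms being the completions of the matching dense $\ast$-algebras; with $H=\RR^d$ and $Y=X$ compact this yields $C^*(G(X))\cong C(X)\rtimes\RR^d$. Since $\RR^d$ is amenable, the groupoid $\RR^d\ltimes X$ is amenable, so its full and reduced $C^*$-algebras coincide and the statement is insensitive to which completion is meant. The one step demanding genuine care — the main obstacle — is the bookkeeping in the middle: confirming that the homeomorphism $G(X)\cong\RR^d\times X$ is compatible with the Haar system, orientation, and involution used to define $C^*(G(X))$, so that the two sides really are the same $C^*$-completion; granting that, the rest is the textbook transformation-groupoid computation.
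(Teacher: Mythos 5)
Your proposal is correct and follows exactly the route the paper takes: the paper's entire proof is the one-line observation that $G(X)\cong\RR^d\times X$ as topological groupoids, from which the identification $C^*(G(X))\cong C(X)\rtimes\RR^d$ is immediate via the standard transformation-groupoid/crossed-product correspondence. Your version simply spells out the bookkeeping (Haar system, convolution, involution, amenability) that the paper leaves implicit.
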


\begin{proof} This is immediate since
$G(X) \cong \RR ^d \times X$ as topological groupoids. 
\end{proof}

\begin{Def} (Bellissard) The {\emph{non-commutative Brillouin zone}} associated 
to a compact foliated space given by an $\RR ^d$-action is 
\[
\zone =   C^*(G(X)) \cong C(X) \rtimes \RR ^d .
\]
\end{Def} 
We use the letter $\zone $ to honor L\'eon Brillouin, who introduced
this concept for crystals.  When $X = \hull$   arises as the hull of a tiling, then we write $\zone _T$ for the associated non-commutative Brillouin zone.

It follows immediately that
\[
K_0(\zone ) = K_0( C^*(G(X))) \cong K_0(C(X) \rtimes \RR ^d ).
\]
Connes' Thom isomorphism theorem  
\cite{CT} implies that there is a canonical isomorphism
\[
\varphi : K_{ d}(C(X))       \xrightarrow{\cong}    
K_0(C(X) \rtimes \RR ^d )  \]
and of course 
\[
K_{d}(C(X))  \cong  K^{-d}(X)
\]
for any compact space $X$.

Putting these isomorphisms together yields 
\begin{Pro} Suppose that $X$ is a compact foliated space
given by an $\RR ^d $-action, and assume the holonomy cover of each
leaf is simply connected. Then there is a natural 
sequence of 
isomorphisms
\[
K^{-d}(X) \cong K_d(C(X))   \cong
K_0(C(X) \rtimes \RR ^d) \cong K_0(C^*(G(X)) \equiv  K_0(\zone ).
\]
We let $\chi : K^{-d}(X) \to K_0(\zone )$ denote the composite
isomorphism.
\end{Pro}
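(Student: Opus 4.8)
The plan is to prove the statement by concatenating four isomorphisms, three of which have already been recalled or established in this section and one of which is a definition. Reading left to right, these are: (i) the Gelfand/Swan-type identification $K^{-d}(X)\cong K_d(C(X))$, valid for any compact space $X$; (ii) the Connes--Thom isomorphism $\varphi\co K_d(C(X))\xrightarrow{\cong}K_0\bigl(C(X)\rtimes\RR^d\bigr)$; (iii) the isomorphism $K_0\bigl(C(X)\rtimes\RR^d\bigr)\cong K_0\bigl(C^*(G(X))\bigr)$ induced on $K_0$ by the $*$-isomorphism $C^*(G(X))\cong C(X)\rtimes\RR^d$ of the preceding Proposition; and (iv) the definitional equality $K_0\bigl(C^*(G(X))\bigr)=K_0(\zone)$. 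Composing all four yields the asserted $\chi\co K^{-d}(X)\to K_0(\zone)$.

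For (i) I would cite Atiyah \cite{Atiyah}: the Serre--Swan correspondence between vector bundles over $X$ and finitely generated projective $C(X)$-modules gives a natural identification $K^0(X)\cong K_0(C(X))$, and suspending $d$ times --- using $K^{-d}(X)=K^0(X\times\RR^d)$ on the topological side, $K_d(A)=K_0(A\otimes C_0(\RR^d))$ on the $C^*$ side, and $C(X)\otimes C_0(\RR^d)\cong C_0(\RR^d\times X)$ --- produces $K^{-d}(X)\cong K_d(C(X))$. For (ii) I would invoke Connes' theorem \cite{CT}: for a single $\RR$-action $\alpha$ on a $C^*$-algebra $A$ there is a natural isomorphism $K_i(A)\xrightarrow{\cong}K_{i+1}(A\rtimes_\alpha\RR)$; iterating this over the $d$ coordinate flows generating the $\RR^d$-action gives $K_i(C(X))\cong K_{i+d}\bigl(C(X)\rtimes\RR^d\bigr)$, and taking $i=-d$ together with Bott periodicity $K_{-d}\cong K_d$ gives exactly the $\varphi$ recorded above, landing in $K_0$. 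For (iii), the preceding Proposition already supplies a canonical $*$-isomorphism $C^*(G(X))\cong C(X)\rtimes\RR^d$ --- arising from the topological-groupoid identification $G(X)\cong\RR^d\times X$ that holds under our standing hypothesis that the holonomy cover of each leaf is simply connected --- and applying the functor $K_0$ gives (iii); since $\RR^d$ is amenable there is no ambiguity between full and reduced crossed products. Item (iv) is just the definition of the noncommutative Brillouin zone.

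There is no genuine obstacle here: the statement is an assembly of ingredients that are all stated earlier. The one point deserving a sentence of care is the degree bookkeeping in step (ii) --- checking that the $d$-fold Connes--Thom shift composed with Bott periodicity lands in $K_0$ of the crossed product rather than in some $K_j$ with $j\not\equiv 0 \pmod 2$ --- and, more pedantically, confirming that each of the four constituent maps is natural (the Gelfand identification in $X$, the Connes--Thom map with respect to equivariant $*$-homomorphisms, and the groupoid isomorphism, which is canonical once the hypothesis is in force), so that the composite $\chi$ is well defined and functorial, which is all that is claimed.
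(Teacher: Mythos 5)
Your proposal is correct and follows essentially the same route as the paper, which likewise obtains $\chi$ by simply concatenating the previously recorded identifications: the Swan/Gelfand isomorphism $K^{-d}(X)\cong K_d(C(X))$, the (iterated) Connes--Thom isomorphism $\varphi$, the $*$-isomorphism $C^*(G(X))\cong C(X)\rtimes\RR^d$ from the preceding Proposition, and the definition of $\zone$. Your extra remarks on degree bookkeeping and amenability are fine but not needed beyond what the paper already assembles.
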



\section{Traces on Groupoid $C^*$-algebras}

Suppose given an invariant transverse measure $\nu $  and a tangential measure $\lambda $ on a foliated space $X$.
Then $\mu = \int \lambda \, d\nu $ 
is a measure on $X$ and turns $G(X)$ into a \emph{measured groupoid}
(see \cite[p. 142]{MS}) by defining a measure $\tilde\mu $ on $G(X)$ by 
\[
\tilde \mu (E) = \int _X  \lambda ^x (E \cap G(X))^x \, d\mu (x) .
\]
Form the Hilbert space with associated direct integral decomposition
\[
L^2(G(X), \tilde \mu ) =
\int^\oplus L^2(\ell , \lambda ^\ell ) \,d\tilde \mu.
\]
Then \cite[p.\ 142]{MS} we may form the $*$-algebra of integrable
functions $L^1(G(X), \tilde\mu ) $ 
with natural $*$-representation
\[
\pi \co L^1(G(X), \tilde\mu )
\longrightarrow \mathcal{B} (L^2(G(X), \tilde \mu )) .
\]
Define  the von Neumann algebra  $W^*(G(X), \tilde \mu )$ to be the weak
closure of  $\pi (L^1(G(X), \tilde\mu ))$   in 
$\mathcal{B} (L^2(G(   X), \tilde \mu ))$.   
This is a Type II von Neumann algebra and will be a factor
if $C^*(G(X))$ is simple, which is the case if and only 
if the action of $\RR^d $ on $X$ (or if the lattice $\Lambda$ on $N$,
in the situation of Definition \ref{def:foliatedRd})
is minimal.%
\footnote{
    This fact was originally called the
    Effros-Hahn Conjecture and was proven in full generality in
    \cite{GR}.
}
If the action of $\RR^d $ on $X$ is only topologically
transitive (i.e., there is a dense orbit, but not every orbit need be
dense), then  $C^*(G(X))$ will be primitive and $W^*(G(X), \tilde\mu)$
will still be a factor if the transverse measure has full support.

There is a natural map 
\[
C^*(G( X))  \longrightarrow W^*(G( X), \tilde \mu ).
\]      
For any leaf $\ell $  we have the local representation
\[
\pi _\ell \co C^*(G( X)) \longrightarrow \mathcal B (L^2(\ell , \lambda ^\ell )).
\]     
Write $m^x = \pi _\ell(m) $ for $x \in \ell $.  We wish to define
\[
\phi _\nu \co C^*(G(X)) \longrightarrow \RR.
\]

This construction is
described in detail in  \cite[pp.\ 149--154]{MS}.  Let $m \in
C^*(G( X))^+ $.   As an  element of the von Neumann algebra, think of
$m = \{ m^x \}$ for $m^x \in \mathcal B (L^2(\ell, \lambda ^\ell )) $.
Then $m^x$ is  a positive operator on $L^2(\ell, \lambda ^\ell )$ and
it is \emph{locally traceable} in the sense of \cite[p.\ 18]{MS}.
Define a measure $\lambda _m (\ell ) $ on $\ell $ by the formula
\[
\int _\ell f \, d\lambda _m(\ell ) =  \Tr\,(f^{1/2} m^x f^{1/2} )
\qquad \forall  f \in L^2(\ell, \lambda ^\ell )
\]
for every positive $f$ of bounded support.  (Note that this is the same
as the construction of D. Lenz, N. Peyerimhoff, and I. Veseli\'c
in \cite[Theorem 4.2]{LPV}.)

Finally, define the trace itself by 
\[
\phi _\nu (m) = \int _\ell  \lambda _m(\ell )\, d\nu(\ell ).  
\]

If the measure $ \mu =   \lambda _m  d\nu  $ is finite on $ X$ and
Radon on $G( X)$ then for any $g \in C_\tau ^\infty (G(\Omega_T))$
with compact support, we have 
\[
\phi _\nu (g^*g) <   +\infty.  
\]   
The trace is thus densely defined. It is lower semi-continuous
\cite[pp.\ 149--154]{MS}.
              
In our situation where $G(X) \cong X  \times \RR ^d $ as topological
groupoids, we recall first that if $\lambda $ is a tangential measure
on $X$ and $\nu $ is an invariant transverse measure then 
\[
\mu = \int _\ell \lambda _\ell \,d\nu   
\]
is a Radon measure on $X $ which is $\RR ^d$-invariant, and
$\tilde \mu $ is a Radon measure on $G(X)$,   so that
there is a natural trace 
\[
\tau _\mu (f) = \int _{x \in \Omega _T} f(x, 0) \int _\ell \lambda _\ell \,d\nu.
\]

To fit this into our framework, we would first represent
$C(X) \rtimes \RR ^d $ in the von Neumann algebra
$L^\infty(X, \mu) \rtimes \RR ^d$ associated to 
it via the direct integral procedure. Then the function
$f$ is sent to the family $\{f^x\}$, where $f^x $ is a 
bounded operator on $L^2$ of the associated leaf $\ell _x$.
This produces a local trace $\lambda _f^x $ on $\ell _x$, 
the leaf of $x$.  Then we would define 
\[
\phi _\nu\co C(X) \rtimes \RR ^d \longrightarrow \RR
\]
by
\[
\phi _\nu (f) = \int \lambda _f\, d\nu 
\] 
and then 
\[
\tau _\mu (f) =   \phi _\nu (f). 
\]
by \cite[Prop.\ 6.25 and the discussion at the bottom of page 149]{MS}.


\section{The Partial Chern character and $K$-theory trace}

There is a \emph{partial Chern character}
\[
c\co K_0(C^*(G(X)) ) \longrightarrow \bar H_\tau ^d (X)
\]
defined in \cite[p. 161]{MS}  as follows.
Suppose that $[u] \in K_0(C^*(G(X)) )$ is represented by $[e] - [f]$,
where $e, f \in M_n(C^*(G(X)^+))$ with common image in 
$M_n(\CC )$.
Let $\nu $ denote a positive Radon invariant transverse measure 
on $X$  and form the corresponding trace $\phi _\nu $ on $C^*(G(X))$.
Extend  $\phi _\nu $ to $\phi _\nu ^n =    \phi _\nu \otimes \Tr $
on $M_n (C^*(G(X)))$.    Then $c[u] \in \bar H_\tau ^d (X) $ is
the cohomology class 
of the tangentially smooth $d$-form $\omega _u$ which (after
identifying $d$-currents with Radon invariant transverse measures), is 
given by%
\footnote{
    Here is the detail:    We start by noting that    
    \[
    K_0(C^*(G(X)) ) \cong K_0(C^*(G_N^N)) 
    \]
    and so without loss of generality we may assume that $e$ and $f$ are
    in  $M_n(C^*(G_N^N)^+)$. 
    Skipping some analysis (see \cite[p.\ 162]{MS}), we may assume
    that any element in $C^*(G_N^N)$ may be represented there by a
    kernel operator where the 
    kernel is continuous and has compact support. Further, the kernel,
    when extended to $G(X)$, is tangentially smooth. 
    Following through with a little more analysis, we see that we may
    express the action of the trace on any element $b \in C^*(G(X))$ as 
    \[
    \phi _\nu (b) = \int \omega _b\, d\nu 
    \]       
    We see from this analysis that the partial Chern character is given by 
    \[
    c[u] = [ \omega _{e - f}  ]
    \]
    and hence 
    \[
    \phi _\nu (e - f) = \int \omega _{e-f}\, d\nu = \int c[u]\, d\nu .
    \]
}
\begin{equation}\label{E:pchern}
\omega _u(\nu ) =  \phi _\nu ^n (e - f)
\end{equation}
where $\phi _\nu ^n $ is the trace $\phi _\nu  $ on  $C^*(G(X)) $
associated to the invariant transverse  measure.  

It is now easy to describe the $K$-theory trace.  
\begin{Def} 
Given an invariant
transverse measure $\nu$, the {\emph{$K$-theory trace}}, 
denoted 
\[
\tk \co K_0(G(X))  \longrightarrow \RR 
\]
is   the composition
\[
K_0(G(X)) \xrightarrow{c}
\bar H_\tau ^d (X) \xrightarrow{\cap C_\nu } \RR.
\]
If $X$ is a compact foliated space given by an
$\RR ^d$-action with invariant transverse measure $\nu $,
then we may write 
\[
\tk\co K_0(\zone ) \equiv K_0(G(X))  \longrightarrow \RR. 
\]
\end{Def}


\section{Uniting the Traces}
\label{sec:uniting}
The goal of this paper, as explained in the introduction, is to
demonstrate that the $K$-theory and cohomology approaches to the traces 
are related, and to show that they are equivalent in 
low dimensions. We can now state our main result. 
\begin{Thm}
\label{thm:main}
Suppose that $X$ is a compact foliated space given by an
$\RR ^d$-action with invariant transverse measure $\nu $,
and the holonomy cover of each leaf is simply connected. Then:
\begin{itemize}
\item 
The diagram 
\begin{equation}
\label{eq:main1}
\xymatrix{
K_0(\zone)\ar@2{-}[r]& K_0(C^*(G(X))) \ar[r]^(.6){\chi ^{-1}}_(.6)\cong \ar[d]^c
& K^{-d}(X) \ar[r]^{\ch_d} & \CH^d(X; \RR )\ar[d]^s\\
&H_\tau^d(X)\ar[rr]^{\id}&&H_\tau^d(X)
}
\end{equation}
commutes. 
\medskip

\item Bloch Theorem: For every invariant transverse measure $\nu$, the diagram 
\begin{equation}
\label{eq:main2}
\xymatrix@C+6ex{
K_0(\zone)\ar[r]^(.45){(\ch_d)\circ(\chi^{-1})} \ar[d]^{\tk}
& \CH^d(X; \RR)\ar[d]^{\tc}\\
\RR \ar[r]^{\id} & \RR
}
\end{equation}
commutes. 
\end{itemize}
\end{Thm}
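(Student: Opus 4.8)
The plan is to reduce the second diagram \eqref{eq:main2} to the first one \eqref{eq:main1} together with the definitions of the two traces, so the real content is entirely in the commutativity of \eqref{eq:main1}. First I would observe that the $K$-theory trace $\tk$ is by definition the composite $K_0(\zone)=K_0(C^*(G(X)))\xrightarrow{c}\bar H_\tau^d(X)\xrightarrow{\cap C_\nu}\RR$, and the cohomology trace $\tc$ is by definition $\CH^d(X;\RR)\xrightarrow{\bar s}\bar H_\tau^d(X)\xrightarrow{\cap_\nu}\RR$. The cap product with $C_\nu$ is the \emph{same} pairing $\bar H_\tau^d(X)\to\RR$ in both cases (the Ruelle--Sullivan current of $\nu$, via the Riesz theorem of Section~5). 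So \eqref{eq:main2} commutes as soon as I know that the two maps into $\bar H_\tau^d(X)$ agree, i.e.\ that $c = \bar s\circ(\ch_d)\circ(\chi^{-1})$ as maps $K_0(\zone)\to\bar H_\tau^d(X)$. But that is exactly the outer commuting square of \eqref{eq:main1} after postcomposing with the Hausdorffization $H_\tau^d(X)\to\bar H_\tau^d(X)$: the left edge $c$ already lands in $\bar H_\tau^d(X)$, and the right edge is $s$ followed by that quotient, which is $\bar s$. Hence the second bullet is a formal consequence of the first bullet plus the two definitions, and I would present it in one short paragraph.

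So the work is to prove that \eqref{eq:main1} commutes. Here I would argue on generators. Take $[u]\in K_0(C^*(G(X)))$ represented by $[e]-[f]$ with $e,f\in M_n(C^*(G(X))^+)$ having common image in $M_n(\CC)$, as in Section~8. Going down the left edge, $c[u]$ is represented by the tangentially smooth $d$-form $\omega_u$ determined by $\omega_u(\nu)=\phi_\nu^n(e-f)$ for every invariant transverse measure $\nu$ (equation \eqref{E:pchern}). Going across the top, $\chi^{-1}[u]\in K^{-d}(X)$ is a topological $K$-theory class, $\ch_d$ of it is an honest \v Cech class in $\CH^d(X;\RR)$, and $s$ sends it to a tangential de Rham class. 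What I need is that these two tangential cohomology classes coincide in $H_\tau^d(X)$ (equivalently in $\bar H_\tau^d(X)$). The natural route is to pair both sides against an arbitrary continuous linear functional on $H_\tau^d(X)$; by the Riesz theorem such functionals are precisely the invariant Radon transverse measures $\nu$, and a class in $H_\tau^d(X)$ is determined modulo the non-Hausdorff part by all these pairings — which is exactly why the identification only pins things down in $\bar H_\tau^d(X)$, but that suffices for \eqref{eq:main2}. Pairing the left side against $\nu$ gives $\langle c[u], C_\nu\rangle = \phi_\nu^n(e-f)$ by construction. Pairing the right side against $\nu$, I would invoke the measured index theorem for foliated spaces of Connes (as developed in \cite{MS}, Chapters 6--8): the value of the Ruelle--Sullivan current on the image under $s\circ\ch_d\circ\chi^{-1}$ of a $K$-theory class equals the $\nu$-trace of the corresponding projection, because Connes' Thom isomorphism $\chi$ is compatible, at the level of traces, with the classical Chern character — the pairing of $\ch_d$ with a homology class computes the same real number as applying the measured trace $\phi_\nu$ after transporting the class through $\chi$. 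Concretely this is the statement that the diagram relating $\tau_\mu$ on $C(X)\rtimes\RR^d$ to the topological index on $K^{-d}(X)$ commutes, which is Connes' index theorem in the form recorded in \cite{MS}; so $\langle s\,\ch_d\,\chi^{-1}[u], C_\nu\rangle = \phi_\nu^n(e-f)$ as well. Since the two pairings agree for all $\nu$, and continuous functionals separate points of $\bar H_\tau^d(X)$, the images agree there, giving commutativity of \eqref{eq:main1} (after Hausdorffization, which is all that is asserted since the bottom row is $\id$ on $H_\tau^d(X)$ but the only use is through $\bar H_\tau^d(X)$).

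The main obstacle is the middle step: establishing that $s\circ\ch_d\circ\chi^{-1}$ and the partial Chern character $c$ agree after pairing with Ruelle--Sullivan currents, i.e.\ that Connes' Thom isomorphism intertwines the topological Chern character with the analytically defined partial Chern character. Everything else is bookkeeping with definitions. I would handle this by citing the measured foliation index theorem and the naturality of the partial Chern character from \cite{MS}, and by using Theorem~\ref{thm:TangentialDeRham} to know that $s$ is, in our $\RR^d$-action setting, just the completion map $C^\infty(N)\hookrightarrow C(N)$ on group cohomology, so that the comparison can be checked on the transversal groupoid $G_N^N$ where $C^*(G_N^N)$ is an AF-like algebra built from locally constant functions and the Chern character is computable by hand from kernels, matching the formula $\phi_\nu(b)=\int\omega_b\,d\nu$ of the footnote in Section~8. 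If one wants \eqref{eq:main1} literally on the nose in $H_\tau^d(X)$ rather than in $\bar H_\tau^d(X)$, an extra argument would be needed, but for the Bloch theorem \eqref{eq:main2} the Hausdorff version is exactly what is used, since the trace factors through $\bar H_\tau^d(X)$.
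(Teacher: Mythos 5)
Your proposal is correct and follows essentially the same route as the paper: the second diagram is reduced to the first via the definitions $\tk=(\cap C_\nu)\circ c$ and $\tc=(\cap C_\nu)\circ s$, and the first diagram is the statement that Connes' analytic index ($c\circ\chi$, via the Thom isomorphism) agrees with the topological index ($s\circ\ch_d$), which the paper likewise settles by citing Connes' foliation index theorem \cite[Theorem 9]{C-CR} as recorded in \cite{MS}. Your variant of pairing against every invariant transverse measure and invoking Riesz plus separation of points in $\bar H_\tau^d(X)$ is just an unwinding of that same equality $\inde_a=\inde_t\in\bar H_\tau^d(X)$, and your caveat that commutativity is really obtained only after Hausdorffization matches the paper, whose partial Chern character is in any case defined with values in $\bar H_\tau^d(X)$.
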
      

\begin{proof}
We gratefully acknowledge    Kaminker-Putnam \cite[Prop. 2.4]{KamP} for putting us on the right track for this 
theorem. 

Starting at $K^{-d}(X)$ and moving left one obtains  
composition
\[
K_d(C(X))  \xrightarrow{\varphi }        K_0 (C(X) \rtimes \RR ^d)    \xrightarrow{\cong}    K_0(C^*(G(X)))     
\xrightarrow{c}    \bar H_\tau ^d(X) 
\]
and this
is the abstract analytic index map $\index _a$ of A.\ Connes as described in
\cite{C-CR}. 
In the other direction, the composition 
\[
K_d(C(X))  \cong   K^{-d}(X)  \xrightarrow{\ch_d }
\CH ^d(X ; \RR )  \xrightarrow{s}   \bar H_\tau ^d(X) 
\]
is the abstract topological index $\inde_t$.     Connes shows
\cite[Theorem 9]{C-CR} that 
\[
\inde_a = \inde_t   \in   \bar H_\tau ^d(X)   .
\]
This is an early version of the 
abstract foliation index theorem of Connes and Skandalis.  

The commutativity of the second diagram follows at once
from applying the definitions of the traces. We expand the diagram
\begin{equation}
\xymatrix{
K_0(\zone)\ar@2{-}[r]& K_0(C^*(G(X))) \ar[r]^(.6){\chi ^{-1}}_(.6)\cong \ar[d]^c
& K^{-d}(X) \ar[r]^{\ch_d} & \CH^d(X; \RR)\ar[d]^s\\
&H_\tau ^d(X)\ar[rr]^{\id}\ar[d]^{\cap C_\nu}&&H_\tau ^d(X)\ar[d]^{\cap C_\nu}\\
&\RR\ar[rr]^{\id}&&\RR}
\label{eq:Chdiagram}
\end{equation}
and then observe that 
\[
\tk = (\cap C_\nu ) \circ c
\]
and 
\[
\tc = (\cap C_\nu ) \circ s 
\]
This, then, is essentially a special case of the Index Theorem 
for foliated spaces \cite{MS}.
\end{proof}

\medskip

Now we apply this result to tilings.  
Given a tiling $T$, we denote its continuous hull by $\Omega _T$.
Suppose that for any $R > 0$ 
that there are, up to translation, only finitely many patches in $T$
(i.e., subsets of $T$) whose 
union has diameter less than $R$. Then by \cite[Lemma 2]{RW},
$\Omega _T$ is compact. This condition is called the
\emph{finite pattern condition} or \emph{finite local complexity}.    

\smallskip
We assume the following conditions:
\begin{enumerate}
\item $T$ satisfies the finite pattern condition
(i.e., finite local complexity);
\item $T$ has only finitely many tiles up to translations,
meeting full-face to full-face,
and each of these tiles can appear in only finitely many orientations.
\end{enumerate}
Then by Sadun-Williams \cite[Theorem 1]{SW}, under these conditions, $\Omega _T$
is homeomorphic to the total space of a fibre bundle of the form 
\[
N \longrightarrow N \times _{\ZZ ^d} \RR ^d \longrightarrow T^d,
\]
obtained by suspending a ${\ZZ ^d}$-action on a totally disconnected
space $N$. The proof of this result depends on \cite[Lemma 4]{SW}, 
which asserts that this holds for rational tiling spaces.
While the proof of this Lemma in \cite{SW} is somewhat condensed,
Ian Putnam has explained it to us as follows.

Start with
a rational tiling space and rescale so that a translate of the tiling
has the property that all vertices
of tiles are in  $\ZZ^d$ (i.e., are at points with integral coordinates).
We have a space of tilings such that the vectors joining any two
adjacent vertices are in $\ZZ^d$. 
In consequence, if one vertex of a tiling is on an integer point, 
then they all are. Let $N $ be the set of all tilings in the space 
whose vertices lie in $\ZZ^d$.  First, it is compact. Second, 
because of finite local complexity, it is totally disconnected. 
(Fix a radius $R$. Look at all possible patches of radius $R$. 
There are only finitely many. This partitions $N $
into a finite number of closed disjoint sets. Let $R$ get bigger.) 
It also has an action of $\ZZ^d$ by translation. The
map from $N  \times \RR ^d$ to the tiling space  $\Omega _T$ which sends 
$(T, x)$ to $T-x$ induces a homeomorphism 
\[
h\co N \times _{\ZZ^d} \RR^d     \xrightarrow{\cong} \Omega _T.
\]
The properties claimed should 
be clear on $N \times_{\ZZ^d} \RR^d$. The map given by
Sadun and Williams is just projection onto the second component.

The fact that $\Omega_T $ is the total space of a fibre bundle
as described implies that the hull $\Omega _T$  is a compact
foliated space given by an $\RR ^d$-action.
So we can specialize the theorem above as follows.    

\begin{Thm}
\label{thm:maintiling}
Suppose that $T$ is a tiling satisfying the following conditions:
\begin{enumerate}
\item $T$ satisfies the finite pattern condition (i.e.,
finite local complexity);
\item $T$ has only finitely many tile orientations
\end{enumerate}
Then 
\begin{itemize}
\item 
The diagram 
\begin{equation}
\xymatrix{
K_0(\zone _T)\ar@2{-}[r]& K_0(C^*(G(\hull))) \ar[r]^(.6){\chi ^{-1}}_(.6)\cong \ar[d]^c
& K^{-d}(\hull) \ar[r]^{\ch_d} & \CH^d(\hull ; \RR )\ar[d]^s\\
&H_\tau^d(\hull)\ar[rr]^{\id}&&H_\tau^d(\hull)
}
\end{equation}
commutes. 

\medskip

\item Bloch Theorem: For every invariant transverse measure $\nu$, the diagram 
\begin{equation}
\xymatrix@C+6ex{
K_0(\zone_T)\ar[r]^(.45){(\ch_d)\circ(\chi^{-1})} \ar[d]^{\tk}
& \CH^d(\hull; \RR)\ar[d]^{\tc}\\
\RR \ar[r]^{\id} & \RR
}
\end{equation}
commutes. 
\end{itemize}
\qed
\end{Thm}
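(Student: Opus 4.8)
The plan is to obtain Theorem \ref{thm:maintiling} as a direct specialization of Theorem \ref{thm:main} to $X=\hull$: once one knows that the hull of a tiling satisfying conditions (1) and (2) is a compact foliated space given by an $\RR^d$-action whose leaves have simply connected holonomy covers, both displayed diagrams are literally the two diagrams of Theorem \ref{thm:main} with $\hull$, $\zone_T$ and $G(\hull)$ substituted for $X$, $\zone$ and $G(X)$. So the proof is a verification of hypotheses followed by an invocation.

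Concretely, I would first note that condition (1) alone gives compactness of $\hull$ by \cite[Lemma 2]{RW}, and that conditions (1)--(2) together place us in the setting of Sadun--Williams \cite[Theorem 1]{SW}: via the rational-tiling-space argument reproduced just above the theorem, one produces a totally disconnected compact space $N$ (a Cantor set, by finite local complexity) carrying a $\ZZ^d$-action by translations, together with a leaf-preserving homeomorphism $h\co N\times_{\ZZ^d}\RR^d \xrightarrow{\cong} \hull$ fitting into a fibre bundle $N\to\hull\xrightarrow{p}T^d$. By Definition~\ref{def:foliatedRd} this exhibits $\hull$ as a compact foliated space given by an $\RR^d$-action with orientable foliation tangent bundle, so every standing hypothesis of Theorem~\ref{thm:main} holds except possibly the simple connectivity of the holonomy covers. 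For that last point I would use the suspension model: the holonomy of the leaf through a point $n\in N$ is the germ at $n$ of the action of the stabilizer $\mathrm{Stab}(n)\subseteq\ZZ^d$ on $N$, so whenever the $\RR^d$-action on $\hull$ is free — equivalently, whenever $\hull$ contains no periodic tiling, which holds in particular when $T$ is aperiodic and repetitive — every stabilizer is trivial, every leaf is a copy of $\RR^d$, $G(\hull)\cong\RR^d\times\hull$, and the holonomy covers, being $\RR^d$, are simply connected. (If $T$ is itself periodic, then $\hull$ is a torus and the classical Bloch theorem gives the conclusion directly, so there is no loss in assuming the action free.)

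Having checked the hypotheses, I would then simply invoke Theorem~\ref{thm:main}: its first diagram is the required commuting square relating $\chi^{-1}$, $\ch_d$, the partial Chern character $c$ and the comparison map $s$ — whose content is Connes' identification of the analytic and topological index maps in $\bar H_\tau^d(\hull)$, already established there — and capping that square with the Ruelle--Sullivan current $C_\nu$ collapses it to the second, ``Bloch theorem'', square, namely $\tk=\tc\circ(\ch_d)\circ(\chi^{-1})$, which is exactly the assertion of the theorem. I expect the only genuine obstacle to be the verification in the previous paragraph that the Sadun--Williams transversal carries a \emph{free} (equivalently, holonomy-faithful) $\ZZ^d$-action, so that Theorem~\ref{thm:main} applies on the nose and the groupoid $G(\hull)$ really is the transformation groupoid $\RR^d\times\hull$; everything else is bookkeeping within the already-proven general statement.
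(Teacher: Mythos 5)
Your proposal is correct and is essentially the paper's own argument: compactness of $\hull$ via Radin--Wolff, the Sadun--Williams suspension structure $N\times_{\ZZ^d}\RR^d$ exhibiting $\hull$ as a compact foliated space given by an $\RR^d$-action, and then a direct specialization of Theorem~\ref{thm:main} (the theorem is stated in the paper with no further proof beyond this reduction). Your extra care about the simply-connected holonomy cover hypothesis goes slightly beyond the paper, which only addresses it in the subsequent remark that aperiodicity yields a free $\RR^d$-action; that added diligence is welcome, not a divergence in method.
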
 

\begin{Rem}    
Note that aperiodicity of $T$ is not needed for the commutativity.
But it is useful since it implies that $\RR ^d$ acts freely on $\Omega _T$.
\end{Rem}

\begin{Rem} Up to this point we have used {\Cech} cohomology with real coefficients. However, to understand the precise values of the trace,
and thus gap labelling, we need finer information, based on
{\Cech} cohomology with integer coefficients. These are related, of course, since 
$\CH ^*(X ; \ZZ)\otimes \RR \cong \CH ^*(X; \RR )$. The problem arises because of the Chern character. It is a map 
\[
\ch\co K^*(X) \longrightarrow \CH ^{**}(X ; \QQ ) 
\]
which induces an isomorphism
\[
\ch\co K^*(X)\otimes \QQ  \longrightarrow \CH ^{**}(X ; \QQ) 
\]
and thus also over the real numbers. 
For  arbitrary compact spaces the Chern character does NOT take only integral values. So in general there is no reason to think, for instance, that
\[
\ch_d\co K^{d}(X) \longrightarrow \CH ^d(X ; \QQ)
\]
factors through $\CH ^d(X ; \ZZ)$.

The good news is that this is the case in dimensions $\le 3$. 
(See for example \cite[Prop.\ 6.2]{AP}, though this is well known in 
the topology literature.  It was also observed indirectly in \cite{MR1269302}.)
The explanation of this is as follows. Complex line bundles over
a compact space $X$ are classified by a single invariant, the first
Chern class in $\CH^2(X;\ZZ)$. If $\dim X\le 3$, then every complex
vector bundle over $X$ is a direct sum of line bundles, and
one can define the integral Chern character
$\ch\co K^0(X)\to \CH ^0(X; \ZZ ) \oplus \CH ^2(X; \ZZ )$
by sending a virtual bundle (i.e., $\ZZ$-linear combination
of line bundles) to the ``rank'' in $\CH ^0(X; \ZZ )$ and the 
first Chern class $c_1$ in $\CH ^2(X; \ZZ )$.  This makes sense
even if $c_1$ is torsion, and defines a ring isomorphism
\[
\ch\co K^0(X)\to \CH ^0(X; \ZZ) \oplus \CH ^2(X; \ZZ).
\]
The case of $K^{-1}$ and spaces of dimension $\le 3$ is 
similar.  If $\dim X\le 3$, then $K^{-1}$ can be identified
with the homotopy classes of maps $X\to U(2)$.  Subtracting
off the class of a map $X\to U(1)$, which can be viewed as the
classifying map of a class in $\CH^1(X; \ZZ)$, we can 
assume we have a map $X\to SU(2)=S^3$, which for $\dim X\le 3$
is classified by an element of $\CH^3(X; \ZZ)$.  So we
get an isomorphism 
 \[
 \ch\co K^{-1}(X)\to \CH^1(X; \ZZ) \oplus \CH^3(X; \ZZ).
 \]
But for $X$ of dimension $4$ and up,
the Chern character involves $\frac{1}{2}c_1^2$ (and higher-order
terms in higher dimension) and so is only defined in
rational cohomology. When $\dim X=4$, the differentials
in the Atiyah-Hirzebruch spectral sequence (for computing $K$-theory
from {\Cech} cohomology) vanish, so that there is a filtration
of $K^0(X)$ with quotients $\CH ^0(X; \ZZ )$, $\CH ^2(X; \ZZ )$,
and $\CH^4(X; \ZZ)$, which is just a bit weaker than what
happens in lower dimensions. However, the extension in
recovering $K^0(X)$ from the cohomology can be non-trivial.
For example,  $\widetilde K^0(\RR\PP^4)\cong \ZZ/4$ while
\[
\CH^2(\RR\PP^4; \ZZ)\cong \CH^4(\RR\PP^4; \ZZ)\cong\ZZ/2.
\]
In this dimension the only denominator
needed to define the Chern character is $2$, 
so the Chern character can be viewed as a
ring homomorphism 
\[
K^0(X)\xrightarrow{\,\bigl(\rk,c_1,\frac{1}{2}c_1^2-c_2\bigr)\,} \CH ^0(X; \ZZ) \oplus \CH ^2(X; \ZZ)
\oplus \CH^4(X; \ZZ[\tfrac12]),
\]
which is an isomorphism after inverting $2$.
Anyway, for $\dim X\le 3$, we have isomorphisms
\[
\ch\co K^0(X)  \xrightarrow{\cong} \CH ^0(X; \ZZ ) \oplus 
\CH ^2(X; \ZZ ), \qquad K^1(X)  \xrightarrow{\cong} 
\CH ^1(X; \ZZ )  \oplus \CH ^3(X; \ZZ ).
\]
This will be used below.
\end{Rem}

\begin{Cor}[Bloch theorem in low dimensions]
\label{cor:maintiling}
Under the assumptions above, if the dimension is $\le 3$, then for every invariant transverse measure the diagram 
\begin{equation}
\label{eq:maintiling}
\xymatrix{
K_0(\zone_T)  \ar@{>>}[rr]^(.45){(\ch _d) \circ (\chi ^{-1})}  \ar[d]^{\tk}
&&\CH^d(\hull ; \ZZ ) \ar[d]^{\tc} \\
\RR \ar[rr]^{\id} &&\RR
}
\end{equation}
commutes. In particular,  the $K$-theory trace   
$\tk\co K_0(\zone_T) \to \RR$  and the cohomology trace  $\tc\co\CH^d(\hull;\ZZ) \to \RR$  have the same image in $\RR$.
\end{Cor}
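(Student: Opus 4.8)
The plan is to deduce Corollary~\ref{cor:maintiling} from Theorem~\ref{thm:maintiling} by upgrading the top horizontal map from rational to integral coefficients. First I would recall that the fibre bundle description $\Omega_T \cong N\times_{\ZZ^d}\RR^d$ provided by Sadun--Williams, together with the fact that $N$ is totally disconnected, shows via the Leray--Serre spectral sequence (exactly as in the proof of Theorem~\ref{coinvariant}) that $\CH^k(\hull;\ZZ)$ is concentrated in the ``column'' $\ell=0$ and equals $H^k_{\textup{group}}(\ZZ^d,\CH^0(N;\ZZ))$. Since $\hull$ is built from a $d$-torus base and a $0$-dimensional fibre, it has covering dimension $\le d\le 3$; hence by the Remark immediately preceding the corollary, the Chern character is integral, i.e. $\ch_d$ and more generally $\ch$ land in integral \v Cech cohomology and give an isomorphism $K^0(\hull)\cong\bigoplus_j \CH^{2j}(\hull;\ZZ)$ (resp. $K^{-1}$ with odd degrees). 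In particular $\ch_d\circ\chi^{-1}\co K_0(\zone_T)\to\CH^d(\hull;\ZZ)$ is defined at the integral level and is \emph{surjective}, being one of the components of an isomorphism.

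Next I would simply overlay this refined statement on the Bloch diagram~\eqref{eq:main2} of Theorem~\ref{thm:maintiling}. That theorem already gives commutativity of
\[
\xymatrix@C+6ex{
K_0(\zone_T)\ar[r]^(.45){(\ch_d)\circ(\chi^{-1})} \ar[d]^{\tk}
& \CH^d(\hull; \RR)\ar[d]^{\tc}\\
\RR \ar[r]^{\id} & \RR
}
\]
Because the map $(\ch_d)\circ(\chi^{-1})$ has image in the integral lattice $\CH^d(\hull;\ZZ)\subset\CH^d(\hull;\RR)$, and $\tc\co\CH^d(\hull;\RR)\to\RR$ restricts on that lattice to the cohomology trace $\tc\co\CH^d(\hull;\ZZ)\to\RR$ of Definition~\ref{def:RS} (the integral trace is by construction the restriction of the real one along $\CH^d(\hull;\ZZ)\to\CH^d(\hull;\RR)$), the square~\eqref{eq:maintiling} with integral coefficients commutes, and the top map is surjective as indicated by the double arrowhead. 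The ``in particular'' clause is then immediate: commutativity plus surjectivity of the top arrow forces $\operatorname{image}(\tk)=\operatorname{image}(\tc)$ as subsets of $\RR$.

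The main obstacle, and the only genuinely nontrivial input, is the integrality of the Chern character in dimension $\le 3$ --- i.e. that $\ch_d$ factors through $\CH^d(\hull;\ZZ)$ rather than merely $\CH^d(\hull;\QQ)$. This is precisely what fails in dimension $4$ and up (where $\tfrac12 c_1^2$ intervenes), and it is the reason the corollary is stated only for $d\le 3$. Fortunately this is handled by the Remark preceding the statement (line bundles on a space of dimension $\le 3$ split off $K$-theory, so $\ch=(\rk,c_1)$ on $K^0$ and $\ch=(c_1,c_3)$ on $K^{-1}$ are integral isomorphisms); I would just cite that discussion, noting $\dim\hull\le d\le 3$. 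A secondary, purely bookkeeping point is to confirm that $\tc$ on integral cohomology is genuinely the composite $\CH^d(\hull;\ZZ)\to\CH^d(\hull;\RR)\xrightarrow{\bar s}\bar H_\tau^d(\hull)\xrightarrow{\cap C_\nu}\RR$, which is exactly Definition~\ref{def:RS} with $K=\ZZ$, so no new work is required there.
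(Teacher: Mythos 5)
Your proposal is correct and follows essentially the same route as the paper: the paper's proof simply replaces real with integral cohomology in the upper right of diagram \eqref{eq:Chdiagram}, justified by the preceding Remark on integrality of the Chern character in dimensions $\le 3$, which is exactly the key input you invoke (together with the resulting surjectivity and the observation that the integral trace is the restriction of the real one). No gaps.
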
 
\begin{proof}
In  \eqref{eq:Chdiagram}, we can replace real cohomology with
integral cohomology in the upper right of the diagram.
\end{proof}
\begin{Rem}
In higher dimensions, the top degree part of the Chern character
$\ch_d\co K^{-d}(\hull) \to \CH^d(\hull;\QQ)$ does not obviously
factor through $\CH^d(\hull;\ZZ)$, but it does after multiplying
by $\lceil\frac{d}{2}\rceil!$\,.  So the images of the
$K$-theory trace and the cohomology trace agree at least up to this
factor.
\end{Rem}

\begin{Rem}
We take this opportunity to explain the connection between these results and the ``Gap Labeling Theorem'' (GLT) of \cite{BBG,KamP,MR2018220}.  In the situation of
Theorem \ref{thm:maintiling}, the GLT asserts that the image of
the $K$-theory trace on $K_0(\zone_T)$ is equal to the image of this trace
on $K_0$ of the subalgebra $C(\hull)$ of $C(\hull)\rtimes \RR^d$.  This is equivalent
to the equality of the images of the $K$-theory and cohomology traces, for the
following reason.  As pointed out in \cite[\S2]{KamP}, we can use the
equivalence between the groupoids $G(\hull)$ and $N\rtimes\ZZ^d$ (where $N$
is the totally disconnected transversal) to convert the statement of the GLT to
equality of the image of the $K$-theory trace on $K_0(C(N)\rtimes\ZZ^d)$ with
the image of the trace on $K_0(C(N))=K^0(N)=\CH^0(N;\ZZ)$ (since $N$ is
totally disconnected).  Since the transverse measure is assumed
invariant, the measure on $N$ is $\ZZ^d$-invariant, and this factors through
the coinvariants $\CH^0(N;\ZZ)_{\ZZ^d}$, which is $\CH^d(\hull;\ZZ)$ by Theorem \ref{coinvariant}.  In other words, the GLT asserts the
equality of the image of the $K$-theory trace with the image of
the cohomology trace on $\CH^d(\hull;\ZZ)$.  
By Theorem \ref{thm:maintiling}, the image of the
$K$-theory trace is equal to the image of the cohomology trace on
the subgroup of $\CH^d(\hull;\QQ)$ given by the image of $\ch_d$ on
$K^d(\hull)$.  So the images of the two traces are equal if
the image of $\ch_d$ is contained in $\CH^d(\hull;\ZZ)$.  Inclusion of the image
of $\ch_d$ in $\CH^d(\hull;\ZZ)$ is used in all of the proofs of the
GLT in \cite{BBG,KamP,MR2018220}.
\end{Rem}


\section{Building tilings : A quick introduction }

Having stated our main result, Theorems \ref{thm:main} and \ref{thm:maintiling} and specialized to tilings, we wish to apply it to show the 
relationship between structural (diffraction) data and spectral data. For the sake of self-completeness, we now present a quick introduction to some popular approaches used to build tilings. Further details can be found in \citep{Senechal_Book_1996,Luck_Review_1994,Baake_Grimm_Book_2013,Baake_Grimm_Book_2018,Mozes_JAM_1989,Janot_Book_1995,KatzDuneau_1986}.

\bigskip
\noindent \emph{Cut \& Project -- Characteristic function -- Phason}
\medskip

\noindent A commonly accepted view of a quasicrystal in dimension $d$ is modeled as a section of a periodic structure (crystal lattice $\ZZ^n$) in an $n$-dimensional ambient space
$\mathbb{R}^n$, with $n > d$. We have the decomposition $\mathbb{R}^n = E^\parallel \oplus E^\perp$, where $E^\parallel$ is the $d$-dimensional physical space in which the structure is embedded, whereas $E^\perp$ is an $(n-d)$-dimensional internal space. This setting is usually implemented for the  Cut \& Project algorithm (hereafter C\&P), very useful and popular for the building of quasicrystals~\citep{Senechal_Book_1996,Luck_Review_1994,DuneauKatz_1985,KatzDuneau_1986}.

A simple example is given by the quasiperiodic tiling of a line $(d=1)$ with a Fourier module $\mathcal F$ with two generators $n=2$. The ambient space is $\RR^2$ with the square lattice $\mathbb{Z}^2$; the physical space is the line $E^\parallel$, which makes a tilt angle $\theta$ with the horizontal axis. If the slope $s \equiv 1/ (1 +\cot \theta )$ is irrational, the structure thus obtained is quasiperiodic; it is a one-dimensional ($d=1$) quasicrystal. If the slope is an irreducible rational $s = p/q$, the structure is periodic with $q$ atoms in a cell. A celebrated example of one-dimensional quasicrystal is the Fibonacci sequence obtained for the irrational slope $s = \tau^{-1} = 2/ (1+ \sqrt{5})$.

For $n=2$, we end up with a deterministic arrangement of two types of tiles, i.e., a two-letter alphabet $\{a,b \}$ which generally represents a piecewise modulation of a physical parameter (e.g.\ density, potential, dielectric constant, etc.). The offset of the line cut fixes the first letter of the iteration. It is thus immaterial for the infinite tiling but not for finite chains. Since all choices are equivalent, the offset appears as a gauge freedom known as a phason and obtained by sliding the cut along the internal space $E^\perp$. 

A characteristic function, 
$ \chi (n, \phi) \equiv \sgn \, \left[ \cos \left( 2 \pi n \, s  + \phi \right) - \, \cos \left( \pi s \right) \right] $ 
with $n \in \NN$, equivalent to the C\&P algorithm can be defined, 
which takes the two values $\pm 1$ respectively identified to the two letters $\{a,b \}$.
The parameter $\phi \in [0, 2 \pi ]$ is the aforementioned phason serving as an extra gauge degree of freedom. 
$\chi (n, \phi) $ has been successfully used in the determination of both the diffraction spectrum~\citep{Luck_Review_1994,Senechal_Book_1996,Luck_PRB_1989} and spectral properties of Schr\"{o}dinger and of wave operators~\citep{Kraus_2012a,Kraus_2012b,Levy_arXiv_2015}.

\bigskip
\noindent \emph{Substitutions}
\medskip

\noindent Aperiodic tilings can also be generated by inflation rules known as substitutions~\citep{Luck_JPA_1993,Luck_Review_1994,Senechal_Book_1996,Baake_Grimm_Book_2018}. For a two-letter alphabet $\{a,b \}$, the substitution rule is defined by its action $\sigma$ on a word $w= l_1 l_2 \dots l_k$ by the concatenation $\sigma(w) = \sigma (l_1 ) \sigma (l_2 ) \dots \sigma (l_k )$. An occurrence primitive matrix 
$ M = \begin{psmallmatrix}
\alpha & \smash{\beta}\\
\gamma & \delta
\end{psmallmatrix}$
defined by $\sigma (a) = a^\alpha b^\beta$ and $\sigma (b) = a^\gamma b^\delta$ (ignoring the order of letters) is associated to $\sigma$. It allows us to define a sequence of numbers $F_N$ from the recurrence $F_{N+1} = t F_N - p F_{N-1}$, where $t = \Tr M$, $p = \det M$ and $F_{0,1} = 0,1$. The largest eigenvalue $\lambda_1$ of $M$ is larger than $1$ (Frobenius-Perron theorem).
For the Fibonacci substitution $ M =  \begin{psmallmatrix}1 & 1 \\
1 & 0
\end{psmallmatrix}$,
$\lambda_1 = \tau = \frac{1 + \sqrt{5}}{2}$ and $F_N$ are the Fibonacci numbers. 
The left eigenvector 
\begin{equation}
\mathbf{v}_1 = (\rho_a , \rho_b ),
\label{eq:rho_b}
\end{equation}
normalised to $\rho_a + \rho_b =1$, with $\rho_a = \frac{\gamma}{\lambda_1 +\gamma - \alpha}$, $\rho_b = \frac{\beta}{\lambda_1 + \beta - \delta}$, portrays the frequencies or densities of the letters $a$ and $b$ in the infinite word. 
The right eigenvector $\mathbf{w}_1 = (d_a,d_b)^T$, normalized such that $\frac{d_a}{d_b}=\frac{\beta}{\lambda_1-a}=\frac{\lambda_1-\delta}{\gamma}$, expresses the lengths of the corresponding tiles.

The C\&P and substitution algorithms are not equivalent; e.g.\ no substitution is associated to the C$\&$P slope $s = 1/\pi$ since $\pi$ is a transcendental and not an algebraic irrational. Conversely, the substitution $M = 
\begin{psmallmatrix}1 & 2\\
1 & 0
\end{psmallmatrix}$ has no C\&P counterpart.

Substitutions in $1d$, which are equivalent to C\&P, are quasiperiodic and can be identified as follows. 
Let $x_n$ be atomic positions on the boundary between tiles, $\bar{d}=\lim_{n\to\infty}x_{n}/n=\rho_{a}d_{a}+\rho_{b}d_{b}$ the mean tile length, and $u_{n}=x_{n}-\bar{d}n$ fluctuations about the mean. Then 
\[
\Delta_{u}=\lim_{n\to\infty}\sup u_{n}-\lim_{n\to\infty}\inf u_{n},
\]
is the extension of the atomic surface \citep{Luck_JPA_1993}. Then a substitution is quasiperiodic if $\Delta_{u}=1$. 
Furthermore, a substitution $\sigma$ is \emph{common unimodular} if it is primitive, irreducible, Pisot, unimodular ($\det M = \pm1$), and has a common prefix (or suffix) \citep{AR}. A quasiperiodic substitution is necessarily common unimodular.


\section{Diffraction spectrum}

In this section we discuss the diffraction spectrum 
information obtained from our examples of ($d=1$) one-dimensional tilings using {\Cech}  cohomology and Ruelle-Sullivan currents.

For a two-letter alphabet $\{a,b\}$, an atomic density $\rho (x) = \sum_n \delta (x - x_n)$ is defined by placing identical atoms at boundaries $x_n$ between $a$ and $b$ tiles. The structure factor or two-point correlation for a tiling of length (number of tiles) $N$ is 
\begin{equation}
S(k) = \frac{1}{N} |G(k)|^2 = \frac{1}{N} \sum_{m,n} e^{ik (x_m - x_n)},
\label{SF}
\end{equation}
where $G(k) = \sum_n e^{-i k x_n}$ is the Fourier transform of $\rho (x)$ and $k$ is the $1d$ wave vector in units of an inverse mean lattice spacing. 

Bragg peaks are essential in the definition of quasiperiodicity \citep{BT,hof1995,Dworkin}. There is a Bragg peak at $k_0$ if $G(k_0) \propto N$ for large $N$, namely if a macroscopic fraction of atoms diffracts coherently and $S(k) \simeq \delta (k - k_0)$. For a $1d$ quasicrystal described by the C\&P algorithm, 
\[
G(k) = \sum_{p,q} C_{pq} \, \delta \left( \tfrac{1}{2 \pi}k - p - q s^{-1} \right),
\]
hence the corresponding Fourier transform consists of Bragg 
peaks located at 
\[
k_{pq} = 2 \pi \, (p + q s^{-1}) \,,
\]
where $(p,q)$ are integers (see Fig.~\ref{fig:Fibonacci_diff} for a Fibonacci quasicrystal). The diffraction spectrum is pure-point, and the corresponding Fourier module $\ZZ + s\ZZ$ is the projection onto the physical space $E^\parallel$ of the reciprocal ambient space lattice~\citep{Luck_JPA_1993}. 

The diffraction spectrum of tilings generated by substitutions is obtained from the solutions of $k \lambda_1 ^n \rightarrow 0 \pmod{1}$, $\lambda_1 >1$ being the Perron-Frobenius eigenvalue \citep{BT}.  Non-vanishing $k$ solutions  depend on $\lambda_1$. 

A \emph{Pisot number} $\lambda$ of degree $p$ is a root of an irreducible monic polynomial $P(x)$ with integer coefficients, with $\lambda > 1$ and such that all other roots of $P(x)$ are less than  1 
in absolute value. For example, the golden ratio $\phi \sim 1.618$ is a root of $P(x) = x^2 - x - 1$. The other root 
of $P(x)$ is $-\phi ^{-1} \sim -0.618 $ which has absolute value less than $1$, so $\phi $ is a Pisot number.  

\medskip
We identify the following cases, which all fulfill conditions of Theorem \ref{thm:maintiling}: 
\begin{description}
\item [\boldmath$\lambda_1$ Pisot and \boldmath$\det M = \pm 1$.] The pure-point structure factor consists of Bragg peaks supported by a Fourier module with a finite number $p$ of generators. The structure is a quasicrystal and it can be described in the ambient space formalism. 

\item [\boldmath$\lambda_1$ Pisot and \boldmath$\det M \neq \pm 1$.] The pure-point structure factor consists of Bragg peaks supported by a Fourier module not finitely generated. For $1d$ chains, it contains at least the infinite family of Bragg diffractions $\{ 2 \pi  \lambda_1 ^{-n} , n \geq 0\}$. An example is provided by the Thue-Morse tiling whose occurrence primitive matrix is 
$ M= 
\begin{psmallmatrix}1 & 1\\
1 & 1
\end{psmallmatrix}$. The diffraction spectrum has both pure-point and singular continuous components. The corresponding Bragg peaks are computed in
\cite[Theorem 3.9]{MR2523312}.%
\footnote{
    There appears to be a slight misprint there; the $4\pi$ factor should apply to both summands in the cohomology. 
}
The corresponding Fourier module is $\ZZ+\ZZ\bigl[\frac{1}{2}\bigr]$. The full diffraction spectrum is
represented in Fig.~\ref{fig:Thue-Morse_diff} and detailed in Table~\ref{tab:examples}. Another example is the period doubling substitution, whose occurrence matrix is $ M= 
\begin{psmallmatrix}1 & 1\\
2 & 0
\end{psmallmatrix}$ with a pure-point diffraction spectrum of Fourier module $\ZZ \bigl[\frac12\bigr]$.   

\item [Non-Pisot substitutions] correspond to the case where the second eigenvalue of the occurrence matrix $M$ is $|\lambda_2 | >1$. This property has several consequences, among them the occurrence of unbounded density fluctuations~\citep{Godreche_Luck_PRB_1992,Dumont_Review_1990}. It has been shown that both the fluctuation, denoted by $u_n$, of the atomic positions in $E^\parallel$ and the extension of the tiling in the internal space $E^\perp$, scale with the power law $u_n \simeq n^\beta$ where $\beta = \ln |\lambda_2 | / \ln \lambda_1$. Such unbounded density fluctuations destroy the coherence of any would-be Bragg diffraction so that the Fourier diffraction spectrum of non-Pisot tilings is generically continuous. The Rudin-Shapiro tiling provides such an example as displayed in Fig.~\ref{fig:Rudin-Shapiro_diff}.
\end{description}

An alternative description of the Bragg spectrum for one-dimensional tilings is based on the \v Cech cohomology group $ \CH^{1}\left(\Omega_{T};\ZZ\right)$ associated to the hull $\Omega_T = \overline{\{ T - \mathbf{x} \mid \mathbf{x} \in E_\parallel \}}$.

A known result \citep{Julien_2009} is that for C\&P (quasiperiodic) tilings, 
$\CH^{1}\left(\Omega_{T};\ZZ\right) \cong \ZZ^a$, where $a$
counts the number of letters of the tiling.
The same applies to quasiperiodic substitutions.
The diffraction spectrum is obtained using the Ruelle-Sullivan map $C_\nu$ 
which projects $ \CH^{1}\left(\Omega_{T};\ZZ\right)$ into $\RR$,
\begin{equation}
C_\nu \left( \CH^{1}\left(\Omega_{T};\ZZ\right) \right) 
= \ZZ + \rho_b \, \ZZ,
\label{cech}
\end{equation}
so that Bragg peaks are labeled using the two integer coordinates of $\CH^{1}\left(\Omega_{T};\ZZ\right) \cong \ZZ^2$.%
\footnote{
    We have defined the Ruelle-Sullivan class $[C_\nu ]$ as lying in tangential homology $H_d^ \tau (X)$, so that it naturally pairs with tangential cohomology $H_ \tau ^d (X)$. We have constructed a natural map $s: \CH^d(X; \ZZ) \to H_ \tau ^d (X)$ and this induces a pairing of the Ruelle-Sullivan class with $\CH^d(X; \ZZ)$.  This composition is exactly 
    the cohomology trace $\tau _*^H : \CH^d(X; \ZZ) \to \RR $ as defined in \ref{def:RS}.
}

When $\det M\ne \pm 1$, then the {\Cech} cohomology often is
\emph{not} free abelian. For example, an analysis \cite[p.\ 531]{AP} 
or \cite{MR2383524} shows that 
for the Thue-Morse substitution tiling 
$\CH^1(\Omega _T; \ZZ) \cong \ZZ \oplus \ZZ[\frac 12]$.

\medskip

By Cor.~\ref{cor:maintiling}, we can calculate the range of the cohomology trace as well. So here is a summary of the results:
\medskip
\begin{Pro} ~
\begin{enumerate}
\item For the Fibonacci tiling, 
\begin{align*}
\CH ^1(\Omega_T ; \ZZ ) & \cong \ZZ \oplus \ZZ \, , 
 & \tc (\CH ^1(\Omega_T ; \ZZ )) & = \big(\ZZ + \lambda ^{-1}\ZZ \big)\, ,
\end{align*}
using $\rho_b=1-\lambda^{-1}$ in \eqref{eq:rho_b}.
\medskip
\item For the Thue-Morse tiling, 
\begin{align*}
\CH ^1(\Omega_T ; \ZZ ) & \cong \ZZ \oplus \ZZ \big[\tfrac 12\big] \, ,
& \tc (\CH ^1(\Omega_T ; \ZZ )) & = \tfrac 13 \, \ZZ\big[\tfrac 12\big] \, .
\end{align*}
\end{enumerate}
\end{Pro}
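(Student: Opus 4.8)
The plan is to treat both parts in parallel: in each case I would first read off the \v Cech group $\CH^1(\Omega_T;\ZZ)$ and then evaluate $\tc$ on it, using the identification $\CH^1(\Omega_T;\ZZ)\cong\CH^0(N;\ZZ)_{\ZZ}$ of Theorem~\ref{coinvariant} together with Corollaries~\ref{cor:topofH} and~\ref{cor:tilingtransvmeas}. For the groups themselves I would quote the known computations. The Fibonacci substitution $a\mapsto ab$, $b\mapsto a$ is common unimodular (primitive, irreducible, Pisot, $\det M=-1$, common prefix), hence quasiperiodic on two letters, so $\CH^1(\Omega_T;\ZZ)\cong\ZZ\oplus\ZZ$ by the result of Julien \cite{Julien_2009} recalled above; equivalently one builds the Anderson-Putnam complex of the two collared tiles and passes to the direct limit. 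For Thue-Morse ($a\mapsto ab$, $b\mapsto ba$, $M=\begin{psmallmatrix}1&1\\1&1\end{psmallmatrix}$, $\det M=0$) I would quote $\CH^1(\Omega_T;\ZZ)\cong\ZZ\oplus\ZZ[\tfrac12]$ from \cite[p.~531]{AP} (see also \cite{MR2383524}); the $\ZZ[\tfrac12]$ summand is forced by the Perron eigenvalue $\lambda_1=2$ entering the connecting maps of the inverse limit.

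Under the identification above, $\tc$ sends the class of an integer-valued locally constant function $f$ on the transversal $N$ to $\int_N f\,d\nu$, where $\nu$ is the unique invariant probability measure (unique ergodicity holds for every primitive substitution); the coboundary ambiguity $f\mapsto f-f\circ T$ integrates to zero, so this is well defined. For the Fibonacci tiling this is precisely \eqref{cech}: the two free generators of $\CH^0(N;\ZZ)_{\ZZ}$ are represented by the indicator functions of the two tile-type cylinders, of $\nu$-masses $\rho_a$ and $\rho_b$, so $\tc(\CH^1(\Omega_T;\ZZ))=\rho_a\ZZ+\rho_b\ZZ=\ZZ+\rho_b\ZZ$ since $\rho_a+\rho_b=1$. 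Computing the left Perron eigenvector of $M=\begin{psmallmatrix}1&1\\1&0\end{psmallmatrix}$ gives $\rho_b=(\lambda+1)^{-1}=\lambda^{-2}=1-\lambda^{-1}$ (using $\lambda^2=\lambda+1$), whence $\ZZ+(1-\lambda^{-1})\ZZ=\ZZ+\lambda^{-1}\ZZ$, which is part~(1).

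For Thue-Morse, formula \eqref{cech} is not available (the substitution is not quasiperiodic, $\det M=0$), so I would compute $\tc$ directly. From the Anderson-Putnam presentation of $\CH^1(\Omega_T;\ZZ)\cong\ZZ\oplus\ZZ[\tfrac12]\cong\CH^0(N;\ZZ)_{\ZZ}$ one writes the generator of the $\ZZ$ summand and the generator of the $\ZZ[\tfrac12]$ summand as explicit integer combinations of $2$-block cylinder indicators on $N$, and integrates them against $\nu$ using the Thue-Morse pattern frequencies $\nu(aa)=\nu(bb)=\tfrac16$, $\nu(ab)=\nu(ba)=\tfrac13$ (which follow from the induced action of $\sigma$ on $2$-blocks together with $\lambda_1=2$). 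The $\ZZ$-generator then integrates to an integer (normalized to $1$) and the $\ZZ[\tfrac12]$-generator integrates to a $\ZZ[\tfrac12]$-associate of $\tfrac13$, so $\tc(\CH^1(\Omega_T;\ZZ))=\ZZ+\tfrac13\ZZ[\tfrac12]=\tfrac13\ZZ[\tfrac12]$, which is part~(2). As a cross-check, since the dimension is $1\le 3$ one may instead invoke Corollary~\ref{cor:maintiling} to replace $\tc$ by the $K$-theory trace $\tk$, whose image on $K_0(\zone_T)$ is the gap-labeling (integrated density of states) group of the Thue-Morse system, recorded as $\tfrac13\ZZ[\tfrac12]$ in the literature.

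I expect the Thue-Morse trace to be the only genuine difficulty: one must match the non-free abstract structure $\ZZ\oplus\ZZ[\tfrac12]$ against the concrete measure data, i.e.\ pin down exactly which locally constant functions on $N$ represent the two generators, and then control the precise arithmetic factor $\tfrac13$ (as opposed to $\tfrac12$, $\tfrac16$, $\tfrac1{12}$, $\dots$) as it propagates through the direct limit of Anderson-Putnam approximants. The Fibonacci case is by contrast immediate from \eqref{cech} once one records the eigenvalue identity $\rho_b=1-\lambda^{-1}$.
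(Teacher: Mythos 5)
Your proposal is correct, and in its essentials it follows the same route as the paper: the groups are quoted from Julien \cite{Julien_2009} and from Anderson--Putnam/Barge--Diamond \cite{AP,MR2383524}; the Fibonacci trace image is read off from \eqref{cech} together with the eigenvector identity $\rho_b=\lambda^{-2}=1-\lambda^{-1}$; and the Thue--Morse value is obtained by transferring the known gap-labeling ($K$-theory trace) computation of Bellissard through Corollary \ref{cor:maintiling} --- which is exactly what your ``cross-check'' does, and is in fact the paper's actual argument (the paper offers no independent computation beyond this transfer). Where you diverge is in your primary plan for Thue--Morse, a direct evaluation of $\tc$ on $\CH^0(N;\ZZ)_{\ZZ}$ via pattern frequencies: that route is viable in principle (the image of $\tc$ is indeed the frequency module of the subshift, since coboundaries integrate to zero), but as written your recipe does not suffice. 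The summand $\ZZ\bigl[\tfrac12\bigr]$ has no single generator, and integer combinations of $2$-block cylinder indicators have trace values only in the module generated by $\nu(aa)=\nu(bb)=\tfrac16$ and $\nu(ab)=\nu(ba)=\tfrac13$, i.e.\ in $\tfrac16\ZZ$, which is a proper subgroup of $\tfrac13\ZZ\bigl[\tfrac12\bigr]$ (it misses $\tfrac1{12}$, for instance). To complete the direct computation you would have to use cylinder functions at every level of the substitution tower --- words of length of order $2^k$, whose frequencies are of the form $\tfrac{1}{3\cdot 2^k}$ and $\tfrac{1}{6\cdot 2^k}$ --- and check that these exhaust $\tfrac13\ZZ\bigl[\tfrac12\bigr]$. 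You do flag exactly this as the remaining difficulty, and since your fallback via Corollary \ref{cor:maintiling} and the recorded image of $\tk$ closes the argument, your conclusion stands; the direct route, if completed, would have the added value of being self-contained rather than resting on the cited $K$-theoretic computation.
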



\begin{figure}[tb]
\centering
\subfloat[\label{fig:periodic_diff}Periodic.]
{\includegraphics[width=0.3\textwidth]{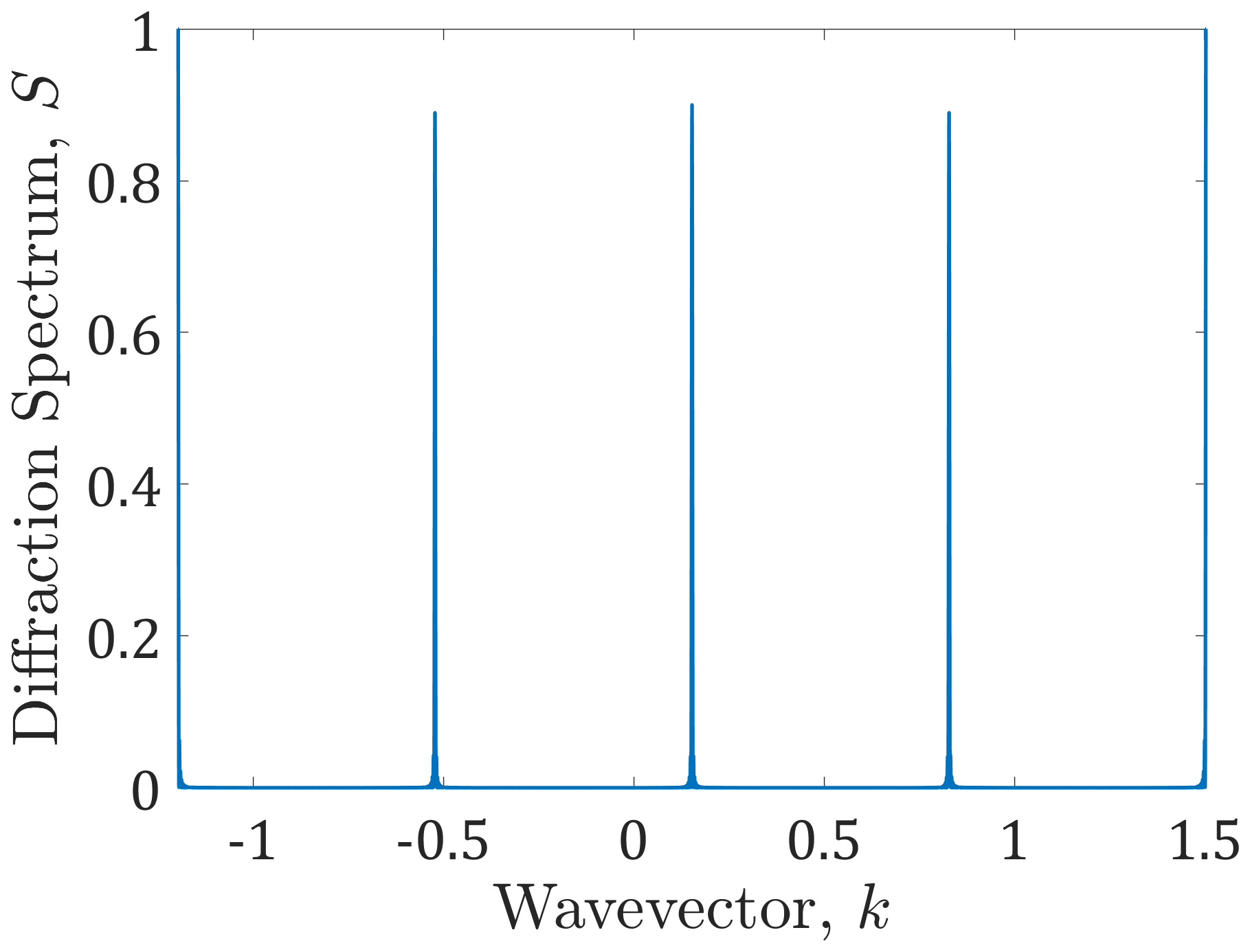} }
\hfill{}
\subfloat[\label{fig:Fibonacci_diff}Fibonacci.]
{\includegraphics[width=0.3\textwidth]{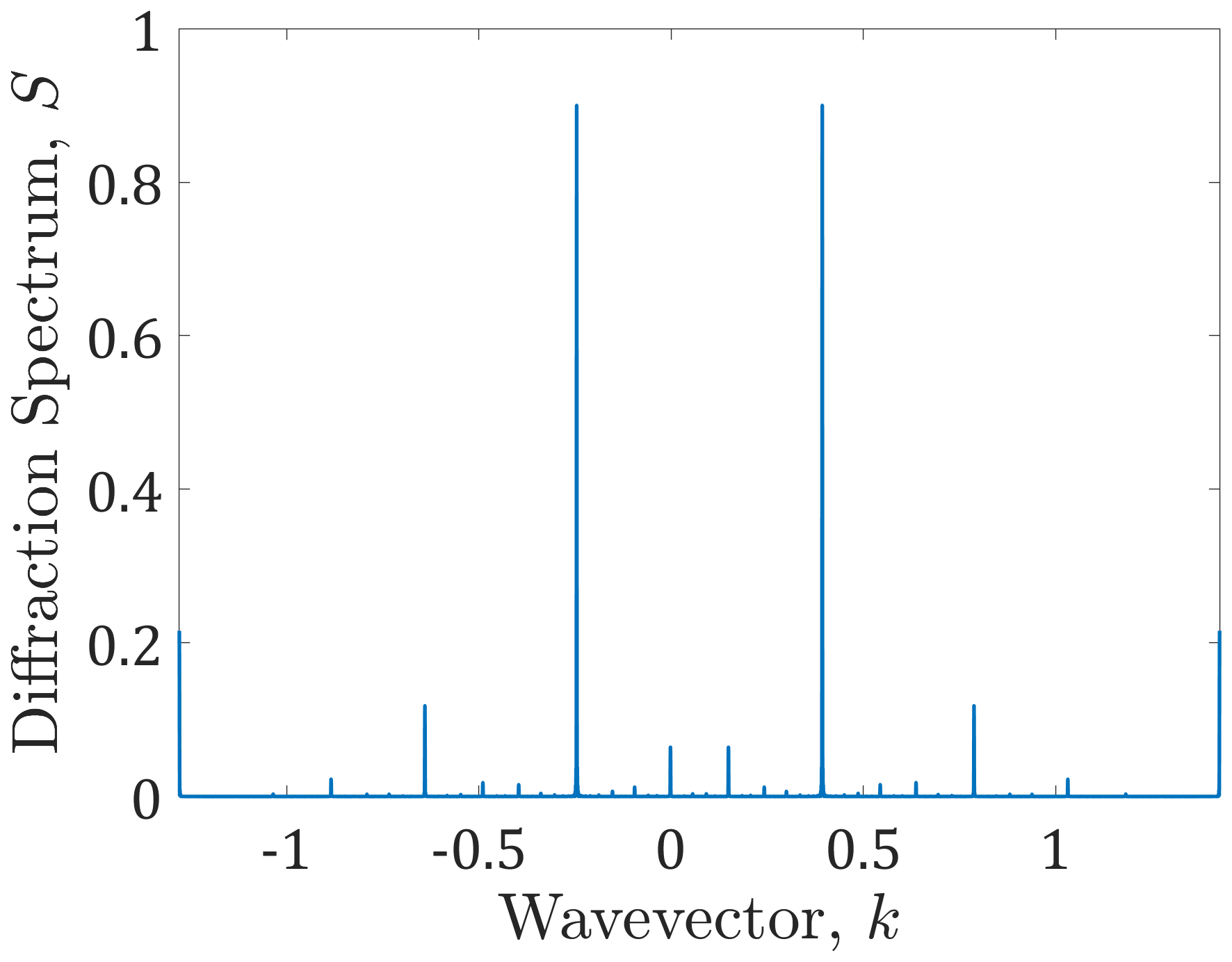} }
\hfill{}
\subfloat[\label{fig:Thue-Morse_diff}Thue-Morse.]
{\includegraphics[width=0.3\textwidth]{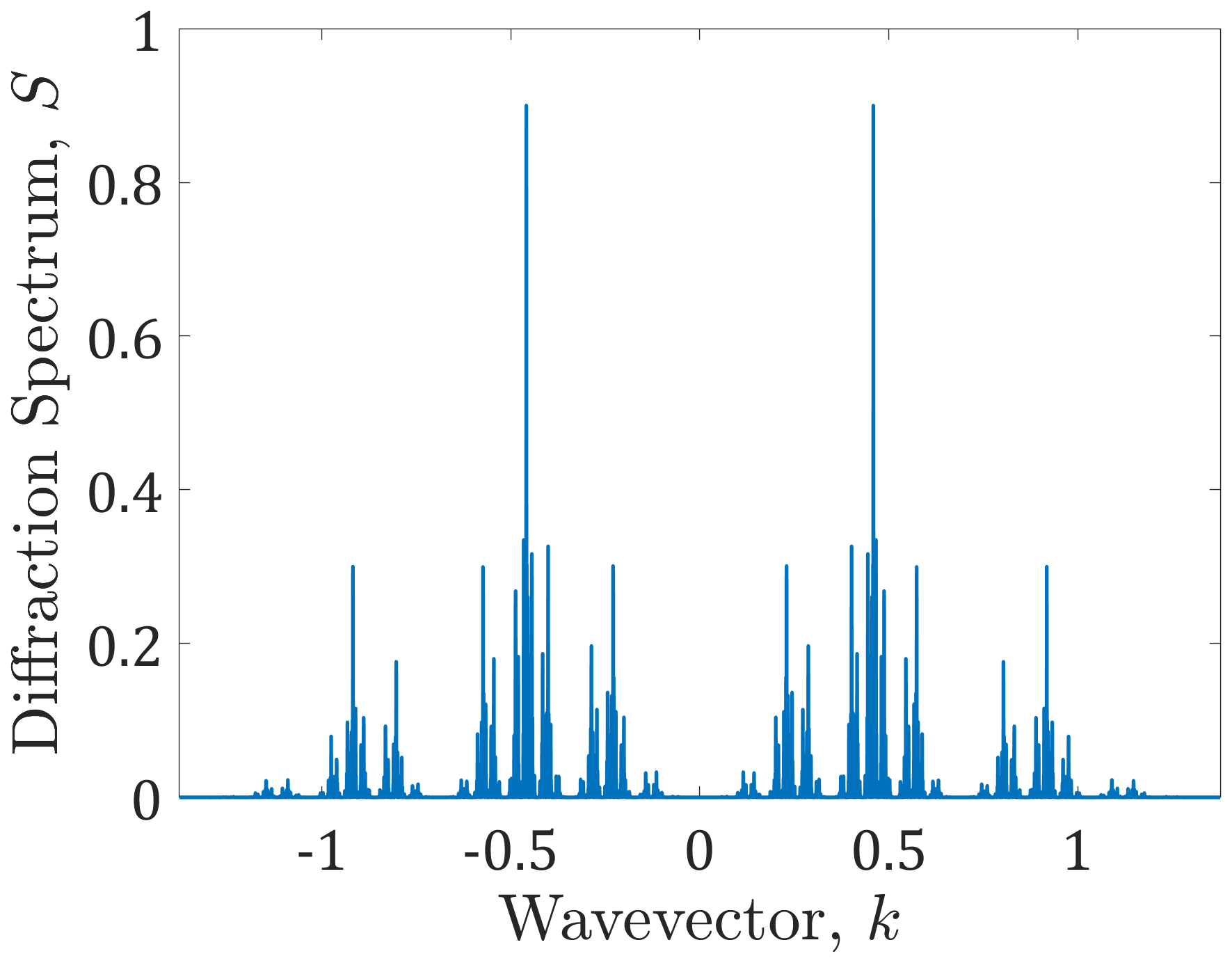} }
\\
\hfill{}
\hfill{}
\subfloat[\label{fig:Period-Doubling_diff}Period Doubl\rlap{ing.}]
{\includegraphics[width=0.3\textwidth]{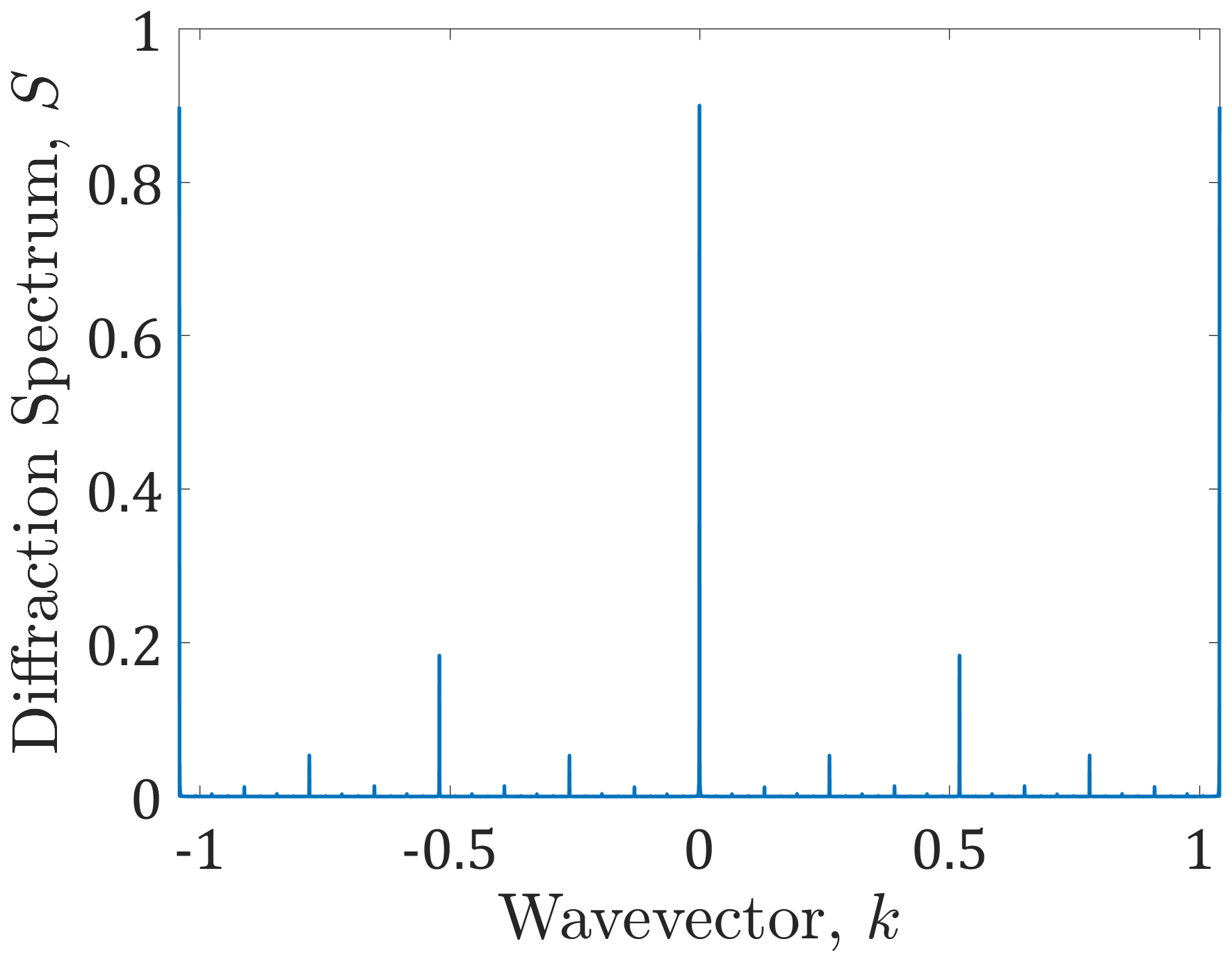} }
\hfill{}
\subfloat[\label{fig:Rudin-Shapiro_diff}Rudin-Shapiro.]
{\includegraphics[width=0.3\textwidth]{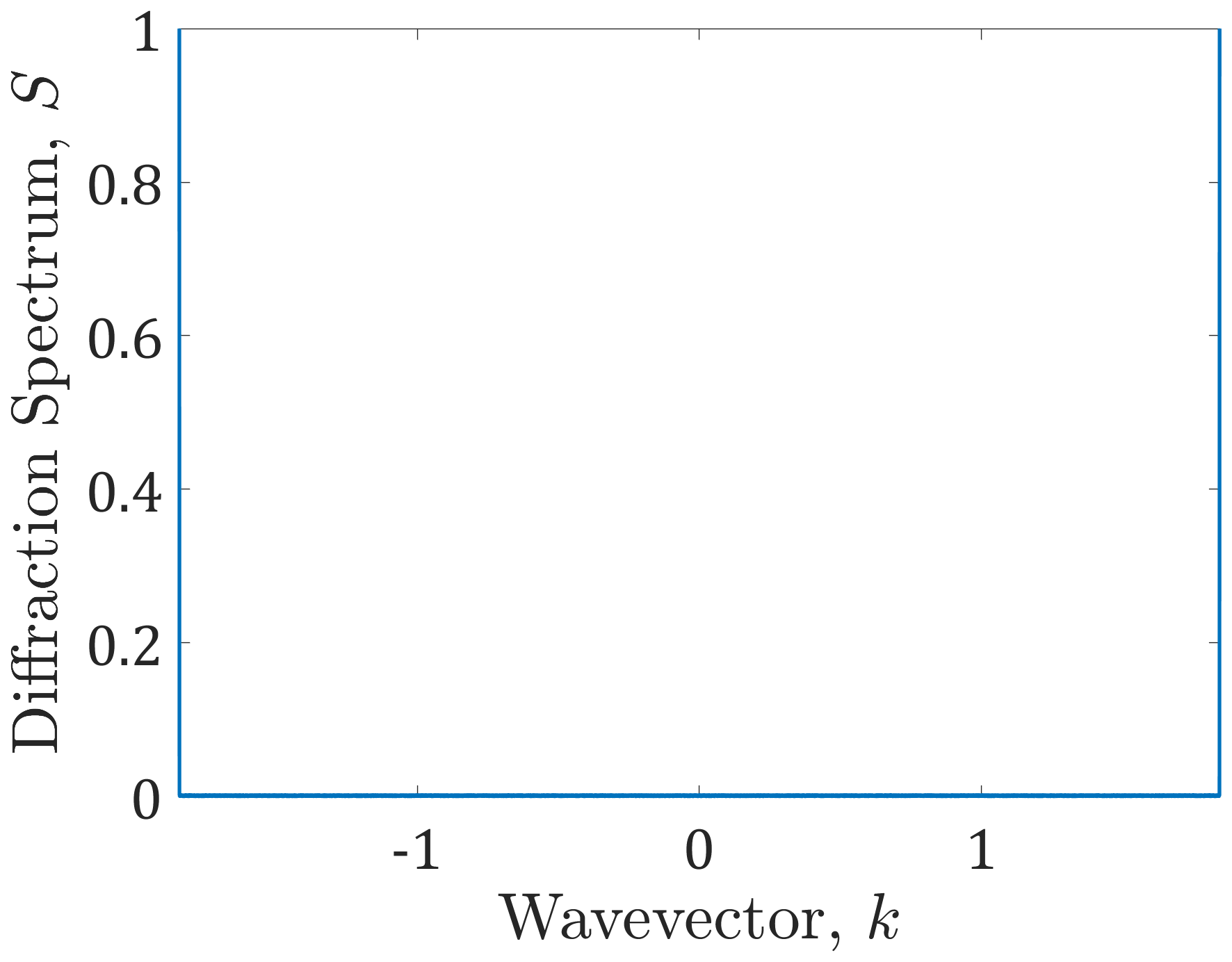} }
\hfill{}
\hfill{}
\caption{\label{fig:diff}
    Diffraction spectra of representatives of five families of tilings. (\textsc{a}) describes a periodic tiling with pure-point (PP) diffraction spectrum and a finite number of Bragg peaks, (\textsc{b}) a quasiperiodic (Fibonacci) tiling. The PP diffraction spectrum displays an infinite countable number of Bragg peaks, (\textsc{c}) a non-quasiperiodic tiling (Thue-Morse) with both PP Bragg peaks and a singular continuous (SC) component made of localised but not Bragg diffraction peaks (see Fig.~\ref{fig:diffraction-amp} for details), (\textsc{d}) a limit-quasiperiodic tiling (Period Doubling) with only PP Bragg peaks and (\textsc{e}) a non-quasiperiodic and non-Pisot tiling (Rudin-Shapiro) with an absolutely continuous diffraction spectrum. 
}
\end{figure}


\begin{figure}[ht]
\subfloat[Fibonacci.]{\includegraphics[width=0.48\textwidth]{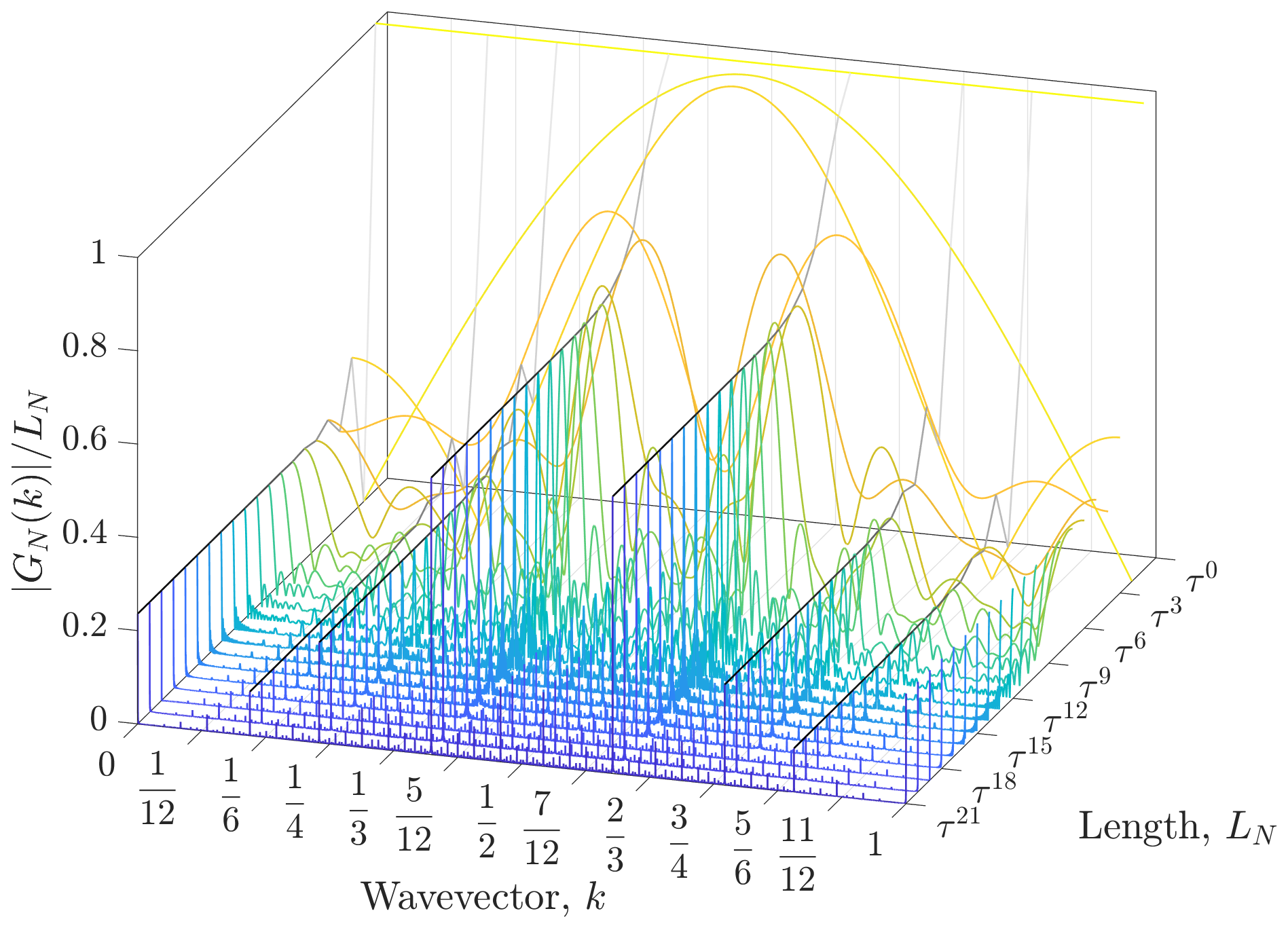} }
\hfill{}
\subfloat[Thue-Morse.]{\includegraphics[width=0.48\textwidth]{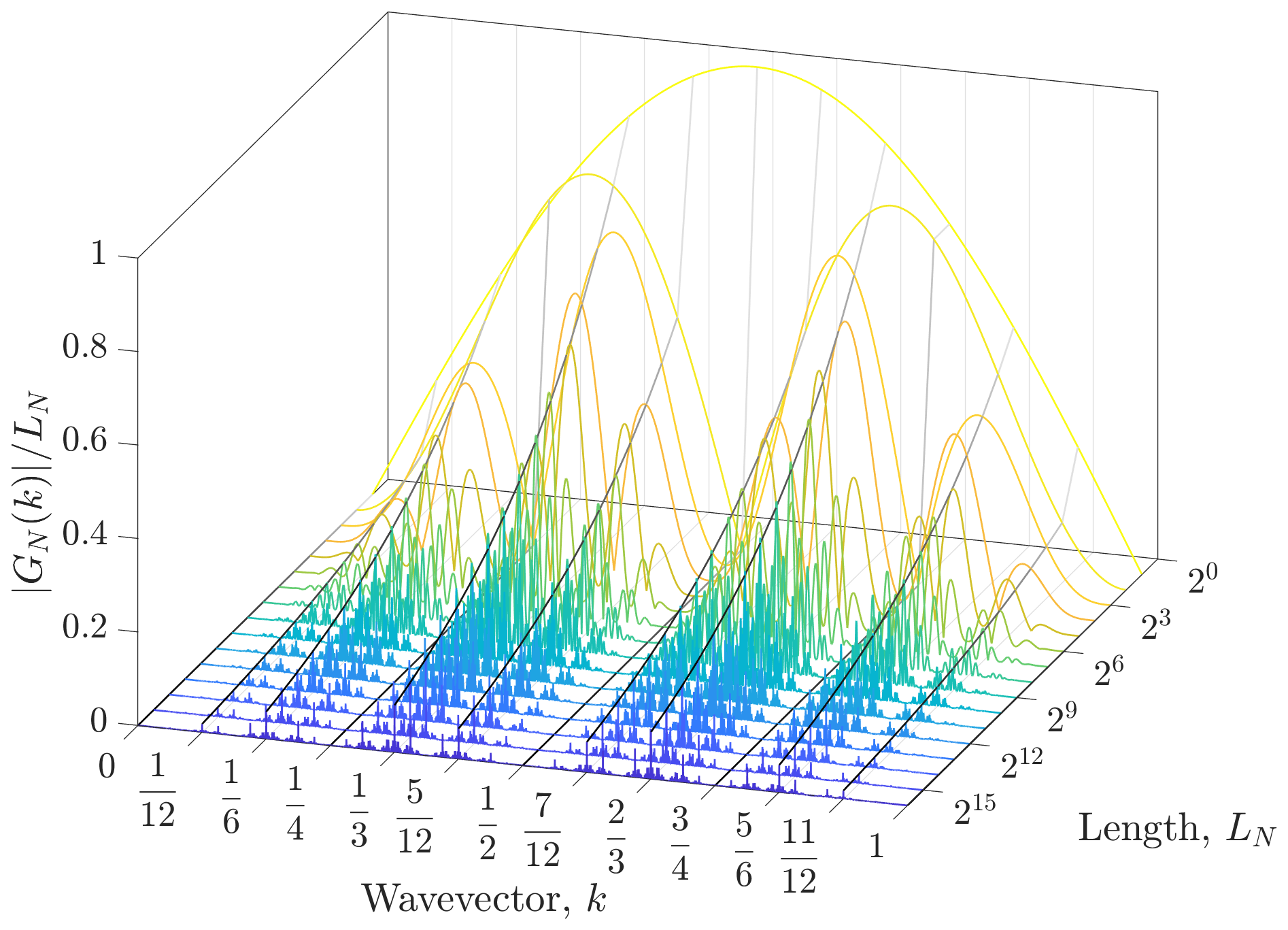} }
\caption{\label{fig:diffraction-amp} 
    Comparison between the PP discrete set of Bragg peaks and SC diffraction peaks. For Bragg peaks (e.g.\ Fibonacci), the structure factor in \eqref{SF} scales like the size $L_N$ of the $N$-letters tiling, whereas for SC diffraction spectrum (e.g.\ Thue-Morse), the structure factor scales like $L_N ^{\gamma_N}$ with $\lim_{N \rightarrow \infty} \gamma_N = \gamma < 1$.  The scaled diffraction amplitude $\left|G_{N}(k)\right|/L_{N}$ is represented. Note that for the PP case, the scaled diffraction amplitude saturates at large lengths, while for the SC case it decreases like the displayed power law ($\gamma = \log_2 3 -1$ for Thue-Morse).
    Color represents increasing order $N$. Black lines: the top of chosen peaks. 
}
\end{figure}


\section{Spectral properties and gap labeling theorem}

This section is devoted to spectral information extracted from the Laplacian operators conveniently defined on the previously discussed 
 representative classes of one-dimensional $d=1$ tilings, using $K$-theory 
and the partial Chern character.

The band structure for non-interacting excitations (e.g.\ electronic, electromagnetic, acoustic or mechanical waves) propagating in a tiling is modeled either by a ``tight binding'' model, where the tiles  $\{a,b \}$ represent atomic locations with particles hopping from tile to tile, or by a continuous wave equation.  Periodic
tilings model traditional crystalline structures. The quantum/wave  mechanical model of this motion is a certain self adjoint operator on the space of square-summable functions in
the set of tiles.  We are interested in the
spectrum of this operator (spectral data). 

The continuous versions of the Schr\"odinger and Helmholtz equations,
\begin{equation}
\frac{1}{2} \frac{d^2 \psi}{dx^2} - v(x) \, \psi = - k^2 \psi
\label{tb1}
\end{equation}
where $v(x)$ accounts for the tilings, have their advantages. The numerically more tractable (discrete) tight-binding version
\begin{equation}
\phi_{n+1} (e) +  \phi_{n-1} (e) + v_n \phi_n = 2 e \,\phi_n
\label{tb5}
\end{equation}
is extremely well documented in the condensed-matter physics literature \citep{Akkermans_Review_2014}. It is obtained from \eqref{tb1} by defining the dimensionless quantities $e = 1 - k^2 \ep^2$, $\phi_n (e) = e^{\ep^2 v_n / 2} \psi_n$ and $t_{n,n+1} = \exp \big( - \frac{\ep^2}{2} \, (v_n+ v_{n+1}) \big) $.
For a tiling of length (number of tiles) $N$, \eqref{tb5} can be rewritten in a matrix form $H_N \Phi = e \, \Phi$. The energy spectrum thus comprises $N$ eigenenergies denoted by $e_i$, $1 \leq i \leq N$. The counting function $\Nc(e)$ or integrated density of states is defined as the fraction of eigenenergies which are smaller than a given energy $e$, namely,
\begin{equation}
\Nc(e) = \frac{1}{N} \sum_{i=1}^N \theta \, (e - e_i ) 
\label{counting}
\end{equation}
where $\theta (x)$ is the Heaviside function. For large enough $N$, the counting function is independent of the choice of boundary conditions and it is usually a well defined and continuous function of energy. The counting function $\Nc(e)$ is represented in Fig.~\ref{fig:count} for different types of tilings.  


\begin{figure}[tb]
\centering
\subfloat[\label{fig:periodic_count}Periodic.]
{\includegraphics[width=0.3\textwidth]{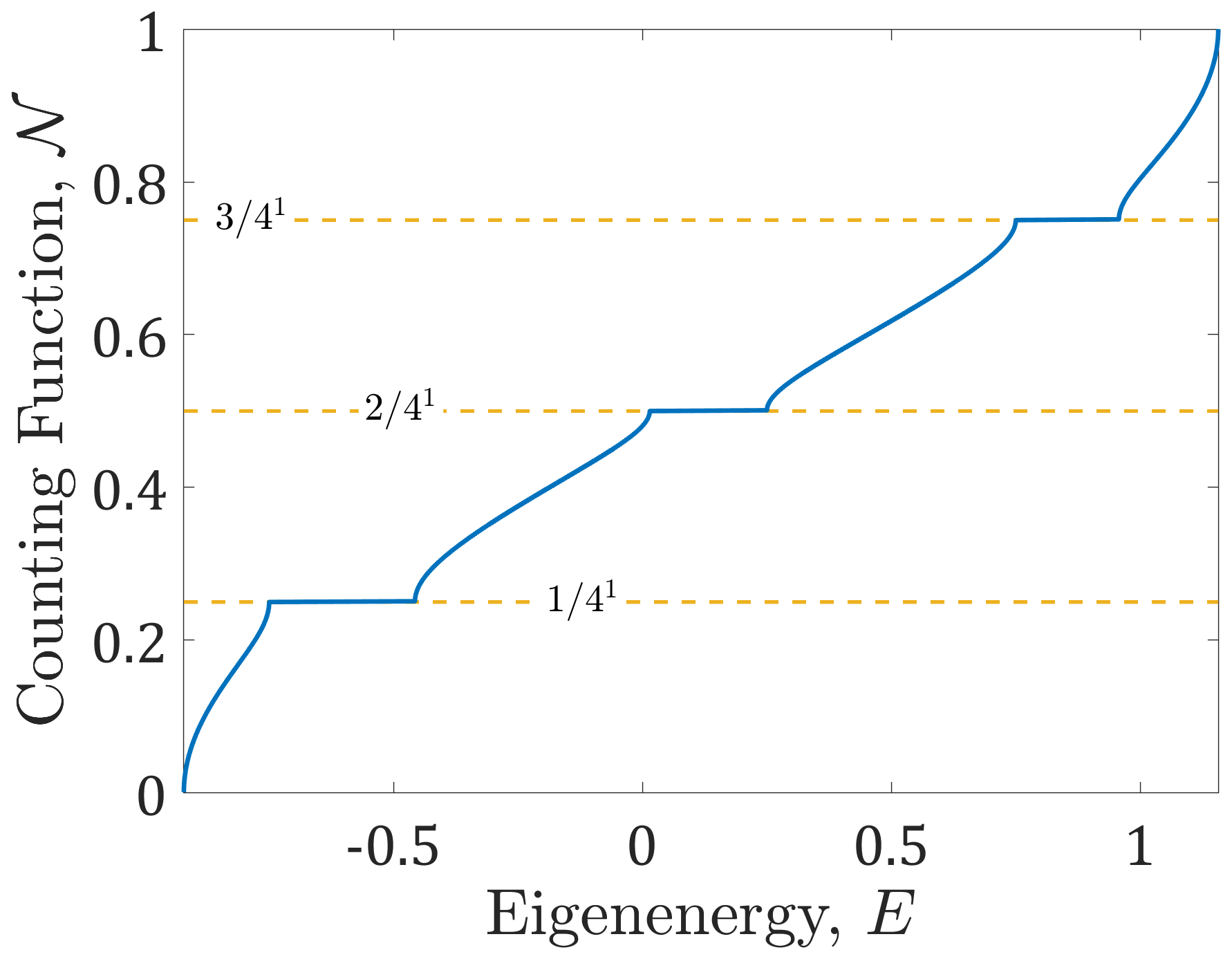} }
\hfill{}
\subfloat[\label{fig:Fibonacci_count}Fibonacci.]
{\includegraphics[width=0.3\textwidth]{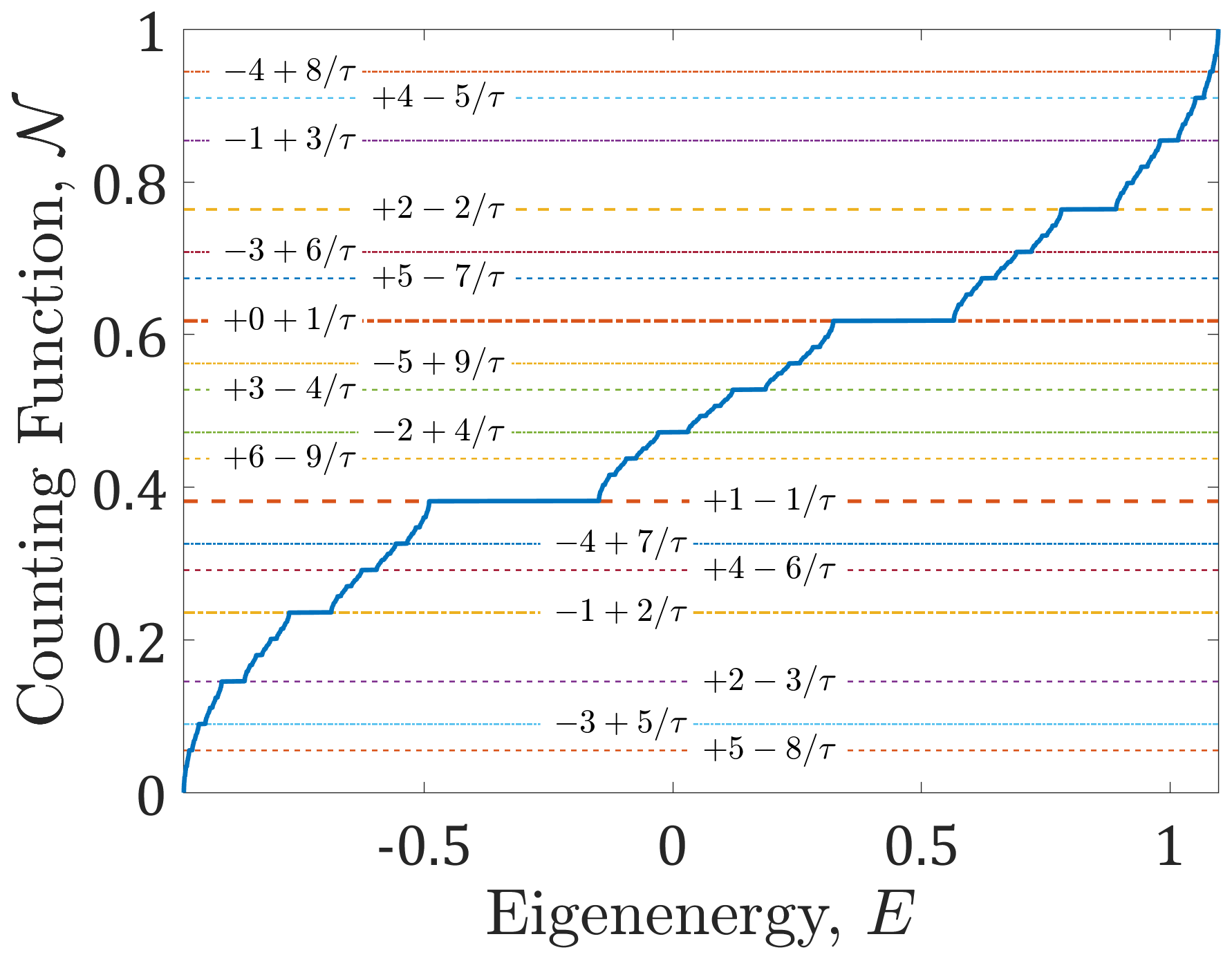} }
\hfill{}
\subfloat[\label{fig:Thue-Morse_count}Thue-Morse.]
{\includegraphics[width=0.3\textwidth]{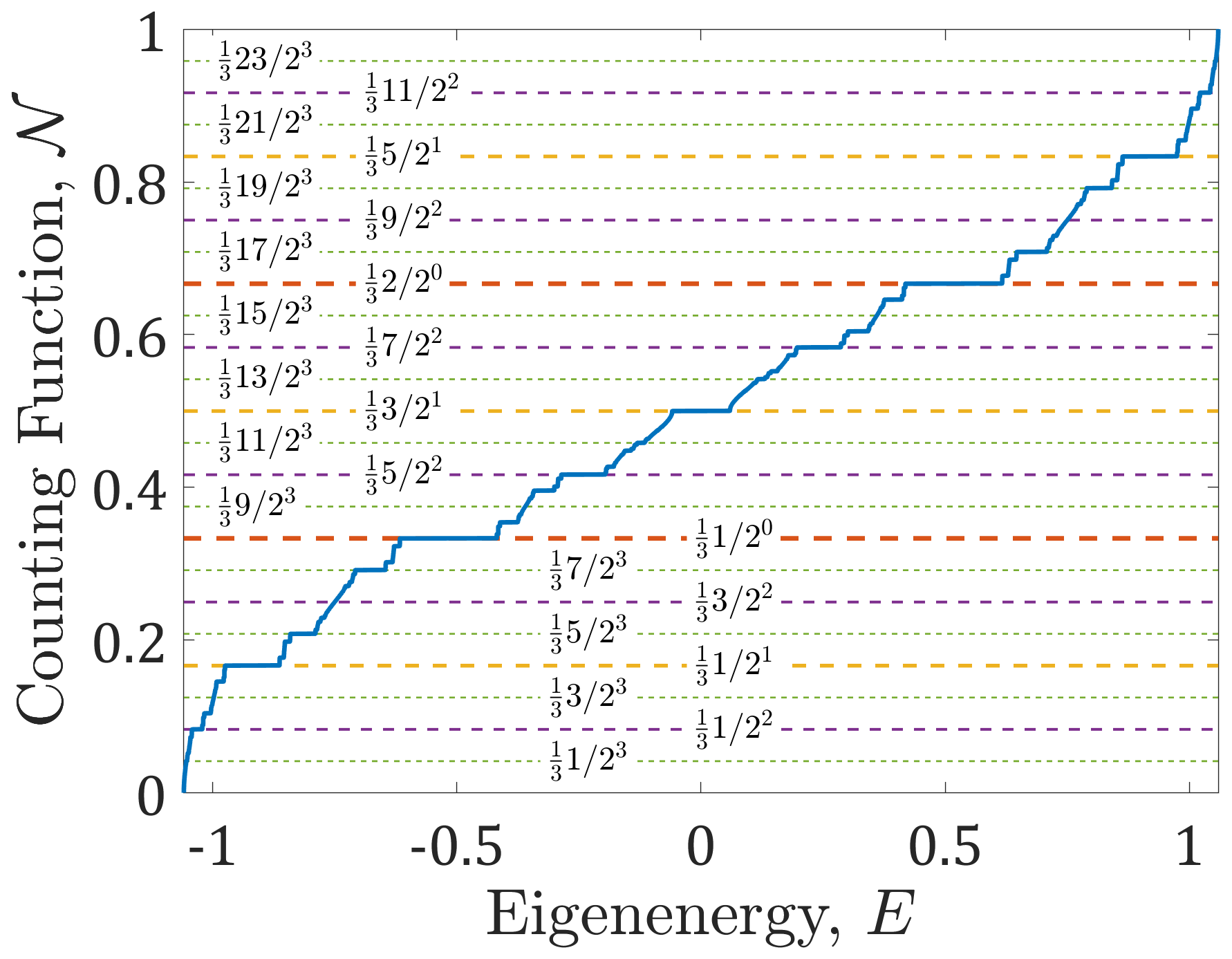} }
\\
\hfill{}
\hfill{}
\subfloat[\label{fig:Period-Doubling_count}Period Doubl\rlap{ing.}]
{\includegraphics[width=0.3\textwidth]{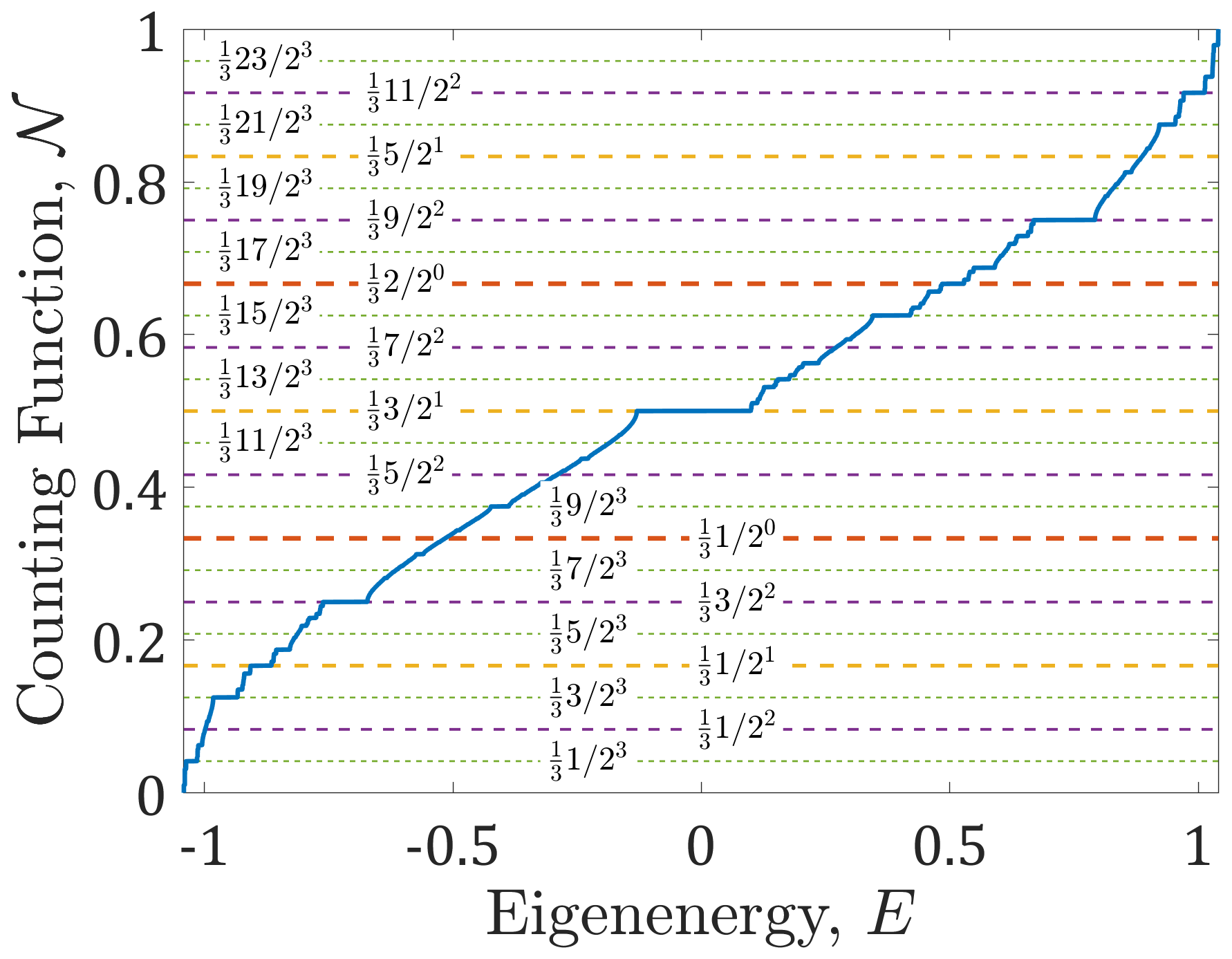} }
\hfill{}
\subfloat[\label{fig:Rudin-Shapiro_count}Rudin-Shapiro.]
{\includegraphics[width=0.3\textwidth]{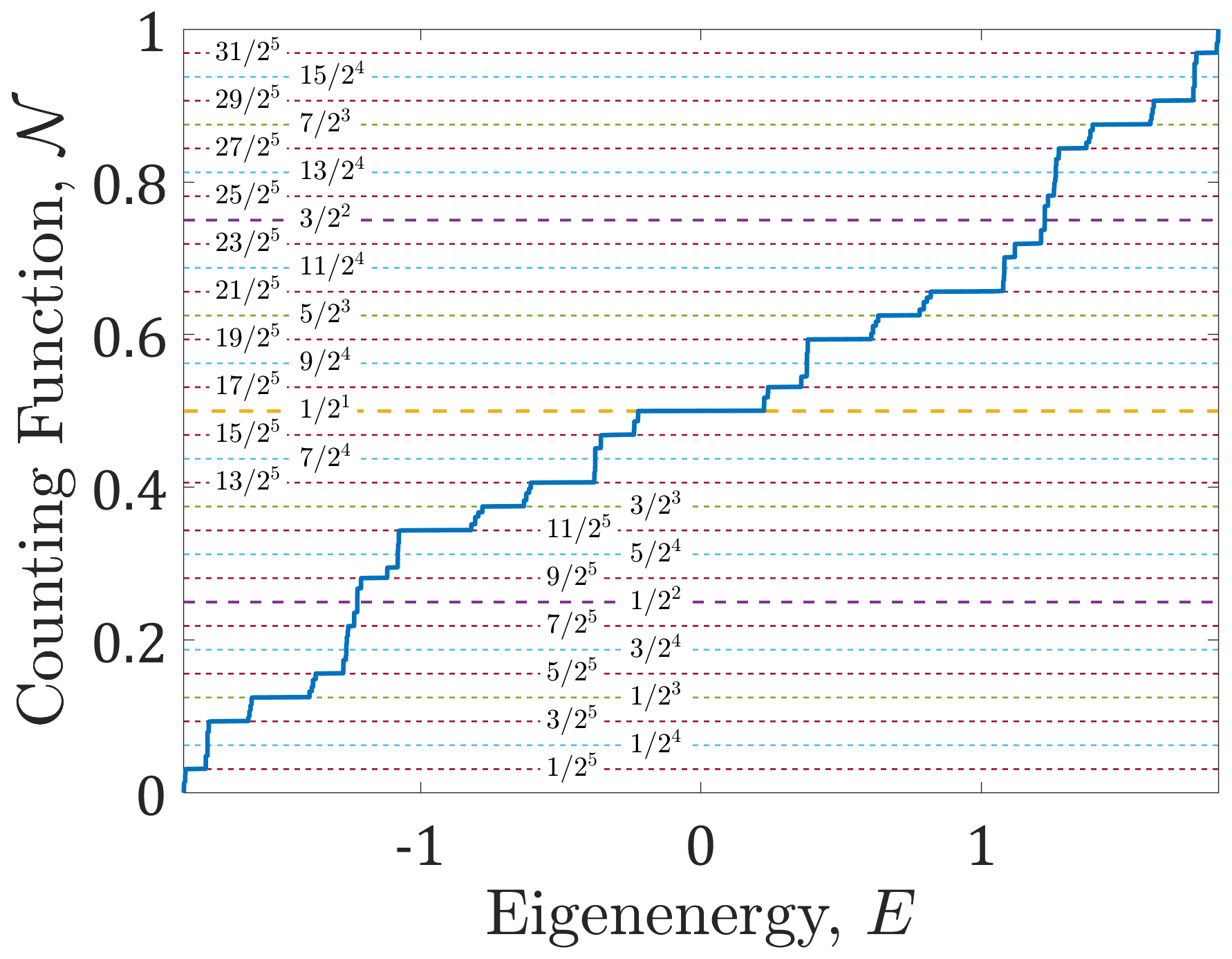} }
\hfill{}
\hfill{}
\caption{\label{fig:count}
    Counting Functions of representatives of five families of tilings whose diffraction spectra are displayed in Fig.~\ref{fig:diff}. (\textsc{a}) describes a periodic tiling with two gaps corresponding to the Bragg peaks in the Brillouin zone, (\textsc{b}) a quasiperiodic tiling (Fibonacci), (\textsc{c}) a non-quasiperiodic tiling (Thue-Morse), (\textsc{d}) a limit quasiperiodic tiling (Period Doubling) and (\textsc{e}) a non-quasiperiodic and non-Pisot tiling (Rudin-Shapiro). In contrast to the periodic case, note that for the other examples of aperiodic tilings, there is an infinite number of spectral gaps. This fractal structure (discrete scaling symmetry) is typical of aperiodic tilings.
}
\end{figure}

For periodic atomic arrangements, the Bloch theorem indicates that  the spectrum of \eqref{tb5} consists of bands and hence gaps whose locations are directly related to the (pure-point) Bragg diffraction spectrum as displayed in Fig.~\ref{fig:periodic_Bloch}. 

For aperiodic tilings, the gap labeling theorem (GLT) is an important  and elegant result valid in  space dimensions $d \leq 3$, that allows to calculate systematically the counting function at gap values  \citep{Bellissard_RMP_1991,Bellissard_Review_1992,BBG,Kellendonk_1995,Kellendonk_1997,Sadun}. The GLT states that possible values of $\Nc(e)$ in the gaps are given by all possible letter frequencies of all possible words in the infinite tiling generated by a substitution. Those frequencies can be expressed as linear combinations of the frequencies of one and two letters words only, namely using the left eigenvector $\mathbf{v}_1 = (\rho_a, \rho_b)$ previously defined in \eqref{eq:rho_b}.
The GLT makes use of $K$-theory, identifies the $K_0$ group and it allows to systematically build the gap labeling group as a trace,
$\tk\co K_0 ( \zone ) \to \RR $. 
For quasicrystals, the gap labeling group is 
\begin{equation}
\tk ( K_0 (\zone)) = ( \ZZ + \rho_b \, \ZZ ) \cap [0,1] \, ,
\label{glt}
\end{equation}
so that every gap is labelled by two integers. 

\medskip
A systematic calculation \citep{Bellissard_Review_1992} gives (see Table~\ref{tab:examples} for more examples):
\medskip
 
\begin{Pro} ~
\begin{enumerate}
\item For the quasiperiodic Fibonacci tiling,
\[
\tk (K_0(\zone)) = (\ZZ + \lambda ^{-1}\ZZ) \cap [0,1] \, ,
\]
using $\rho_b=1-\lambda^{-1}$ in \eqref{eq:rho_b}.
\medskip
\item 
For the aperiodic but non-quasiperiodic Thue-Morse tiling,  
\[
\tk (K_0(\zone)) = 
\big(\tfrac 13\,\ZZ \big[\tfrac 12\big]  \big) \cap [0,1] \, .  
\]
\end{enumerate}
\end{Pro}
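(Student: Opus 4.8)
\emph{Strategy.} The plan is to derive both equalities from Corollary~\ref{cor:maintiling}, the low-dimensional Bloch theorem, which reduces the computation of $\tk(K_0(\zone_T))$ to that of the cohomology trace on $\CH^1(\Omega_T;\ZZ)$, a group that is already understood for these two examples. First I would verify that the Fibonacci and Thue--Morse tilings satisfy the hypotheses of Theorem~\ref{thm:maintiling}: each is a primitive one-dimensional substitution tiling on a two-letter alphabet, hence has finite local complexity, has only two prototiles meeting full-face to full-face, and admits a single tile orientation, and the leaf dimension $d=1$ is $\le 3$. Primitivity also yields unique ergodicity, so there is a canonical normalized invariant transverse measure $\nu$, namely the one attached to the unique $\ZZ$-invariant probability measure on the Cantor transversal $N$; this is the $\nu$ relative to which $\tk$ and $\tc$ are formed. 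By Corollary~\ref{cor:maintiling} the map $(\ch_1)\circ(\chi^{-1})\co K_0(\zone_T)\to\CH^1(\Omega_T;\ZZ)$ is surjective and makes \eqref{eq:maintiling} commute, so $\tk(K_0(\zone_T))=\tc\bigl(\CH^1(\Omega_T;\ZZ)\bigr)$, and it remains to evaluate the right-hand side and to account for the intersection with $[0,1]$.

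\emph{Fibonacci.} By Theorem~\ref{coinvariant} (with $d=1$), $\CH^1(\Omega_T;\ZZ)\cong\CH^0(N;\ZZ)_{\ZZ}$, and by Corollary~\ref{cor:tilingtransvmeas} the cohomology trace corresponds under this isomorphism to integrating locally constant $\ZZ$-valued functions on $N$ against the invariant probability measure; since such a function is a finite $\ZZ$-combination of indicators of clopen sets, its trace is the corresponding $\ZZ$-combination of pattern frequencies. For the Fibonacci tiling $\CH^1(\Omega_T;\ZZ)\cong\ZZ^2$ (\cite{Julien_2009}, \cite{AP}) is generated by the classes of the two letters, whose frequencies are $\rho_a$ and $\rho_b$; hence, as in \eqref{cech}, $\tc\bigl(\CH^1(\Omega_T;\ZZ)\bigr)=\ZZ\rho_a+\ZZ\rho_b=\ZZ+\rho_b\ZZ$, and substituting $\rho_b=1-\lambda^{-1}$ from \eqref{eq:rho_b} gives $\ZZ+(1-\lambda^{-1})\ZZ=\ZZ+\lambda^{-1}\ZZ$. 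Finally, $\tk$ evaluated on a spectral projection of a tiling Hamiltonian returns a value of the counting function $\Nc$, which lies in $[0,1]$, so the gap labels lie in $(\ZZ+\lambda^{-1}\ZZ)\cap[0,1]$; conversely every element of $\ZZ+\lambda^{-1}\ZZ$ in $[0,1]$ is realized by an honest projection $\le 1$ because $\zone_T$ is Morita equivalent to the minimal Cantor crossed product $C(N)\rtimes\ZZ$, a simple $C^*$-algebra of real rank zero with a unique normalized trace (Putnam's structure theory for minimal Cantor $\ZZ$-systems), so the order on $K_0(\zone_T)$ is determined by that trace. This gives the first formula, in agreement with \eqref{glt}.

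\emph{Thue--Morse and the main obstacle.} Here the occurrence matrix $\begin{psmallmatrix}1&1\\1&1\end{psmallmatrix}$ is singular and $\CH^1(\Omega_T;\ZZ)\cong\ZZ\oplus\ZZ\bigl[\tfrac12\bigr]$ (\cite[p.~531]{AP}, \cite{MR2383524}), so the trace is not recovered from the letter frequencies $\rho_a=\rho_b=\tfrac12$ alone (indeed $\tfrac13\notin\tfrac12\ZZ$). The route I would take is to realize $\CH^1(\Omega_T;\ZZ)$ as the direct limit of the first cohomology of the Anderson--Putnam approximant under the substitution-induced endomorphism, observe that the Ruelle--Sullivan trace intertwines this endomorphism with multiplication by $\lambda_1^{-1}=\tfrac12$ on $\RR$, and track a generator of the $\ZZ[\tfrac12]$-summand through the limit; the coefficient $\tfrac13$ then emerges from the relevant eigenvector of the substitution action on the cohomology of the approximant, equivalently from summing the geometric series of ratio $\tfrac14$ that governs the $\nu$-measures of the nested clopen sets cut out by Thue--Morse blocks of lengths $2^n$. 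Hence $\tc\bigl(\CH^1(\Omega_T;\ZZ)\bigr)=\tfrac13\ZZ[\tfrac12]$, and intersecting with $[0,1]$ exactly as in the Fibonacci case gives $\tk(K_0(\zone_T))=\bigl(\tfrac13\ZZ[\tfrac12]\bigr)\cap[0,1]$. Correctly bookkeeping the non-free summand and pinning down the factor $\tfrac13$ is the part I expect to be most delicate; it is the $K$-theoretic shadow of Bellissard's gap-labelling computation~\cite{Bellissard_Review_1992} and of the \v Cech-cohomological computation recorded in the preceding Proposition, either of which may be cited to shorten the argument.
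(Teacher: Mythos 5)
Your proposal is correct in substance, but it runs the argument in the opposite direction from the paper. The paper's justification of this Proposition is essentially a citation: by the GLT as recalled around \eqref{glt}, the image of $\tk$ is the $\ZZ$-module generated by patch frequencies, and the values $\ZZ+\lambda^{-1}\ZZ$ (Fibonacci) and $\tfrac13\ZZ\bigl[\tfrac12\bigr]$ (Thue--Morse) are quoted from Bellissard's systematic calculation \cite{Bellissard_Review_1992}; the cohomological Proposition of the preceding section is then \emph{deduced} from this one via Corollary \ref{cor:maintiling}. You invert this: you invoke Corollary \ref{cor:maintiling} to replace $\tk(K_0(\zone_T))$ by $\tc\bigl(\CH^1(\Omega_T;\ZZ)\bigr)$ and evaluate the cohomology trace through the coinvariants description of Theorem \ref{coinvariant}. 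For Fibonacci this is clean and actually more self-contained than the paper: the coinvariants are $\ZZ^2$ on the letter classes, the trace is integration of locally constant $\ZZ$-valued functions against the unique invariant measure, and your treatment of the intersection with $[0,1]$ (every group element in $[0,1]$ is the trace of an honest projection, by real rank zero and order-determination by the unique trace on the minimal Cantor crossed product) is more careful than the paper's loose statement of \eqref{glt}.

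Two caveats on the Thue--Morse item. First, your independent derivation of the factor $\tfrac13$ (eigenvector of the substitution action on the Anderson--Putnam approximant, or a geometric series of ratio $\tfrac14$) is only a sketch: as written it does not show why the image is exactly $\tfrac13\ZZ\bigl[\tfrac12\bigr]$ rather than, say, $\ZZ\bigl[\tfrac12\bigr]$, and that is the whole content here, since the letter frequencies alone generate only $\tfrac12\ZZ$; one must bring in the frequencies $\tfrac16,\tfrac13,\dots$ of longer words (equivalently the invariant measures of the corresponding clopen sets), which generate the trace image of the coinvariants. Second, beware of circularity in your proposed shortcut: the ``\v Cech-cohomological computation recorded in the preceding Proposition'' is, in the paper's own logic, obtained \emph{from} the present Proposition via Corollary \ref{cor:maintiling}, so it cannot be cited here; citing Bellissard's frequency-module computation \cite{Bellissard_Review_1992} directly is fine, and is exactly what the paper does.
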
 
\medskip


\begin{figure}[tb]
\centering
\subfloat[\label{fig:periodic_Bloch}Periodic.]
{\includegraphics[width=0.3\textwidth]{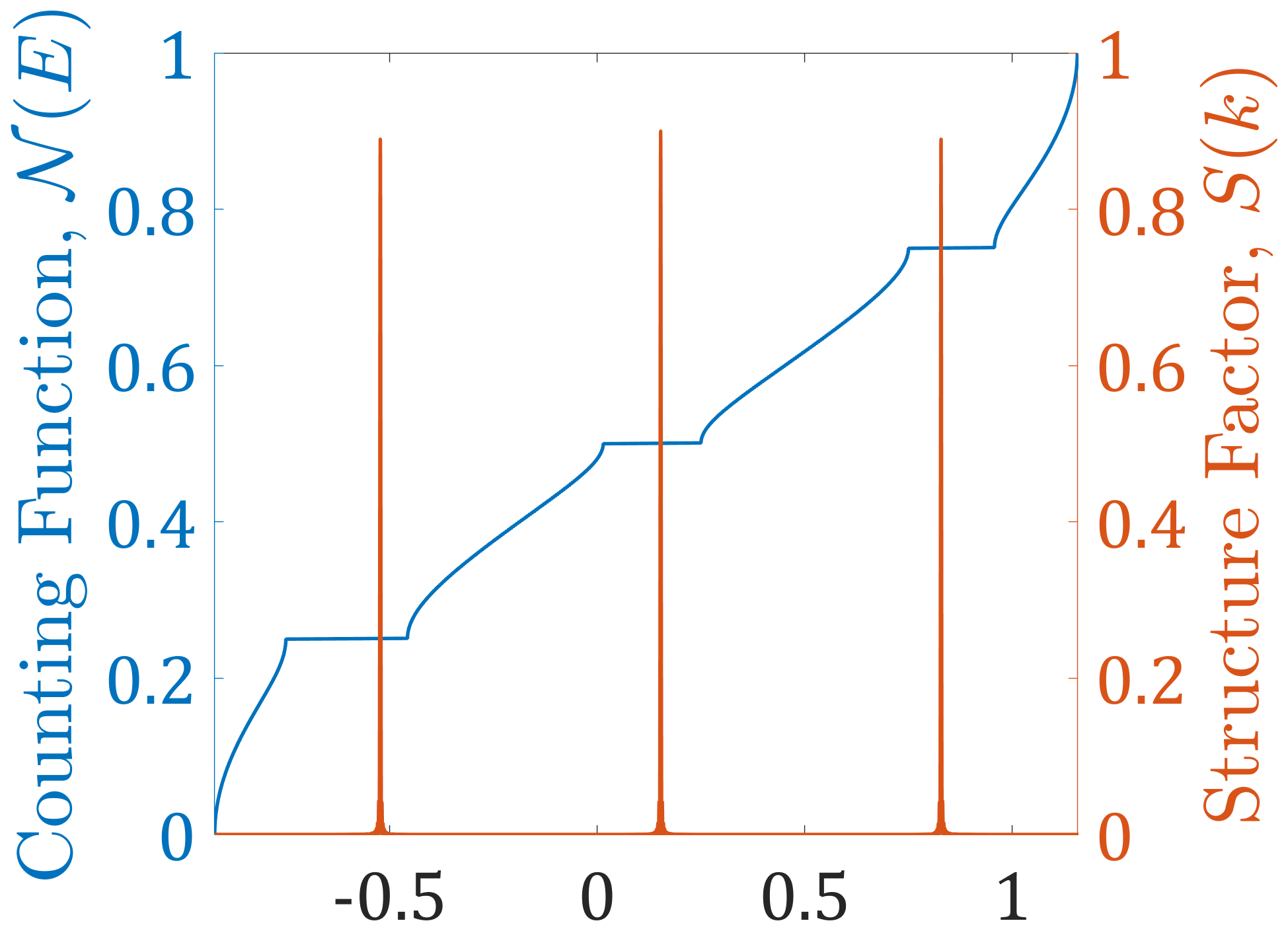} }
\hfill{}
\subfloat[\label{fig:Fibonacci_Bloch}Fibonacci.]
{\includegraphics[width=0.3\textwidth]{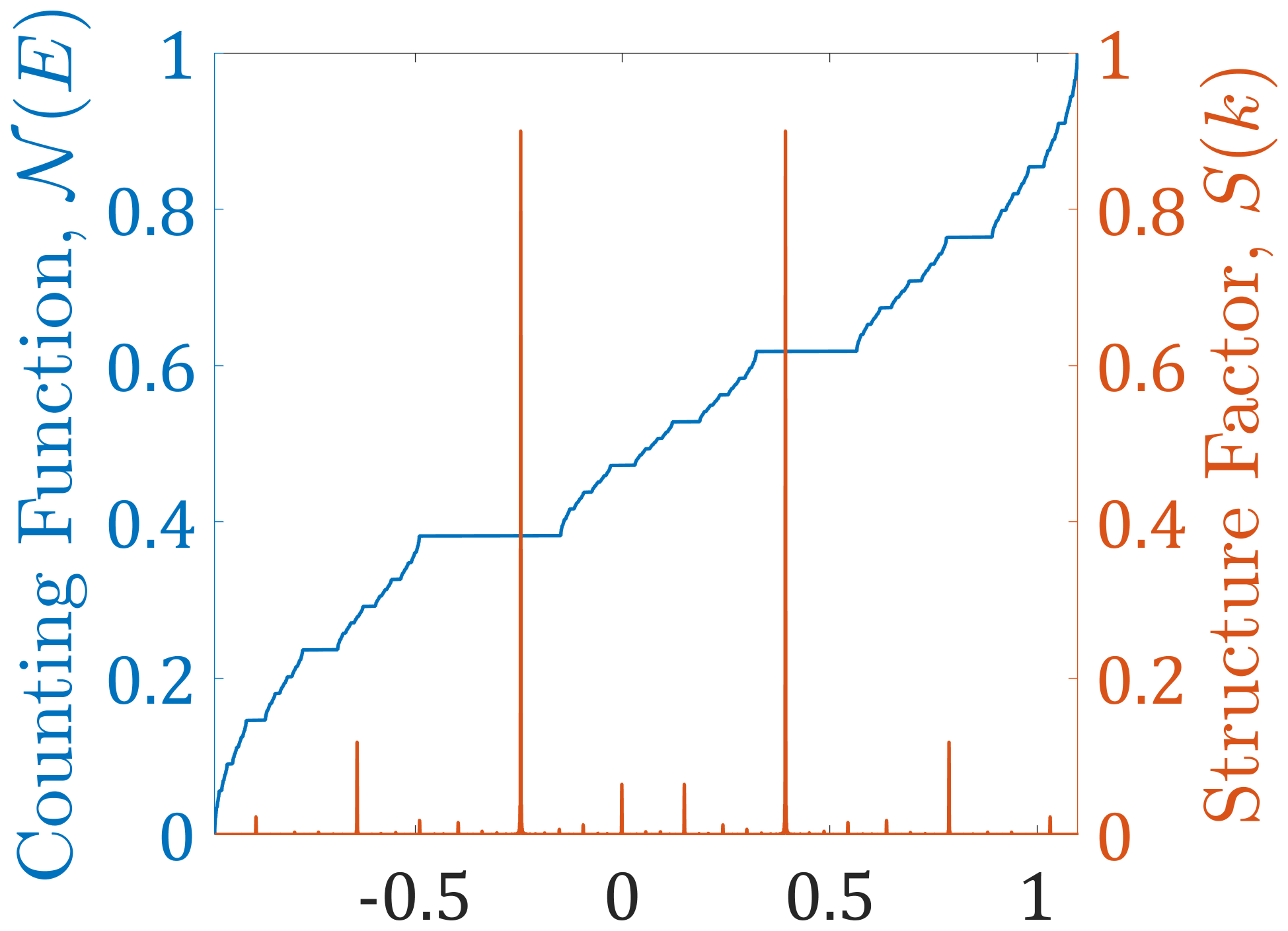} }
\hfill{}
\subfloat[\label{fig:Thue-Morse_Bloch}Thue-Morse.]
{\includegraphics[width=0.3\textwidth]{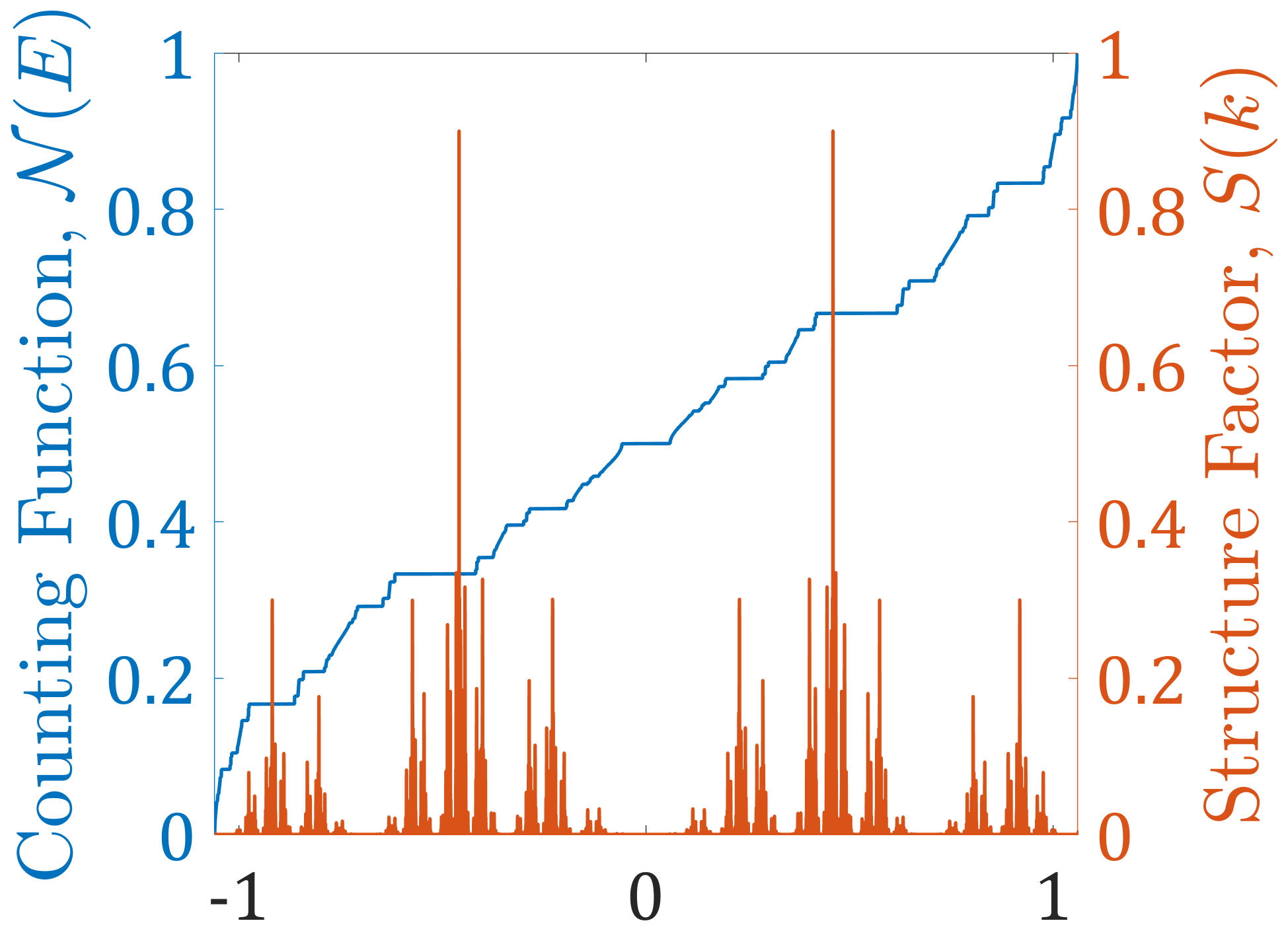} }
\\
\hfill{}
\hfill{}
\subfloat[\label{fig:Period-Doubling_Bloch}Period Doubl\rlap{ing.}]
{\includegraphics[width=0.3\textwidth]{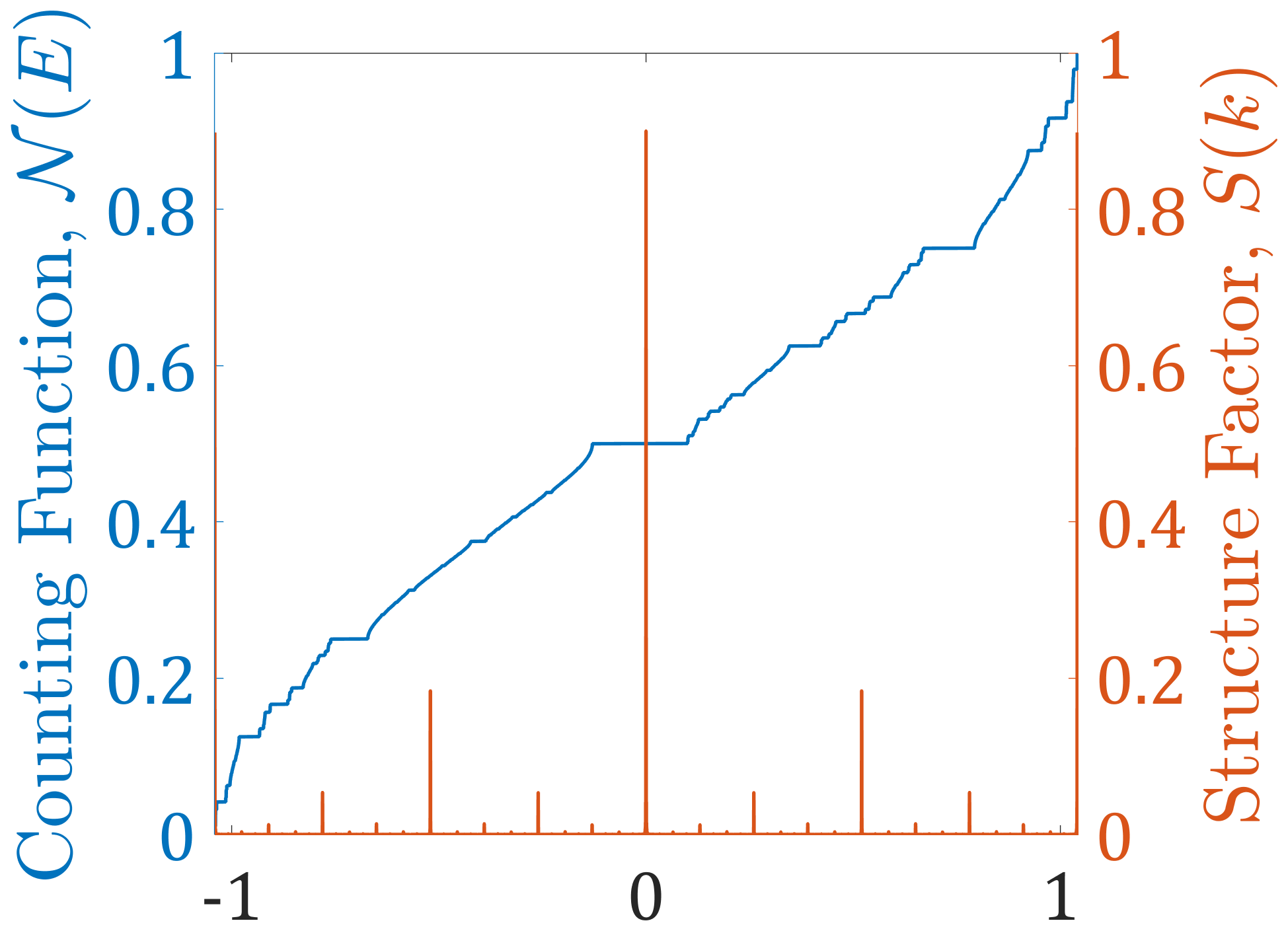} }
\hfill{}
\subfloat[\label{fig:Rudin-Shapiro_Bloch}Rudin-Shapiro.]
{\includegraphics[width=0.3\textwidth]{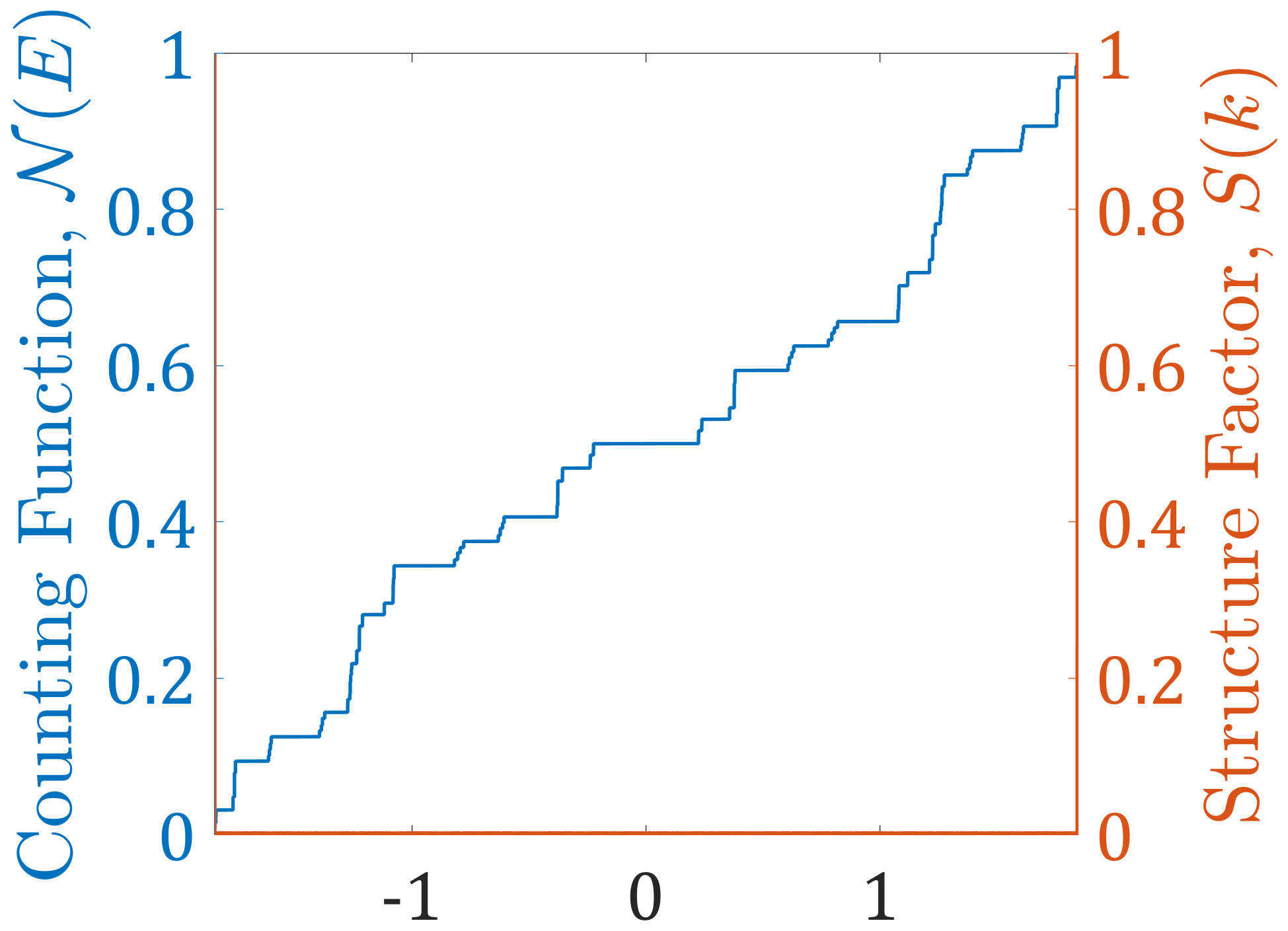} }
\hfill{}
\hfill{}
\caption{\label{fig:Bloch}
    Comparison between diffraction and spectral data for the five representative families of one-dimensional tilings considered previously and to which Theorem~\ref{thm:maintiling} applies. For the periodic (\textsc{a}), quasiperiodic (\textsc{b}) and aperiodic (limit-quasiperiodic) (\textsc{d}) tilings, there is a direct correspondence between the two sets of data. This can be viewed as an extension of the Bloch theorem. Note that for these three cases, the diffraction spectrum is PP, a result to be contrasted with the non-quasiperiodic Pisot Thue-Morse (\textsc{c}) and the aperiodic Rudin-Shapiro (\textsc{e}) tilings for which the diffraction spectrum is respectively SC and AC, while the spectral counting function accounts for infinitely countable gaps well described by the GLT. 
}
\end{figure}

Comparing \eqref{cech} and \eqref{glt}, it cannot escape our attention that for the Fibonacci tiling, Bragg peaks and spectral gaps locations are in one-to-one correspondence, a result strongly reminiscent of the Bloch theorem for periodic structures. Its extension to classes of aperiodic tilings, asserting the commutativity of the diagram
\begin{equation}
\xymatrix{
K_0(\zone _T)  \ar@{>>}[rr]^(.45){(\ch _d) \circ (\chi ^{-1})}  \ar[d]^{\tk}
&&\CH^d(\hull ; \ZZ ) \ar[d]^{\tc} \\
\RR \ar[rr]^{\id} &&\RR}
\end{equation}
with $(\ch _d) \circ (\chi ^{-1})$ onto 
provides an explanation for this empirical result.  
Applying this general result to the Fibonacci example yields 
the commuting diagram
\begin{equation}
\xymatrix{\ZZ \oplus \ZZ\ar[r]^{\id}\ar[d]^{\tk}&\ZZ \oplus \ZZ\ar[d]^{\tc} \\
\RR\ar[r]^{\id}&\RR}
\end{equation}
with
\begin{equation}
\xymatrix{(p,q)\ar@{|->}[r]^{\id}\ar@{|->}[d]^{\tk}& (p,q)\ar@{|->}[d]^{\tc} \\
p + \rho_b\,q \ar@{|->}[r]^{\id}& p + \rho_b\,q }
\end{equation}
which can indeed be viewed as a generalisation of Bloch theorem to quasicrystals. At a general level, it is not too surprising a result, since we have observed that structural and spectral data of quasiperiodic substitutions have been deduced from the appearance frequencies of the single and double letters tiles by means of the left eigenvector  $\mathbf{v}_1 = (\rho_a , \rho_b )$. For the $K_{0}(\zone)$ group, it is by construction. On the other hand, the \v Cech cohomology $\CH^{1}\left(\Omega_{T};\ZZ\right)$ contains additional information about the order of the tiles. But for quasiperiodic substitutions, the order of the tiles is irrelevant, thus leading to the equivalence between the two groups. In more complicated cases, e.g., the Thue-Morse tiling, the order of the tiles plays a role. It is easy to see this from the corresponding substitution matrix $ M= 
\begin{psmallmatrix}1 & 1\\
1 & 1
\end{psmallmatrix}$ which is identical to the periodic substitution $a\,\mapsto\,ab $ and $
b\,\mapsto\,ab$ but with very different structural (order) and spectral (two letter frequencies) properties.

In order to further clarify the content of Theorem \ref{thm:maintiling} and its conditions of applicability, we wish to first discuss features of the Thue-Morse aperiodic tiling. 
It is not a C\&P quasicrystal, yet it is a Pisot substitution. The possible values of $\Nc(e)$ in the gaps are obtained from the gap labeling group $\tk ( K_0 ( \zone ))$. This is to be compared to the diffraction spectrum composed of Bragg peaks (PP) and of a SC broad range contribution  which does not appear in the cohomology trace $\tc (\CH ^1(\Omega_T ; \ZZ )) = \tfrac 13 \, \ZZ\big[\tfrac 12\big]$. This lack of equivalence between diffraction and (Laplacian) spectral data  is not a limitation of Theorem \ref{thm:maintiling} since the Thue-Morse tiling abides its conditions of applicability. It is the expression of a discrepancy between $\tc (\CH ^1(\Omega_T ; \ZZ ))$ and the structure factor $S(k)$ which contains additional information not accessible from the cohomology description. It is interesting to note though that in a detailed experimental measurement of the Thue-Morse diffraction spectrum \cite{Axel_1991}, it was effectively challenging to observe diffraction peaks other than those predicted by the cohomology trace $\tfrac 13 \, \ZZ\big[\tfrac 12\big]$. Furthermore, the aforementioned lack of equivalence is unrelated to the lack of periodicity or quasiperiodicity (e.g.\ Period Doubling tiling) but rather a consequence of the nature of the diffraction spectrum, a quantity which, unlike spectral data, is sensitive to both local symmetries of the tiles, a condition of applicability of Theorem \ref{thm:maintiling}, and to long-range correlations driven e.g.\ by the order of the letters (immaterial for periodic or C\&P quasicrystals). For instance, the Rudin-Shapiro tiling has an absolutely continuous and structureless diffraction spectrum but a fractal spectral gap distribution well accounted by $\tc\left(\CH^1\left(\Omega _T;\mathbb{Z}\right)\right) = \ZZ \bigl[\frac12\bigr]$.  
These features are summarised in Table~\ref{tab:examples}.


\begin{sidewaystable}
\centering{}
\caption{\label{tab:examples}
   Summary of our results applied to main representatives of $1d$ tilings. For each of them, we have indicated the \v Cech cohomology $\CH^{1}\left(\Omega_{T};\ZZ\right)$, the nature of the diffraction spectrum, pure-point (PP), absolutely continuous (AC) and singular continuous (SC). Theorem \ref{thm:maintiling} applies to all cases so that the cohomology trace $\tc\left(\CH^1\left(\Omega_{T};\ZZ\right)\right)$ is calculated using the trace of the $K_{0}(\zone)$ group. Here, $\lambda =(\sqrt{5}+1)/2$ is the golden ratio with $\rho_b = 1-\lambda^{-1}$, and $n,p,q,m,N\in\ZZ$.
}
\smallskip
\begin{tabular}{M{10em}lE{3.5em}F{3.5em}lcE{2.5em}F{2.5em}}
\toprule 
Family  & $\CH^{1}\left(\Omega _T;\mathbb{Z}\right)$  & \multicolumn{3}{m{10em}}{Diffraction peaks} & $\tc\left(\CH^1\left(\Omega _T;\mathbb{Z}\right)\right)$ & \multicolumn{2}{m{6em}}{Spectral gaps (modulo 1)} \tabularnewline
\addlinespace
\midrule 
Periodic  & $\ZZ^{\hphantom{1}}$  & $k_{n}$&$n$ & PP & $\ZZ$ & $\Nc$&$\text{const}$ \tabularnewline
Fibonacci  & $\ZZ^{2}$  & $k_{p,q}$&$p+q/\lambda$ & PP & $\ZZ +\lambda^{-1}\ZZ$  & $\Nc_{q}$&$q/\lambda$ \tabularnewline
Thue-Morse  & $\ZZ\oplus\ZZ\bigl[\frac{1}{2}\bigr]$  & $k_{n,m,N}$&$\frac{1}{2n+1}\,\frac{m}{2^{N}}$  & SC+PP & $\frac{1}{3}\,\ZZ\bigl[\frac{1}{2}\bigr]$ & $\Nc_{m,N}$&$\frac{1}{3}\,\frac{m}{2^{N}}$ \tabularnewline
Period Doubling & $\ZZ\oplus\ZZ\bigl[\frac12\bigr]$ & $k_{m,N}$&$\frac{m}{2^N}$ & PP & $\frac13\, \ZZ \bigl[\frac12\bigr]$ & $\Nc_{m,N}$&$\frac13\,\frac{m}{2^N}$ \tabularnewline
Rudin-Shapiro & $\ZZ\oplus\ZZ\bigl[\frac12\bigr]\oplus\ZZ^2\bigl[\frac12\bigr]$ & \multicolumn{2}{c}{N/A} & AC & $\hphantom{\frac11\,}\ZZ \bigl[\frac12\bigr]$ & $\Nc_{m,N}$&$\frac{m}{2^N}$ \tabularnewline
\addlinespace
\bottomrule
\end{tabular}
\end{sidewaystable}


\section{Insights into prior work}

This section is designed to link up the discussion of traces in the
previous sections with the so-called Shubin trace used in the 
mathematical-physics literature.   We discuss the papers of Shubin \cite{shu},
Lenz, Peyerimhoff, and Veseli\'c \cite{LPV},  Moustafa \cite{M}, 
Kriesel \cite{Kr}, and Benameur-Mathai \cite{BM}.

\medskip

\begin{description}

\item[] \emptylabel
The equivalence found for  quasicrystals between the structural Fourier module and the counting function 
\begin{equation}
C_\nu \left( \CH^{1}\left(\Omega_{T};\ZZ\right) \right) = \tk \left( K_0 (\Omega_T ;\ZZ )\right)= \ZZ + \rho_b \, \ZZ \, ,
\label{equivalence}
\end{equation} 
respectively described by the two traces $C_\nu \left( \CH^{1}\left(\Omega_{T};\ZZ\right) \right)$ and $\tk \left( K_0 (\Omega_T ;\ZZ )\right)$ has been first noticed in \textbf{R. Johnson and J. Moser} \citep{Johnson_1982,Johnson_1986} (1982). These works  studied  the spectrum of self-adjoint linear differential operators and presented a systematic way of enumerating the open intervals of the associated resolvent operator (GLT), using the rotation number. This was an alternative treatment of gap labeling also in the spirit of the Schwartzman winding number \cite{Schwartzman} and using cohomology ideas. A discrete version is in \citep{delyon1983rotation}. Yet, this interpretation was based on the use of the rotation number, a quantity neither related to the \v Cech cohomology nor to 
$C_\nu \left( \CH^{1}\left(\Omega_{T};\ZZ\right) \right)$. 
Moreover, it is not obviously generalizable to higher dimensions. 
Note that formula \eqref{equivalence} is just a special case of our
Corollary \ref{cor:maintiling}.

\item[Shubin's] paper \cite[formula (2.3)]{shu} (1994) is all about the
irrational rotation $C^*$-algebra $A_\alpha  $. For our purposes, we
consider the Kronecker flow on the torus. Let $N$ be a circle that is
transverse to the foliation with its natural $\ZZ$-action given by the
foliation.  Its natural Lebesgue measure 
is an invariant transverse measure $\nu $  for the foliation. Then the
$C^*$-algebra $C(N) \rtimes \ZZ $ sits in the von
Neumann algebra $L^\infty (N) \rtimes \ZZ $, which is a $I\!I_1$ factor
with trace associated to $\nu$.  It is denoted $W_\alpha $
by Shubin. Its trace, given as (2.8) in Shubin's paper, is exactly the
discrete version of the canonical trace on $A_\alpha $, which sits in the 
$I\!I_\infty$  factor  $W_\alpha \otimes \mathcal{B}(\mathcal {H})$.
The situation is very much like that for tiling spaces, except
that in this example $N$ is a circle instead of being $0$-dimensional.

\item[J. Bellissard,  R. Benedetti, and J.-M.  Gambaudo] \cite{BBG} (2006)
defined the trace that they use initially by taking advantage of the
fact that the groupoid $C^*$-algebra is a crossed product. 
Let $A$ be the dense subalgebra of $C(\Omega _T) \rtimes \RR ^d $
consisting of continuous compactly supported 
functions $\Omega _T \times \RR ^d \to \CC $ and let $\mu $ be an invariant
probability measure on $\Omega _T$.  The   trace 
$\tau _\mu $  is defined by 
\[
\tau _\mu (f) = \int _{x \in \Omega _T} f(x, 0)\, d\mu (x)  .     
\] 
This is exactly as we have described the $K$-theory trace.

\item[] \emptylabel
To see the connection with the paper of \textbf{Lenz, Peyerimhoff, and Veseli\'c}
\cite{LPV} (2007) we have to do some translation.  They use the terminology 
of A.\ Connes in his original treatment of the foliation index theorem.
We have been using instead the Moore-Schochet terminology, which we 
prefer. In \cite[Section 4]{LPV} the authors define the canonical
trace on the von Neumann algebra of the foliation which we denote
$W^*(G(\Omega _T), \tilde \mu))$. Their Theorem 4.2
demonstrates the use of locally traceable operators 
 in the construction of the $K$-theory trace.

\item[Andress and Robinson] \citep{AR} (2010) 
explicitly use cohomology to study tilings.  We convert to 
our terminology to state their results.
Let $N$ be a Cantor set and $T$ a homeomorphism which is strictly
ergodic (minimal and uniquely ergodic.) Let $\mu $ denote the 
invariant measure. Then we may form the suspension
$\hull = N \times _{\ZZ} \RR $ with associated measure $\mu '$ on $\hull$.
Andress and Robinson define the first \v Cech group  $\CH^1(\hull; \ZZ ) $, the
coinvariant group $\CH^0(N ; \ZZ )_{\ZZ}  $ (which they call the
\emph {dynamic cohomology} group), and prove that 
\[
\CH^0(N ; \ZZ )_{\ZZ}  \cong \CH^1(\hull; \ZZ ),
\]
anticipating  Theorem \ref{coinvariant}.  

Next, they recall the Schwartzman winding number \citep{Schwartzman}, which   is a 
real-valued functional on  
\[
\CH^1(\hull; \ZZ ) \cong [\hull, S^1] 
\]
defined on continuous functions which are continuously 
differentiable along the leaves (which are dense)
$f: \hull \to S^1$ by
\[
W(f) = \frac{1}{2\pi i} \int_{\hull} \frac{f'(y)}{f(y)} \, d\mu'(y) .
\]
(One could just as well work with continuous functions smooth along the
leaves.)
Let $\WW (\hull)$ denote the set of values of the Schwartzman function. This
is a countable subgroup of $\RR $.  Regarding elements of
$\CH^1(\hull; \ZZ ) $ as eigenfunctions, 
then $\WW (\hull)$ is the set of all eigenvalues associated with the
tiling.  Now it is possible that $\ker(W) \neq 0$;  such classes are
called in \cite{AR} \emph{cohomologically invariant}. They 
are represented by continuous functions $f\co\hull \to S^1 $ with
$\int d\log f d\mu' = 0$.  Coboundaries have this property but possibly
other functions do as well. The usual situation is for $\ker(W) = 0$
on $\CH^1(\hull; \ZZ)$, in which case the tiling is said
to be \emph{saturated}.  Of course this is equivalent to the statement that
the winding map  $W\co \CH^1(\hull; \ZZ) \longrightarrow \WW(\hull)$
is an isomorphism. When this fails there are ``invisible''
{\Cech} cohomology classes not detected by the trace.

To further fit this work into our framework, we note that the unique
ergodicity assumption forces 
\[
\bar H_\tau^1(\hull) \cong \RR
\] 
(via the unique trace) and the canonical map
$s\co\CH^1(\hull; \ZZ) \longrightarrow  \bar H_\tau^1(\hull) \cong \RR$
coincides with the Schwartzman winding number, essentially by uniqueness 
of the ergodicity.

\item[\textmd{The connection with the paper of} Moustafa] 
\cite{M} (2010) is the
easiest to make, since he explicitly  uses the partial 
Chern character (which he calls the
\emph{longitudinal Chern character}) and tangential cohomology
(\emph{longitudinal  cohomology}).

\item[Kriesel]  \cite{Kr} (2016) makes use of Gabor frames to generalize
the gap labeling theorem to the situation of non-trivial magnetic
fields. His treatment is entirely $K$-theoretic, and includes
precise results on gap labelling in dimension $2$.

\item[Benameur and Mathai] \cite{BM} (2020) further pursue the topic of
$K$-theoretic traces in the presence of a magnetic field.
They give \cite[Appendix B]{BM} a history of gap-labelling theorems
and present two gap-labelling conjectures, which to our knowledge
are still open in general, though they prove a number of special cases.

\end{description}


\providecommand{\bysame}{\leavevmode\hbox to3em{\hrulefill}\thinspace}
\providecommand{\MR}{\relax\ifhmode\unskip\space\fi MR }
\providecommand{\MRhref}[2]{%
  \href{http://www.ams.org/mathscinet-getitem?mr=#1}{#2}
}
\providecommand{\href}[2]{#2}

\end{document}